\documentclass[10pt, twocolumn, twoside, english]{IEEEtran}
%\usepackage{set space}
%\doublespacing

% *** MISC UTILITY PACKAGES ***
%\usepackage{ifpdf}

% *** CITATION PACKAGES ***
\usepackage{cite}
%\usepackage[font=sc]{caption}
%\usepackage{biblatex}
%\usepackage{natbib}

% *** GRAPHICS RELATED PACKAGES ***
\ifCLASSINFOpdf
   \usepackage[pdftex]{graphicx}
  % declare the path(s) where your graphic files are
  % \graphicspath{{../pdf/}{../jpeg/}}
  % and their extensions so you won't have to specify these with
  % every instance of \includegraphics
  % \DeclareGraphicsExtensions{.pdf,.jpeg,.png}
\else
  \usepackage[dvips]{graphicx}
   \DeclareGraphicsExtensions{.eps}
  % or other class option (dvipsone, dvipdf, if not using dvips). graphicx
  % will default to the driver specified in the system graphics.cfg if no
  % driver is specified.
  % \usepackage[dvips]{graphicx}
  % declare the path(s) where your graphic files are
  % \graphicspath{{../eps/}}
  % and their extensions so you won't have to specify these with
  % every instance of \includegraphics
  % \DeclareGraphicsExtensions{.eps}
\fi
\usepackage{subfigure}
\usepackage{stfloats}
\usepackage{xcolor}

% *** MATH PACKAGES ***
%\usepackage[cmex10]{amsmath}
\usepackage{amsmath}
\usepackage{amsfonts,helvet}
\usepackage{amssymb}
\usepackage{amsthm}
\usepackage{stfloats}
\usepackage{optidef}

% *** SPECIALIZED LIST PACKAGES ***
\usepackage{algorithm}
\usepackage{algpseudocode}
%\usepackage{algorithmic}

% *** ALIGNMENT PACKAGES ***
\usepackage{array}
\usepackage{multirow}
\usepackage{threeparttable}

% *** PDF, URL AND HYPERLINK PACKAGES ***
%\usepackage{url}
\usepackage{hyperref}
\usepackage[nolist , nohyperlinks]{acronym}
\usepackage{siunitx}

\newtheorem{theorem}{Theorem}
\newtheorem{lemma}{Lemma}
\newtheorem{remark}{Remark}

% shorthand
\newcommand{\ie}{i.e.}
\newcommand{\eg}{e.g.}
\newcommand{\fig}{Fig.\,}

\newcommand{\tbl}{Table\,}

% MATH Command
\newcommand{\ceil}[1]{\lceil #1 \rceil}
\newcommand{\floor}[1]{\lfloor #1 \rfloor}

\hyphenation{op-tical net-works semi-conduc-tor}
\begin{document}
\title{Phase-Noise Compensation for OFDM Systems Exploiting Coherence Bandwidth:\\ Modeling, Algorithms, and Analysis}
\author{MinKeun~Chung,~\IEEEmembership{Member,~IEEE,}
        Liang~Liu,~\IEEEmembership{Member,~IEEE,}
        and~Ove~Edfors,~\IEEEmembership{Senior~Member,~IEEE}% <-this % stops a space
\thanks{The authors are with the Department of Electrical and Information Technology, Lund University, Lund, 221 00, Sweden (e-mail: \{minkeun.chung, liang.liu, ove.edfors\}@eit.lth.se).}
\thanks{A part of this paper has been presented at the 20th IEEE International Workshop on Signal Processing  Advances in Wireless Communications~(SPAWC), Cannes, France, Jul. 2019~\cite{Mchung2019_1}.}

}
% The paper headers
%\markboth{IEEE Transactions on Wireless Communications,~Vol.~, No.~, Month~Year}
%{Chung \MakeLowercase{\textit{et al.}}: Phase-Noise Compensation for OFDM Systems Exploiting Coherence Bandwidth: Modeling, Algorithms, and Analysis}

% The only time the second header will appear is for the odd numbered pages
% after the title page when using the twoside option.
% 

% If you want to put a publisher's ID mark on the page you can do it like
% this:
%\IEEEpubid{0000--0000/00\$00.00~\copyright~2019 IEEE}
% Remember, if you use this you must call \IEEEpubidadjcol in the second
% column for its text to clear the IEEEpubid mark.
% make the title area
\maketitle
\begin{abstract}
Phase-noise~(PN) estimation and compensation are crucial in millimeter-wave~(mmWave) communication systems to achieve high reliability. The PN estimation, however, suffers from high computational complexity due to its fundamental characteristics, such as spectral spreading and fast-varying fluctuations. In this paper, we propose a new framework for low-complexity PN compensation in orthogonal frequency-division multiplexing systems. The proposed framework also includes a pilot allocation strategy to minimize its overhead. The key ideas are to exploit the coherence bandwidth of mmWave systems and to approximate the actual PN spectrum with its dominant components, resulting in a non-iterative solution by using linear minimum mean squared-error estimation. The proposed method obtains a reduction of more than $2.5 \times$ in total complexity, as compared to the existing methods. Furthermore, we derive closed-form expressions for normalized mean squared-errors~(NMSEs) as a function of critical system parameters, which help in understanding the NMSE behavior in low and high signal-to-noise ratio regimes. Lastly, we study a trade-off between performance and pilot-overhead to provide insight into an appropriate approximation of the PN spectrum.
\end{abstract}
\begin{IEEEkeywords}
Coherence bandwidth, millimeter-wave (mmWave) systems, orthogonal frequency-division multiplexing (OFDM), phase noise, pilot.
\end{IEEEkeywords}

% For peer review papers, you can put extra information on the cover
% page as needed:
% \ifCLASSOPTIONpeerreview
% \begin{center} \bfseries EDICS Category: 3-BBND \end{center}
% \fi
%
% For peerreview papers, this IEEEtran command inserts a page break and
% creates the second title. It will be ignored for other modes.
\IEEEpeerreviewmaketitle
%\newpage
%%
%%
%%
\section{Introduction}
\label{sec:int}
The range of frequencies from \SI{30}{\GHz} to \SI{300}{\GHz} is usually referred to as the \ac{mmWave} band. A key feature is that there is an abundant spectrum available to support ultra-high data rate transmission. Owing to this, the \ac{mmWave} bands have attracted considerable attention~\cite{rap13, andrews14, minkeun20_11, minkeun21_09}. A critical issue, however, is that severe \ac{PN} arises from a \ac{LO} in practical \ac{mmWave} systems. The \ac{PN} increases with the carrier frequency~\cite{robins84}, resulting in a 20--40 times higher \ac{PN} than \acp{LO} for sub-\SI{6}{\GHz}~\cite{minkeun20_10}. The non-negligible amount of \ac{PN} inevitably leads to significant performance degradation in coherent systems~\cite{zaidi16}. New modulation techniques, such as orthogonal time frequency modulation (OTFS)~\cite{surab19} and frequency-domain multiplexing with a frequency-domain cyclic prefix (FDM-FDCP)~\cite{grim19}, have been recently introduced to tackle this problem in mmWave communications. In \ac{OFDM} systems, the performance drop by \ac{PN} has been demonstrated using various metrics, such as \ac{SINR}~\cite{pol95, stee98, arm01, wu04, pia02}, \ac{BER}~\cite{pol95, tom98}, and channel capacity~\cite{mathe12}. To perform coherent detection in mmWave \ac{OFDM} systems, it is imperative to estimate and compensate the combined effect of \ac{PN} and the wireless channel, which is a multiplicative process in the time domain, and a circular convolution process in the frequency domain~\cite{pia02}. Unfortunately, this is not a simple task due to the following characteristics of \ac{PN} in \ac{OFDM} systems:
\begin{itemize}
\item Spectral spreading: \ac{PN} brings about spectral spreading of the ideal Dirac-delta impulse at the \ac{LO}'s frequency. The spectral spreading of \ac{PN} has two detrimental effects on the performance of \ac{OFDM} systems. One is the common rotation on all subcarriers of an \ac{OFDM} symbol, called \ac{CPE}; the other is \ac{ICI}, which destroys orthogonality of subcarriers. As \ac{PN} increases, it results in higher \ac{ICI} from neighboring subcarriers.  
\item Fast-varying fluctuations: The \ac{PN} process is fast-varying so that there is a low correlation across consecutive \ac{OFDM} symbols, resulting in estimation and compensation for each \ac{OFDM} symbol. It requires a stringent latency requirement or high buffer cost for \ac{PN} estimation.
\end{itemize}

The problem of simultaneously dealing with both spectral spreading and fast-varying fluctuations is especially challenging in the presence of severe \ac{PN}. In the \ac{OFDM} system, the effective channel coefficient is entanglement of two unknown variables of \ac{PN} and wireless channel components. For this reason, the required estimation problem of effective channel coefficients is formulated as an underdetermined system, which generally has infinitely many solutions. Obtaining an accurate solution is, therefore, not guaranteed. One could argue that, it is possible to solve this problem by using a Bayesian approach~\cite{zhong13}. However, it might require high-computational complexity, which makes it more challenging to meet the requirement that the \ac{PN} estimate must be updated every \ac{OFDM} symbol. 

The problem of severe \ac{PN} continues to be a significant challenge in multi-antenna systems, i.e, \ac{MIMO}, based on coherent beamforming. Recently, \cite{pitaro15, krishnan16, emil15, zhang18, jacob18} have investigated the impact of such \ac{PN} at large-scale antenna systems, so-called massive \ac{MIMO}. \cite{pitaro15, krishnan16} have analytically shown the \ac{PN} impact on the performance of precoders\hspace{0.2em}/equalizers at massive \ac{MIMO} \ac{BS}. It has been generalized with hardware impairments including multiplicative phase-drifts and additive distortion noise \cite{emil15, zhang18, jacob18}.  A common observation in the previous research is that, increasing the number of antennas can be beneficial for PN mitigation since the phase-drifts average out. However, separate LOs for each antenna are required to obtain such benefits, resulting in high-cost hardware architecture.

Plenty of methods for \ac{PN} estimation and compensation have been investigated in~\cite{rob95, arm98, casas02, liu04, suy09, meh12, wang16, kre18, hua15, wang18, Qi18}. Early studies on \ac{PN} compensation have used quite strong assumptions such as small \ac{PN}~\cite{rob95, arm98} and perfect channel state information~\cite{casas02, liu04} at the receiver. In the case where both \ac{PN} and channel state information are unknown, joint channel and \ac{PN} estimation~\cite{suy09, meh12}, iterative joint \ac{PN} estimation and data detection~\cite{wang16, kre18} have been presented. However, such techniques may be too complicated to be implemented in practical wireless systems. 

Pilot-assisted transmission simplifies the challenging task of receiver design for coherent processing in general. The use of pilots may also be beneficial to achieve low-complexity estimation or to acquire the instantaneous channel coefficients. In this regard, a dedicated pilot symbol for phase tracking, called \emph{\ac{PTRS}}, has been introduced in the \ac{3GPP} \ac{NR}~\cite{38_211}. Motivated by this fact, \cite{hua15, wang18, Qi18} have designed the dedicated pilot pattern for \ac{PN} tracking so that it has a high density in the time domain to tackle the low correlation of \ac{PN} across \ac{OFDM} symbols. These solutions, however, have been focused on tracking only \ac{CPE} while there is no consideration to estimate the performance limiting \ac{ICI} components in \ac{mmWave} systems. 

\emph{Contributions:} We develop a novel framework for low-complexity \ac{PN} compensation for \ac{OFDM} systems. The key ideas are to exploit the coherence bandwidth of \ac{mmWave} systems and to approximate the actual \ac{PN} spectrum with its dominant components. Our main contributions are summarized as follows:
\begin{itemize}
\item We reformulate the joint estimation problem of \ac{PN} and channel from an underdetermined system into a system with the same number of observations and unknowns, which enables low-complexity \ac{PN} estimation by using \ac{LS} and \ac{LMMSE} estimators. The proposed algorithm obtains a reduction of more than $2.5\times$ in total complexity, as compared to the existing method.
\item We design a pilot pattern that has a carefully selected set of symbols to estimate the combined effect of dominant \ac{PN} components and channel frequency response. Furthermore, the minimum pilot-overhead ratio for our proposed method is quantified with a set of system parameters related to the channel coherence structure. 
\item We derive closed-form expressions for \acp{NMSE} of each estimator for joint \ac{PN} and channel estimation. These expressions are represented as a function of \ac{OFDM} parameters, \ac{LO} quality, \ac{SNR}, and approximation order of the \ac{PN} spectrum. Further, this helps in understanding the \ac{NMSE} behavior in low and high \ac{SNR} regimes, providing an informative guideline for pilot allocation in \ac{mmWave} \ac{OFDM} systems.  
\item We present a trade-off between performance and pilot-overhead. For the analysis, the \ac{BER} and throughput performance of the proposed method are evaluated. This trade-off provides insight into an appropriate approximation of the \ac{PN} spectrum, according to \ac{SNR} and \ac{PN} environments. 
\end{itemize}

\textit{Notation:} The set of complex numbers is denoted by $\mathbb{C}$. Lowercase boldface letters stand for column vectors and uppercase boldface letters designate matrices. For a vector or a matrix, we denote its transpose, conjugate, and conjugate transpose $\mathbf{(\cdot)}^{\rm T}$, $\mathbf{(\cdot)}^{*}$, and $\mathbf{(\cdot)}^{\rm H}$, respectively; the subscript notations $\mathbf{(\cdot)}_{\rm t}$ and $\mathbf{(\cdot)}_{\rm f}$ stand for the time- and frequency-domain representations of a vector or a matrix. The $N \times N$ identity matrix is denoted by $\mathbf{I}_{N}$, and the $N \times M$ all-zeros matrix by $\mathbf{0}_{N \times M}$. The expectation operator and Euclidean norm is denoted by ${\mathbb{E}}[\mathbf{\cdot}]$ and $\|\mathbf{\cdot} \|_{2}$, respectively. Sets are designated by upper-case calligraphic letters; the cardinality and complement of the set $\mathcal{T}$ is $|\mathcal{T}|$ and $\mathcal{T}^{c}$, respectively; the difference between two sets $\mathcal{T}$ and $\mathcal{F}$ is denoted by $\mathcal{T} \setminus \mathcal{F}$. The operators for circular convolution, deconvolution, and Hadamard product are written as $\circledast$, $\circledast^{-1}$, and $\circ$, respectively; $\floor{x}$ and $\ceil{x}$ denote the greatest/least integer less/greater than or equal to $x$.

\textit{Outline:} The remainder of this paper is organized as follows: In Section~\ref{sec:sys}, we describe the system model under consideration. Section~\ref{sec:algorithm} describes the proposed \ac{PN} and channel compensation algorithm. In Section~\ref{sec:NMSE_analysis}, we analyze the \ac{NMSE} performance of the proposed method by numerical evaluation. Section~\ref{sec:OH_Complex_Analysis} addresses the pilot-overhead and the computational complexity of our proposed algorithm. Section~\ref{sec:tradeoff} present the trade-off between performance and pilot-overhead. A summary and concluding remarks appear in Section~\ref{sec:conclusion}.

\section{System Description and Preliminaries}
\label{sec:sys}
In this section, we briefly overview our basic idea to tackle the joint estimation problem of \ac{PN} and channel frequency response and compare it with the approach of existing solutions. Before moving on to this, we first present the system and \ac{PN} models that will be used in this paper.

\subsection{System Model}
We consider an \ac{OFDM} system with $N$~subcarriers, a  sampling period~$T_{\rm s}$,  a subcarrier spacing~$\Delta f$, and a bandwidth~$B = 1/T_{\rm s} = N\Delta f$. Let $\{X_{k}\}_{k=0}^{N-1}$ be the transmitted symbol sequence across $N$ subcarriers of an \ac{OFDM} symbol, with an average per-symbol power constraint $\mathbb{E} [|X_{k}|^2]=E_{\rm s}$. An $N$-point unitary \ac{IDFT} of $\{X_{k}\}_{k=0}^{N-1}$ provides the time-domain representation of the \ac{OFDM} symbol as
\begin{align}
\label{eq:ifft}
x_{n} = {\frac{1}{\sqrt{N}}} {\sum_{k=0}^{N-1}} X_{k} e^{j 2 \pi k n/N},
\end{align}
where time index $n \in \{-N_{\rm cp}, -N_{\rm cp}+1,\cdots,0,1,\ldots,N-1\}$. Each \ac{OFDM} symbol is assumed to consist of a \ac{CP} of length-$N_{\rm cp}$ samples.

For our subsequent analysis, we adopt the coherence block model with a coherence time $T_{\rm c}$ and a coherence bandwidth $B_{\rm c}$. In this model, there are two parameters widely used in the literature~\cite{mar10, rusek13, emil16}. One is the number of \ac{OFDM} symbols within $T_{\rm c}$, and the other is the number of subcarriers within $B_{\rm c}$. These parameters are defined as
\begin{align}
\label{eq:nct}
N_{\rm ct} \triangleq  \floor{{T_{\rm c}}/{T_{\rm sym}}},
\end{align}
\begin{align}
\label{eq:ncb}
N_{\rm cb} \triangleq \floor{{B_{\rm c}}/{\Delta f}},
\end{align}
where $T_{\rm sym}$ the duration of one \ac{OFDM} symbol. We assume that the coherence block spans $N_{\rm ct}$ and $N_{\rm cb}$ successive \ac{OFDM} symbols and subcarriers, over which the channel impulse and frequency response, respectively, is constant.

\subsection{Phase Noise Model}
\label{subsec:pnm}
We consider the model introduced in~\cite{demir00} to illustrate the \ac{PN} of a free-running oscillator. The \ac{PN} is defined as
\begin{align}
\label{eq:pn_def}
\phi(t) = 2 \pi f_{\rm o} \eta(t)	
\end{align}
where $f_{\rm o}$ denotes an oscillator frequency. A random time shift $\eta(t)$ becomes, asymptotically with time, a Wiener process as
\begin{align}
\label{eq:wm}
\eta(t) = \sqrt{c} W(t),	
\end{align}
where $c$ denotes the parameter indicating an oscillator quality; $W(t)$ represents a Wiener process having an accumulated Gaussian random variable with i.i.d. $\mathcal{N}(0,1)$, \ie, $W(t_{2})-W(t_{1}) \propto \hspace{0.125em} \mathcal{N} (0,\sqrt{\Delta t})$ where $\Delta t = \mid t_{2} - t_{1} \mid$. The variance of the Wiener process~$\eta(t)$ increases linearly with the time difference $\Delta t$, \ie, $\sigma_{\eta}^{2}= c \Delta t$. According to (\ref{eq:pn_def}), $\phi(t)$ is also a Wiener process with zero mean and variance $2 \pi \beta \Delta t $, where $\beta$ denotes the two-sided 3-\si{\decibel} linewidth of the Lorentzian power spectral density\footnote{In this \ac{PN} model, the connection between $\beta$ in the frequency domain and $c$ in the time domain is described as $\beta = 2 \pi f_{o}^{2} c$.}~\cite{pol95}. 

\subsection{OFDM Signal Model with Phase Noise}
\label{subsec:osm}
The \ac{PN} at the receiver influences the channel output as an angular multiplicative distortion in the time domain. Then, the received signal in the time domain ${\mathbf{y}_{\rm {t}}} \in \mathbb{C}^{N \times 1}$ is
\begin{align}
\label{eq:ytime}
\begin{split}
{\mathbf{y}_{\rm {t}}} 
&= 
\mathbf{p}_{\rm t} \circ ({\mathbf{x}_{\rm {t}}} \circledast {\mathbf{h}_{\rm {t}}}) + {\mathbf{z}_{\rm {t}}}\\
&=
{\mathbf{\Phi}}_{\rm t}
(
{\mathbf{x}}_{\rm t} 
\circledast
{\mathbf{h}}_{\rm t}
)
+
{\mathbf{z}}_{\rm t},
\end{split}
\end{align}
where $\mathbf{p}_{\rm t} = \lbrack e^{j\phi_{0}}, e^{j\phi_{1}}, \cdots, e^{j\phi_{N-1}} \rbrack^{\rm T} \in \mathbb{C}^{N \times 1}$ is the \ac{PN} realization during one OFDM symbol, $\mathbf{x}_{\rm {t}} \in \mathbb{C}^{N \times 1}$ the transmitted signal, $\mathbf{h}_{\rm {t}} \in \mathbb{C}^{N \times 1}$ the channel impulse response, ${\mathbf{z}_{\rm {t}}} \in \mathbb{C}^{N \times 1}$ the additive white Gaussian noise~(AWGN) with i.i.d. ${\mathcal{CN}}\left(0, \sigma_{z}^2 \right)$ entries, and $\mathbf\Phi_{\rm {t}} = {\rm{diag}} \{e^{j\phi_{n}}\}_{n=0}^{N-1} \in \mathbb{C}^{N \times N}$ the diagonal matrix with the entries of $\mathbf{p}_{\rm t}$ on its main diagonal. In view of the duality, the discrete Fourier transform~(DFT) of a product of two finite-length sequences is the circular convolution of their respective DFTs~\cite{opp89}. Thus, the received signal in the frequency domain ${\mathbf{y}_{\rm {f}}} \in \mathbb{C}^{N \times 1}$ is
\begin{align}
\label{eq:yfreq1}
\begin{split}
{\mathbf{y}_{\rm {f}}} 
&= 
{\mathbf{p}_{\rm {f}}} \circledast ({\mathbf{x}_{\rm {f}}}\circ {\mathbf{h}_{\rm {f}}}) + {\mathbf{z}_{\rm {f}}}\\
&=
\mathbf\Phi_{\rm {f}}{\mathbf{H}_{\rm f}} {\mathbf{x}_{\rm {f}}} + {\mathbf{z}_{\rm {f}}},
\end{split}
\end{align}
where $\mathbf{p}_{\rm {f}} = [P_{0}, P_{1}, \cdots, P_{N-1}]^{\rm T} \in \mathbb{C}^{N \times 1}$ is the DFT coefficient vector of the time-domain \ac{PN} sequence $\{e^{j\phi_{n}}\}_{n=0}^{N-1}$, \ie, $P_{i} = \frac{1}{N}	\sum_{n=0}^{N-1} e^{j \phi_{n}}e^{-j2\pi n i / N}$; $\mathbf{x}_{\rm {f}} = [X_{0}, X_{1}, \cdots, X_{N-1}]^{\rm T}$, $\mathbf{h}_{\rm {f}} = [H_{0}, H_{1}, \cdots, H_{N-1}]^{\rm T}$, $\mathbf{z}_{\rm {f}} = [Z_{0}, Z_{1}, \cdots, Z_{N-1}]^{\rm T}\in \mathbb{C}^{N \times 1}$ the transmit symbol, channel frequency fresponse, and noise, respectively, in the frequency domain; $\mathbf\Phi_{\rm {f}}={{\rm circ} (\mathbf{p}_{\rm f})}$ is a circulant matrix formed by the spectral \ac{PN} components, 
\begin{align}
\label{eq:Phi_f}
\mathbf\Phi_{\rm {f}}
=
\begin{bmatrix}
P_{0} & P_{N-1} & P_{N-2} & \cdots & \cdots & P_{1}\\
P_{1} & P_{0} & P_{N-1} & \cdots & \cdots & P_{2}\\
P_{2} & P_{1} & P_{0} & P_{N-1} & \ddots & P_{3}\\
\vdots & \ddots & \ddots & \ddots & \ddots & \vdots\\
P_{N-2} & P_{N-3} & \cdots & P_{1} & P_{0} & P_{N-1}\\
P_{N-1} & P_{N-2} & \cdots &  P_{2} & P_{1} & P_{0}
\end{bmatrix},
\end{align}
${\mathbf{H}_{\rm f}}={\rm diag}\{H_{k}\}_{k=0}^{N-1}$ is the diagonal matrix with the entries of $\mathbf{h}_{\rm f}$ on its main diagonal. Given the coherence block model, we denote the number of coherence blocks $N_{\rm c}$. Thus, the channel frequency response consists of $N_{\rm c}$ different channel coefficients\footnote{The parameter $N_{\rm c} \triangleq N/N_{\rm cb}$, where we assume that $N$ is divisible by $N_{\rm cb}$, is also called \emph{the number of resource blocks} in 3GPP.}, with i.i.d. $\mathcal{CN}(0,1)$ entries, and its index set is denoted $\mathcal{C}$, \ie, $\lbrace H_{k}\rbrace_{k = 0}^{N_{\rm c}-1}, k \in \mathcal{C}$. To look into the CPE and the ICI effect on the received signal for each subcarrier $k \in \lbrace 0,1, \cdots, N-1 \rbrace$, let us rewrite (\ref{eq:yfreq1}) in the sample-wise form
\begin{align}
\label{eq:yfreq2}
Y_{k} = \underbrace{P_{0}}_{\rm CPE}H_{k}X_{k} 
+ \underbrace{{\sum_{\ell=0, \ell \neq k}^{N-1}} P_{(k-\ell)_{N}}H_{\ell}X_{\ell}}_{\rm ICI}
+ Z_{k},
\end{align}
where $(\cdot)_{N}$ denotes the modulo-$N$ operation. In the absence of \ac{PN}, by the fact that $P_{i}$ is a Kronecker delta function $\delta \lbrack i \rbrack$, the received signal~(\ref{eq:yfreq2}) becomes
\begin{align}
\label{eq:ideal_y}
Y_{k} = X_{k}H_{k}+Z_{k}.
\end{align} 
\begin{figure}[t!]
\centering
\includegraphics[width = 3.46in]{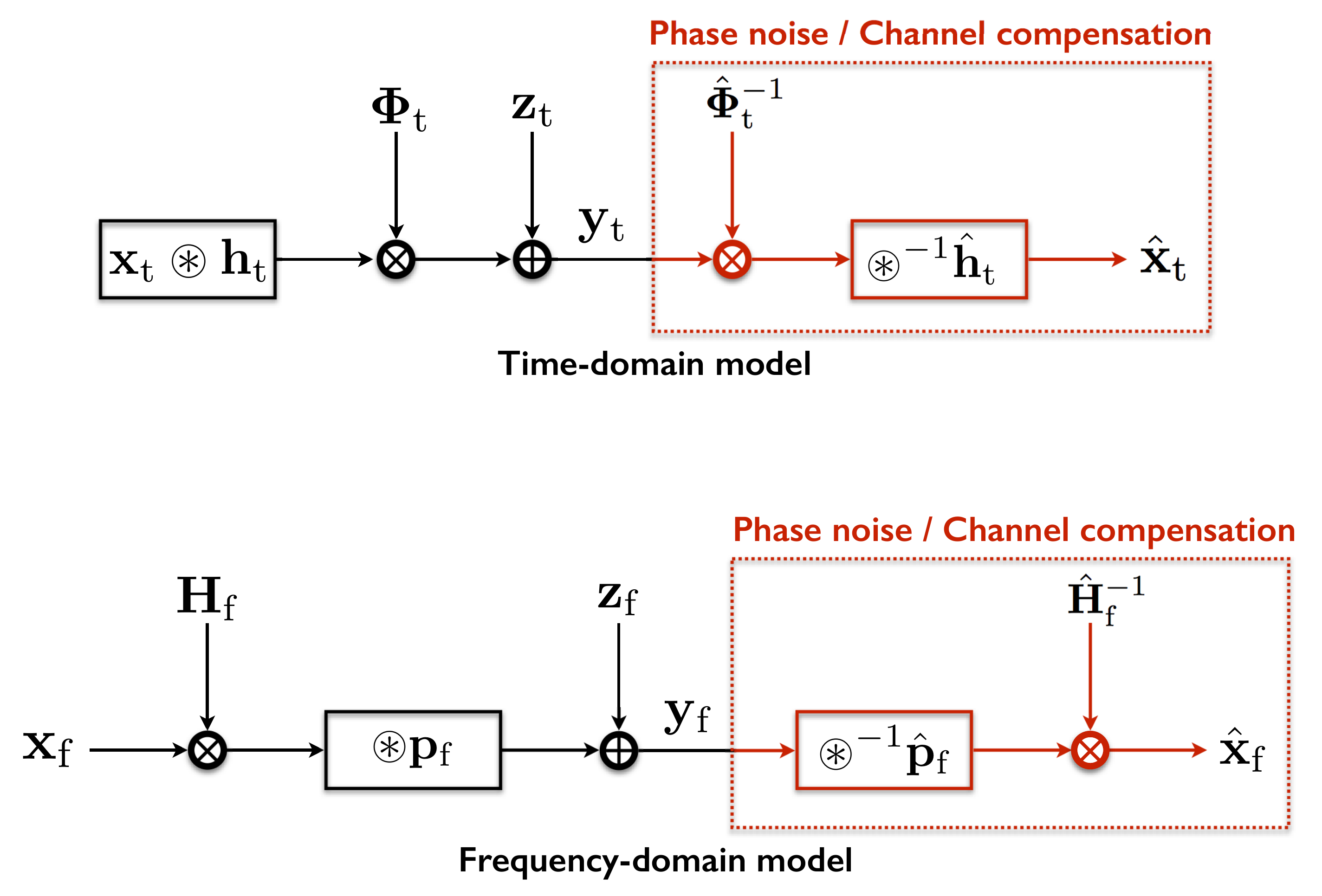}
\caption{Basic model of \ac{PN}/channel compensation and detection in time and frequency domains.}
\label{fig:basic_model_comp}
%\vspace*{-0.25 cm}
\end{figure}

\subsection{Phase Noise and Channel Compensation Model}
\label{subsec:pnccm}
In this subsection, we provide a brief comparison of the conventional and proposed approaches for \ac{PN} and channel compensation. Fig.~\ref{fig:basic_model_comp} displays the basic model of \ac{PN}/channel compensation and detection, where $\hat{(\cdot)}$ designates the corresponding estimated or decoded vector/matrix. Researchers have investigated how to efficiently reduce the unknowns to handle the underdetermined problem of joint \ac{PN} and channel estimation, which led to low-complexity estimation methods. A popular approach is to utilize the fact that the channel impulse response $\mathbf{h}_{\rm t}$ in the time domain has fewer parameters than in the frequency response, resulting in time-domain channel estimation with a smaller number of unknowns. Based on this fact, the joint estimation algorithms for frequency-domain \ac{PN}~\cite{rab10, mat16}, and time-domain \ac{PN}~\cite{zou07, wang17}, respectively, have been presented. The basic technique used in~\cite{rab10, mat16, zou07, wang17} is a joint least-squares estimation. Especially,~\cite{mat16} introduced a new constraint by the geometrical property of spectral \ac{PN} components to complement the weakness of the relaxed constraint used in~\cite{rab10}. Further,~\cite{wang17} showed to be able to reduce the computational complexity of least-squares estimation significantly by using the majorization-minimization technique. However, the above least-squares estimation methods require a full-pilot \ac{OFDM} symbol to perform joint \ac{PN} and channel estimation, translating into significant pilot overhead.
\begin{figure}[t!]
\centering
\includegraphics[width = 3.46in]{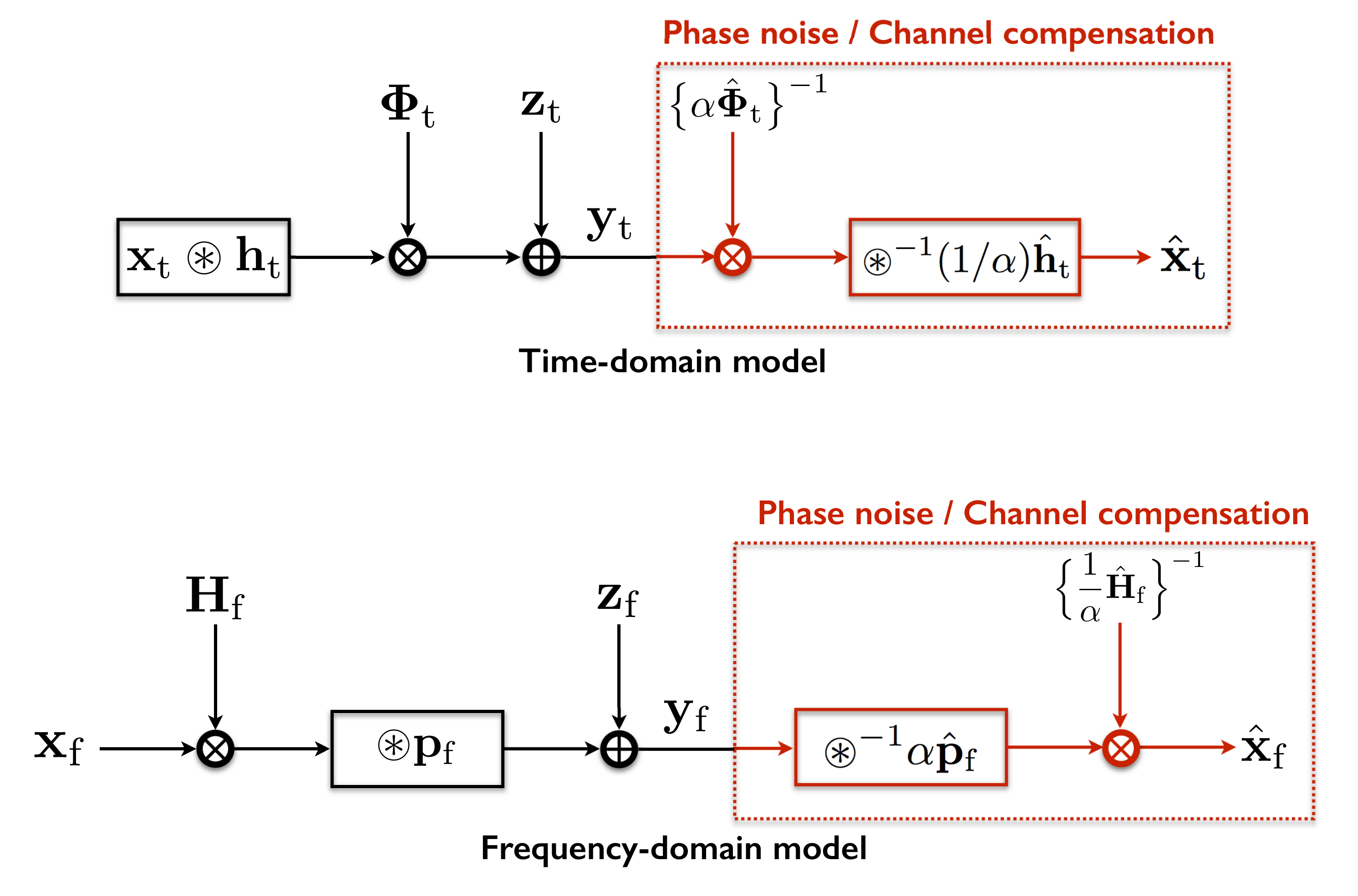}
\caption{Proposed model of \ac{PN}/channel compensation and detection in time and frequency domains.}
\label{fig:prop_model_comp}
%\vspace*{-0.5 cm}
\end{figure}

In contrast to the existing approach, we consider channel coherence in the frequency domain to manage the underdetermined problem. The coherence bandwidth of a \ac{mmWave} system is inherently much larger than those of conventional systems~\cite{hur13}. It is a promising basis for more suitable \ac{PN} compensation in \ac{mmWave} systems. Larger coherence bandwidth can facilitate the estimation of scaled \ac{PN} components in the frequency domain, \ie, $\alpha\hat{\mathbf{p}}_{\rm f}$, $\alpha \in \mathbb{C}$, as illustrated in Fig.~\ref{fig:prop_model_comp}. The deconvolution by the scaled \ac{PN} estimates suppresses the effect of \ac{ICI} by \ac{PN}, translating it into a simple estimation problem for $(1/\alpha)\mathbf{h}_{\rm f}$, which can be estimated by using as many pilots as there are channel coefficients in $\mathbf{h}_{\rm f}$.

\subsection{Effective Channel with Large Coherence Bandwidth}
\label{subsec:eclcb}
The effective channel coefficient can be recovered, provided that there are as many observations as unknowns. To see how coherence bandwidth could be utilized to meet this condition, let us go through two examples. Let $N_{\rm p}$ denote the number of dominant \ac{PN} components in the frequency domain\footnote{Since the output spectrum of \ac{PN} has a low-pass characteristic, a few numbers of significant PN components in the frequency domain provide a quite good approximation of the \ac{PN} realization. Essentially, severe spectral spreading increases $N_{\rm p}$ to be considered. In this paper, therefore, we will deal with the generalized $N_{\rm p}$ for \ac{PN} compensation in mmWave systems.}.

{\it{Example 1:}} Consider four received samples as shown in~(\ref{eq:yfreq2}) when $N_{\rm cb}=1$ and $N_{\rm p}=3$. 
\begin{align}
\label{eq:Np_3_N_cb_1}
\begin{split}
Y_{0} &= \underbrace{P_{0}}_{\rm CPE}H_{0}X_{0} 
+\underbrace{P_{1}H_{N-1}X_{N-1} + P_{N-1}H_{1}X_{1}}_{\rm dominant \ ICI}\\ 
&\;\;\;
+{\sum_{{\ell} \in \mathcal{L} \setminus \{0,1,N-1\}}} P_{(0-\ell)_{N}}H_{\ell}X_{\ell}
+ Z_{0},\\
Y_{1} &= \underbrace{P_{0}}_{\rm CPE}H_{1}X_{1} 
+\underbrace{P_{1}H_{0}X_{0} + P_{N-1}H_{2}X_{2}}_{\rm dominant \ ICI}\\
&\;\;\;  
+ {\sum_{{\ell} \in \mathcal{L} \setminus \{0,1,2\}}} P_{(1-\ell)_{N}}H_{\ell}X_{\ell}
+ Z_{1},\\
Y_{2} &= \underbrace{P_{0}}_{\rm CPE}H_{2}X_{2} 
+\underbrace{P_{1}H_{1}X_{1} + P_{N-1}H_{3}X_{3}}_{\rm dominant \ ICI}\\
&\;\;\;  
+ {\sum_{{\ell} \in \mathcal{L} \setminus \{1,2,3\}}} P_{(2-\ell)_{N}}H_{\ell}X_{\ell}
+ Z_{2},\\
Y_{3} &= \underbrace{P_{0}}_{\rm CPE}H_{3}X_{3} 
+\underbrace{P_{1}H_{2}X_{2} + P_{N-1}H_{4}X_{4}}_{\rm dominant \ ICI}\\
&\;\;\;  
+ {\sum_{{\ell} \in \mathcal{L} \setminus \{2,3,4\}}} P_{(3-\ell)_{N}}H_{\ell}X_{\ell}
+ Z_{3},\\
\end{split}
\end{align}
where subcarrier index $\ell \in \mathcal{L} \triangleq \{0,1,...,N-1\}$. Assume that ICI terms represented by the summation operator and noise components are negligible, and all transmitted symbols are used as pilots. In the four observations, there are twelve different unknowns, \ie, $\{P_{0}H_{k}\}_{k = 0}^{3}$, $\{P_{1}H_{k}\}_{k = N-1, 0}^{2}$, and $\{P_{N-1}H_{k}\}_{k = 1}^{4}$, being underdetermined. 

{\it{Example 2:}}
Consider the same number of received samples when $N_{\rm cb} = 6$ and $N_{\rm p} = 3$ as follows.
\begin{figure}[t!]
\centering
\includegraphics[width = 3.4in]{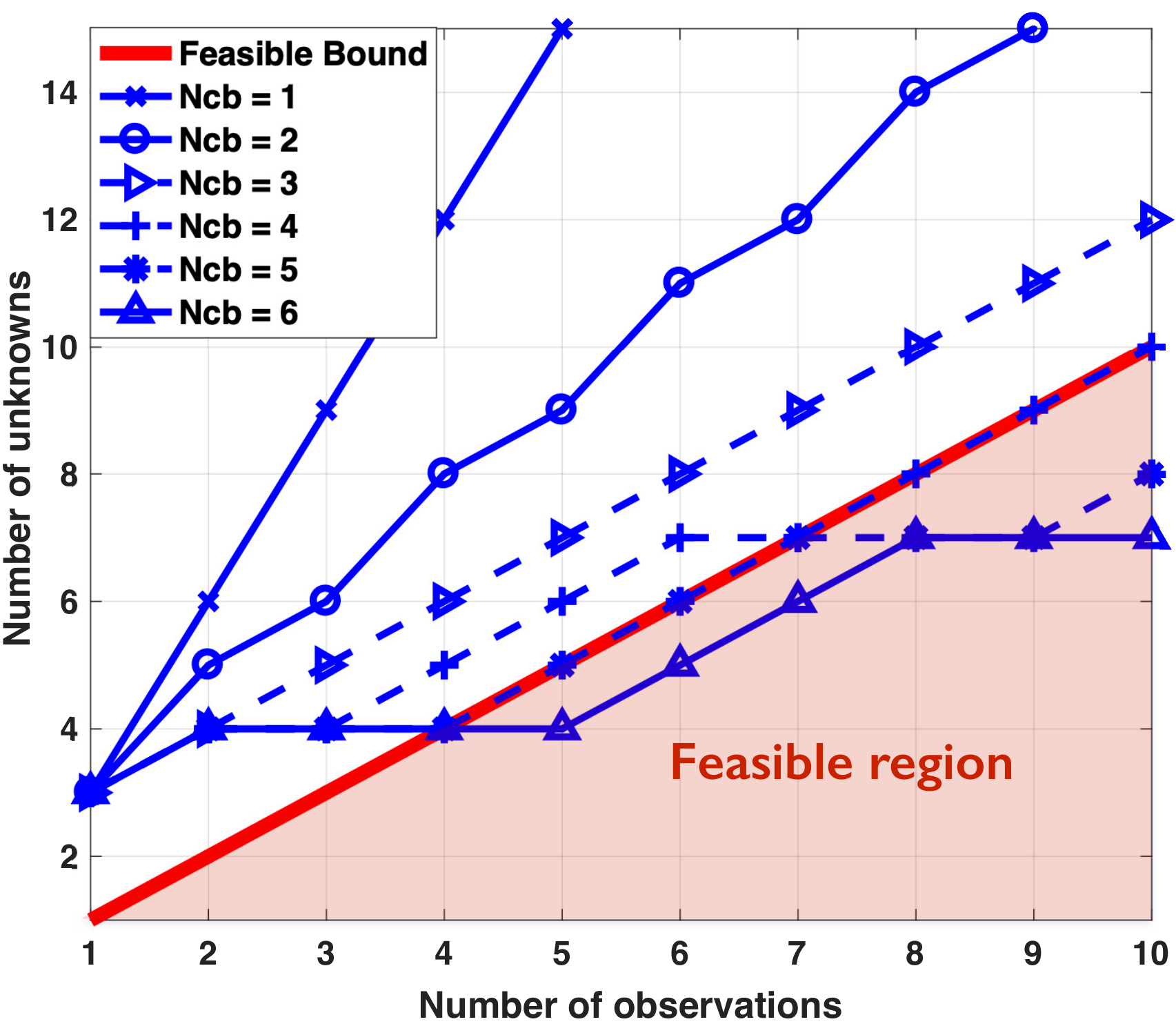}
\caption{The total number of effective channel unknowns involved in the corresponding numbers of observations according to $N_{\rm cb}$~($N_{\rm p}=3$).}
\label{fig:feasi_N_cb}
\vspace*{-0.25 cm}
\end{figure}
\begin{align}
\label{eq:Np_3_N_cb_6}
\begin{split}
Y_{0} &= \underbrace{P_{0}}_{\rm CPE}H_{0}X_{0} 
+\underbrace{P_{1}H_{\lceil{(N-1)/6}\rceil}X_{N-1} + P_{N-1}H_{0}X_{1}}_{\rm dominant \ ICI}\\
&\;\;\; 
+ {\sum_{{\ell} \in \mathcal{L} \setminus \{0,1,N-1\}}} P_{(0-\ell)_{N}}H_{\lfloor{\ell/6}\rfloor}X_{\ell}
+ Z_{0},\\
Y_{1} &= \underbrace{P_{0}}_{\rm CPE}H_{0}X_{1} 
+\underbrace{P_{1}H_{0}X_{0} + P_{N-1}H_{0}X_{2}}_{\rm dominant \ ICI}\\
&\;\;\;  
+ {\sum_{{\ell} \in \mathcal{L} \setminus \{0,1,2\}}} P_{(1-\ell)_{N}}H_{\lfloor{\ell/6}\rfloor}X_{\ell}
+ Z_{1},\\
Y_{2} &= \underbrace{P_{0}}_{\rm CPE}H_{0}X_{2} 
+\underbrace{P_{1}H_{0}X_{1} + P_{N-1}H_{0}X_{3}}_{\rm dominant \ ICI}\\
&\;\;\;  
+ {\sum_{{\ell} \in \mathcal{L} \setminus \{1,2,3\}}} P_{(2-\ell)_{N}}H_{\lfloor{\ell/6}\rfloor}X_{\ell}
+ Z_{2},\\
Y_{3} &= \underbrace{P_{0}}_{\rm CPE}H_{0}X_{3} 
+\underbrace{P_{1}H_{0}X_{2} + P_{N-1}H_{0}X_{4}}_{\rm dominant \ ICI}\\
&\;\;\;  
+ {\sum_{{\ell} \in \mathcal{L} \setminus \{2,3,4\}}} P_{(3-\ell)_{N}}H_{\lfloor{\ell/6}\rfloor}X_{\ell}
+ Z_{3}.\\
\end{split}
\end{align}
In this case, it is possible to recover all effective channel coefficients because there are only as many unknowns as observations. Fig.~\ref{fig:feasi_N_cb} shows the total number of effective channel unknowns involved in the corresponding numbers of observations, according to $N_{\rm cb}$, when $N_{\rm p} = 3$.  For channel frequency responses with $N_{\rm cb}$ larger than three, there are fewer unknowns than observations. With this insight, in the next section, we describe a low-complexity \ac{PN}/channel estimation followed by the \ac{NMSE} analysis.

\section{Proposed Algorithm}
\label{sec:algorithm}
Exploiting the approximation of the \ac{PN} spectrum and large coherence bandwidth, the joint estimation problem of \ac{PN} and channel can be reformulated from a heavily underdetermined system into a system with the same number of equations and unknowns, referred to as a fully determined linear system. This enables low-complexity \ac{PN}/channel estimation by using the \ac{LS} and \ac{LMMSE} estimators. In the proposed algorithm, two kinds of frequency-domain estimations are required. One is for the $N_{\rm p}$ dominant \ac{PN} components scaled by $\alpha$ and the other for the $N_{\rm c}$ scaled-channel coefficients, as illustrated in \fig\ref{fig:prop_model_comp}. 

To define dominant \ac{PN} components, we adopt $\gamma$ as the approximation order of the \ac{PN} spectrum, where $N_{\rm p} = 2\gamma+1$ for $\gamma \in \{0,1,\cdots,N/2\}$. The index set of dominant \ac{PN} is defined as $\mathcal{P} \triangleq \lbrace 0, 1, \cdots, N-1 \rbrace \setminus \lbrace \gamma+1, \gamma+2, \cdots, N-(\gamma+1) \rbrace$. Let $\mathbf{p}_{\rm {f}, \gamma} \in \mathbb{C}^{N \times 1}$ be the approximated \ac{PN} vector where $P_{i} = 0$, $i \in \mathcal{P}^{\rm c}$, and $\mathbf{e}_{\rm f, app} \triangleq \mathbf{p}_{\rm f} - \mathbf{p}_{{\rm f}, \gamma} \in \mathbb{C}^{N \times 1}$ be the approximation error vector, \eg, $\mathbf{p}_{\rm {f}, 2} = [P_{0}, P_{1}, P_{2}, 0, \cdots, 0, P_{N-2}, P_{N-1}]^{\rm T}$ and $\mathbf{e}_{\rm f, app} = [\mathbf{0}_{1 \times  3}, P_{3}, P_{4}, \cdots, P_{N-(\gamma+1)}, \mathbf{0}_{1 \times 2}]^{\rm T}$ for $\gamma = 2$. The frequency-domain effective channel component in (\ref{eq:yfreq1}) is defined as $F_{i,k} \triangleq  P_{i}H_{k}$, which is the element in a set of multiplications between $P_{i}$ and $H_{k}$  for $i \in \mathcal{P}$ and $k \in \mathcal{C}$. We call this \emph{PN-affected channel}. With the $\gamma$-order approximation, the PN-affected-channel matrix ${\mathbf{F}}_{\gamma}$ and the approximation error matrix ${\mathbf{E}}_{\gamma}$ are, respectively,
\begin{align}
\label{eq:F_gamma}
{\mathbf{F}}_{\gamma} = \mathbf\Phi_{\rm {f}, \gamma}{\mathbf{H}_{\rm f}},	 \text{and}
\end{align}
\begin{align}
\label{eq:E_gamma}
{\mathbf{E}}_{\gamma} = \mathbf{F} - {\mathbf{F}}_{\gamma}
=
\tilde{\mathbf{\Phi}}_{\rm f, \gamma}
\mathbf{H}_{\rm f},	
\end{align}
where $\mathbf\Phi_{\rm {f}, \gamma}={{\rm circ} (\mathbf{p}_{\rm {f},  \gamma})}$, $\tilde{\mathbf{\Phi}}_{\rm f, \gamma}={{\rm circ} (\mathbf{e}_{\rm f, app})}$, and ${\mathbf{F}} = \mathbf\Phi_{\rm {f}}{\mathbf{H}_{\rm f}}$. One of the columns in $\mathbf{F}_{\gamma}$ is estimated  for ICI suppression, which includes $N_{\rm p}$ dominant \ac{PN} components scaled by $\alpha$. As a result of the ICI suppression, the  Toeplitz convolution matrix (\ref{eq:F_gamma}) is converted into a diagonal matrix of which diagonal elements are called the \emph{ICI-free channel} in this paper. \fig\ref{fig:system_overview} illustrates the proposed architecture with PN-affected- and ICI-free-channel estimation. Before explaining the details of proposed algorithm, we first describe the transmission structure in the following subsection.

\begin{figure}[t!]
\centering
\includegraphics[width = 3.47in]{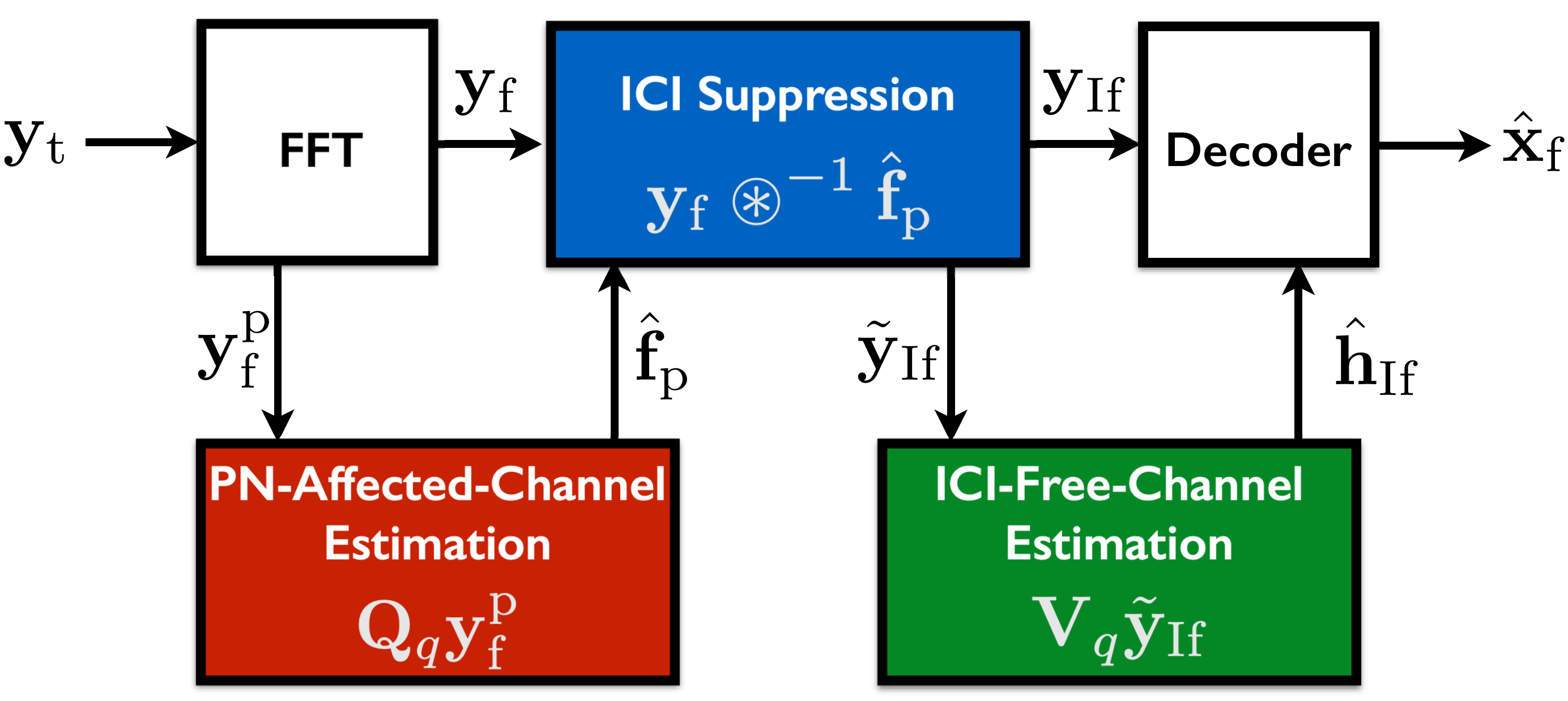}
\caption{System overview of the proposed \ac{PN}/channel compensation architecture. For the PN-affected-estimator $\mathbf{Q}_{q}$ and ICI-free-channel estimator $\mathbf{V}_{q}$, the \ac{LS} and \ac{LMMSE} estimators are applied, \ie, $q \in \lbrace \mathsf{ls}, \mathsf{lmmse} \rbrace$.}
\label{fig:system_overview}
\vspace*{-0.25 cm}
\end{figure}
\begin{figure}[t!]
\centering
\includegraphics[width = 3.47in]{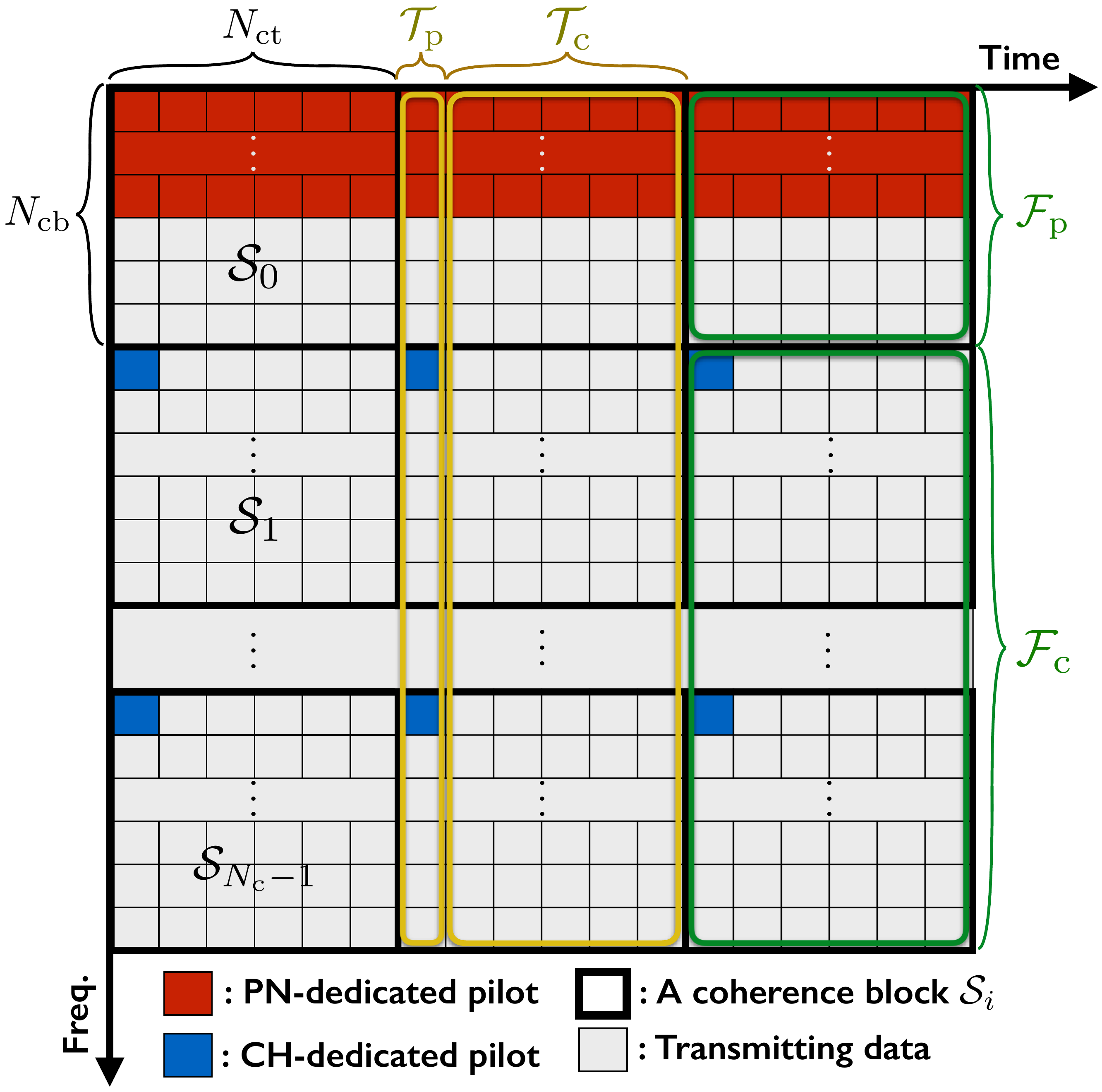}
\caption{An example of transmission structure for PN-affected- and ICI-free-channel estimation. Based on a set of coherence blocks across $N$ subcarriers, $\mathcal{S} = \mathcal{S}_{0} \cup \mathcal{S}_{1} \cup \cdots \mathcal{S}_{N_{\rm c}-1}$, the pattern for resource allocation is identical.}
\label{fig:tx_structure}
\vspace*{-0.25 cm}
\end{figure}

\subsection{Transmission Structure}
\label{subsec:ts}
Let us define a coherence block $\mathcal{S}_{k}$, $k \in \mathcal{C}$ with cardinality $|\mathcal{S}_{k}| = N_{\rm cb}N_{\rm ct}$, and $\mathcal{S} = \mathcal{S}_{0} \cup \mathcal{S}_{1} \cup \cdots \mathcal{S}_{N_{\rm c}-1}$ be a set of non-overlapping coherence blocks across $N$ subcarriers, \ie, $|\mathcal{S}| = N_{\rm c}N_{\rm cb}N_{\rm ct}$, as illustrated in \fig\ref{fig:tx_structure}. To describe resource allocation for pilots and transmitting data, we divide $\mathcal{S}$ into two subsets in the frequency and time domain, respectively; $\mathcal{F}_{\rm p}$ and $\mathcal{F}_{\rm c}$ in the frequency domain, and $\mathcal{T}_{\rm p}$ and $\mathcal{T}_{\rm c}$ in the time domain, where $\mathcal{S} = \mathcal{F}_{\rm p} \cup \mathcal{F}_{\rm c} = \mathcal{T}_{\rm p} \cup \mathcal{T}_{\rm c}$ and $\mathcal{F}_{\rm p} \cap \mathcal{F}_{\rm c} = \mathcal{T}_{\rm p} \cap \mathcal{T}_{\rm c} = \phi$. In the frequency domain, $\mathcal{F}_{\rm p}$ designate the coherence block set that includes \emph{PN-dedicated pilot} for PN-affected-channel estimation, and $\mathcal{F}_{\rm c}$ involves \emph{CH-dedicated pilot} for ICI-free-channel estimation. An example of transmission structure for PN-affected- and ICI-free-channel estimation is shown in \fig\ref{fig:tx_structure}. In the time domain, we consider the fact that the \ac{PN} process is fast-varying within channel coherence time while the wireless channel is invariant, resulting in the PN-affected-channel estimation of each OFDM symbol. Hence, only the PN-dedicated pilot is allocated in $\mathcal{T}_{\rm c}$ while both pilots in $\mathcal{T}_{\rm p}$. The remainder of the coherence block is used for transmitting data. 

\subsection{PN-Affected-Channel Estimation}
\label{subsec:pnac_est}
In this subsection, we elaborate on the PN-affected-channel estimation with the following example.

{\it{Example 3 (PN-Affected-Channel Estimation):}} Suppose $N_{\rm cb} = 6$ and $N_{\rm p} = 3$ in this example. Based on~(\ref{eq:yfreq1}), (\ref{eq:F_gamma}), and (\ref{eq:E_gamma}), the received signal is
\begin{align}
\label{eq:yfreq4}
{\mathbf{y}}_{\rm {f}} = {\mathbf{F}}_{1}{\mathbf{x}_{\rm {f}}} + \underbrace{\mathbf{E}_{1}{\mathbf{x}_{\rm {f}}} + {\mathbf{z}_{\rm {f}}}}_{{\mathbf{w}}_{1}}
= {\mathbf{F}}_{1}{\mathbf{x}_{\rm {f}}} + {\mathbf{w}}_{1},	
\end{align} 
where $\mathbf{F}_{1}$ is presented in~(\ref{eq:F_1}) at the top of next page and denotes the 1-order approximated channel matrix; $\mathbf{E}_{1}$ is its approximation error matrix; ${\mathbf{w}}_{1}$ is the ICI by the approximation error plus AWGN. 
\begin{figure*}[t]
\begin{align}
\label{eq:F_1}
{\mathbf{F}}_{1}=
\begin{bmatrix}
F_{0,0} & F_{N-1,0} & 0 & \cdots & 0 & F_{1,N_{\rm c}-1}\\
F_{1,0} & F_{0,0} & F_{N-1,0} & \ddots & \ddots & 0\\
0 & F_{1,0} & F_{0,0} & F_{N-1,0} & \ddots & \vdots\\
\vdots & \ddots & \ddots & \ddots & \ddots & 0\\
0 & \ddots & \ddots & F_{1,N_{\rm c}-1} & F_{0,N_{\rm c}-1} & F_{N-1,N_{\rm c}-1}\\
F_{N-1,0} & 0 & \cdots &  0 & F_{1,N_{\rm c}-1} & F_{0,N_{\rm c}-1}
\end{bmatrix} \in \mathbb{C}^{N \times N}.
\end{align}
\hrulefill\\
\vspace*{-.3cm}
\end{figure*}
We denote the PN-affected-channel vector by $\mathbf{f}_{\bar{\rm p}} \in \mathbb{C}^{N_{\rm p} \times 1}$, which consists of dominant \ac{PN} components scaled by a channel coefficient as 
\begin{align}
\label{eq:f_p_bar}
\begin{split}
{\mathbf{f}}_{\bar{\rm p}} 
&= [F_{N-\gamma,k}, F_{N-\gamma+1,k}, \cdots, F_{0,k}, \cdots, F_{\gamma-1, k}, F_{\gamma,k}]^{\rm T}\\ 
&= H_{k}\underbrace{\lbrack P_{N-\gamma}, P_{N-\gamma+1}, \cdots, P_{0}, \cdots, P_{\gamma-1}, P_{\gamma} \rbrack^{\rm T}}_{\bar{\mathbf{p}}_{\rm f,\gamma}},
\end{split}
\end{align} 
where $\bar{\mathbf{p}}_{\rm f, \gamma} \in \mathbb{C}^{N_{\rm p} \times 1}$ is the dominant \ac{PN} vector with the $\gamma$-order approximation. Based on~(\ref{eq:f_p_bar}), the coefficient $\alpha$ in Fig.~\ref{fig:prop_model_comp} indicates the channel coefficient $H_{k}$ in $\mathcal{S}_{k}$.{\footnote{As the element in $\mathbf{f}_{\bar{\rm p}}$ is a subset of element in $\mathbf{F_{\gamma}}$, the channel coefficient index $k$ in (\ref{eq:f_p_bar}) depends on the allocation of PN-dedicated pilot.}} In this example, $\mathbf{f}_{\bar{\rm p}} = \lbrack F_{N-1,k}, F_{0,k}, F_{1,k} \rbrack^{\rm T} \in \mathbb{C}^{3 \times 1}$. With the three unknowns in $\mathbf{f}_{\bar{\rm p}}$, a fully determined linear system can be constructed as follows:
\begin{align}
\begin{split}
\label{eq:y_pnac_np_3_1}
{\mathbf{y}_{\rm f}^{\rm p}}  &= {\mathbf{F}_{1}^{\rm p}}{\mathbf{x}_{\rm {f,1}}^{\rm p}} + {\mathbf{w}}_{1}^{\rm p}\\
&=
\hspace*{-0.05 cm}
\underbrace{
\begin{bmatrix}
F_{1,0} & F_{0,0} & F_{N-1,0} & 0 & 0\\
0 & F_{1,0} & F_{0,0} & F_{N-1,0} & 0\\
0 & 0 & F_{1,0} & F_{0,0} & F_{N-1,0}
\end{bmatrix}
}_{\mathbf{F}_{1}^{\rm p}}
\hspace*{-0.05 cm}
\underbrace{
\begin{bmatrix}
X_{0}^{\rm p} \\ X_{1}^{\rm p} \\ X_{2}^{\rm p} \\ X_{3}^{\rm p} \\ X_{4}^{\rm p} 
\end{bmatrix}
}_{\mathbf{x}_{\rm f,1}^{\rm p}}
+
{\mathbf{w}}_{1}^{\rm p},\\
\end{split}	
\end{align}
where $\mathbf{y}_{\rm f}^{\rm p} = \lbrack Y_{1}, Y_{2}, Y_{3} \rbrack^{\rm T}$ is the three observations in $\mathbf{y}_{\rm f}$, and $\mathbf{w}_{1}^{\rm p} \in \mathbb{C}^{3 \times 1}$ the corresponding vector in $\mathbf{w}_{1}$; the element in $\mathbf{x}_{\rm f, 1}^{\rm p}$ is denoted by $X_{k}^{\rm p}$ to distinguish the PN-dedicated pilot from transmitting data. Using the commutative property, (\ref{eq:y_pnac_np_3_1}) can be rewritten as 
\begin{align}
\label{eq:y_pnac_np_3_2}
{\mathbf{y}_{\rm f}^{\rm p}}
=
\underbrace{
\begin{bmatrix}
X_{2}^{\rm p} & X_{1}^{\rm p} & X_{0}^{\rm p}\\
X_{3}^{\rm p} & X_{2}^{\rm p} & X_{1}^{\rm p}\\
X_{4}^{\rm p} & X_{3}^{\rm p} & X_{2}^{\rm p}
\end{bmatrix}
}_{\mathbf{X}_{\rm {f,1}}^{\rm p}}
\underbrace{
\begin{bmatrix}
F_{N-1,0} \\ F_{0,0} \\ F_{1,0}
\end{bmatrix}
}_{\mathbf{f}_{\bar{\rm {p}}}}
+
{\mathbf{w}}_{1}^{\rm p}.
\end{align}
By the PN-dedicated pilot $\{X_{k}^{\rm p}\}_{k=0}^{4}$ such that ${\rm rank}(\mathbf{X}_{\rm f,1}^{\rm p})=3$, all the unknowns in $\mathbf{f}_{\bar{\rm p}}$ can be estimated. Based on (\ref{eq:y_pnac_np_3_2}), the optimization problem for the optimal PN-dedicated pilot matrix, with respect to the approximation order of $\gamma$, is
\begin{equation}
\label{eq:opt_PNAC}
  \begin{aligned}
  \hspace{-0.3em}
    & \underset{\mathbf{X}_{{\rm f}, \gamma}^{\rm p}}{\text{minimize}} \quad
      \|\mathbf{w}_{\gamma}^{\rm p}\|_{\rm F}^{2} \\
    & \text{subject to}
    \hspace{1.0em}
    {\rm rank}(\mathbf{X}_{{\rm f}, \gamma}^{\rm p}) = 2\gamma+1.
  \end{aligned}
\end{equation}
The following theorem provides the optimal solution of (\ref{eq:opt_PNAC}).
\begin{theorem}
\label{thm:optimal_pilot_design}
Assume that the PN-dedicated pilot is allocated in the $\mathcal{S}_{0}$. If a $\gamma$-order approximation of \ac{PN} spectrum is applied, the optimal PN-dedicated pilot matrix~$\mathbf{X}_{{\rm f},\gamma}^{\rm p} \in \mathbb{C}^{N_{\rm p} \times N_{\rm p}}$, for minimizing the ICI by the approximation error, is
\begin{align}
\label{eq:optimal_pilot_1}
\mathbf{X}_{{\rm f},\gamma}^{\rm p} = \mathbf{I}_{N_{\rm p}}	
\end{align}
where 
\begin{align}
\label{eq:optimal_pilot_2}
\mathbf{X}_{{\rm f},\gamma}^{\rm p}
=
\begin{bmatrix}
	X_{2\gamma}^{\rm p} & 	X_{2\gamma-1}^{\rm p} & \cdots  & \cdots & X_{0}^{\rm p}\\
	X_{2\gamma+1}^{\rm p} & X_{2\gamma}^{\rm p} & X_{2\gamma-1}^{\rm p} & \cdots & X_{1}^{\rm p}\\
	X_{2\gamma+2}^{\rm p} & X_{2\gamma+1}^{\rm p} & X_{2\gamma}^{\rm p} & \ddots & \vdots\\
	\vdots & \ddots & \ddots & \ddots & X_{2\gamma-1}^{\rm p}\\
	X_{2\gamma+2\gamma}^{\rm p} & \cdots & X_{2\gamma+2}^{\rm p} & X_{2\gamma+1}^{\rm p} & X_{2\gamma}^{\rm p}
\end{bmatrix} 	
\end{align}
\end{theorem}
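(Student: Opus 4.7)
The plan is to decompose the interference $\mathbf{w}_{\gamma}^{\rm p}$ into pilot-controllable and non-controllable parts, then show that $\mathbf{I}_{N_{\rm p}}$ is (up to scaling) the unique feasible choice that kills the controllable part. Starting from (\ref{eq:yfreq1}) with $\mathbf{F} = \mathbf{F}_{\gamma} + \mathbf{E}_{\gamma}$ and restricting to the $N_{\rm p}$ observation rows, I would split the full transmit spectrum into its pilot-supported and data-supported parts to obtain
\begin{equation*}
\mathbf{w}_{\gamma}^{\rm p} \;=\; \mathbf{E}_{\gamma}^{\rm p}\,\mathbf{x}_{{\rm f},\gamma}^{\rm p} \;+\; \bar{\mathbf{E}}_{\gamma}^{\rm p}\,\mathbf{x}_{\rm f}^{\rm d} \;+\; \mathbf{z}_{\rm f}^{\rm p},
\end{equation*}
where $\mathbf{E}_{\gamma}^{\rm p}$ and $\bar{\mathbf{E}}_{\gamma}^{\rm p}$ restrict the approximation-error operator~(\ref{eq:E_gamma}) to the pilot and the data regions, respectively, and $\mathbf{x}_{\rm f}^{\rm d}$ denotes the data-supported part of the spectrum. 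Only the first summand depends on the pilot design; the other two contribute a fixed, pilot-independent floor to $\|\mathbf{w}_{\gamma}^{\rm p}\|_{\rm F}^{2}$.

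The essential observation is structural: for each observation in the $N_{\rm p}$-subcarrier window, the PN shift $(k-\ell)_{N}$ induced by a pilot at position $\ell$ lands in the dominant index set $\mathcal{P}$ precisely when $\ell$ is within $\gamma$ of the window center. The unique pilot position whose shift lies in $\mathcal{P}$ for every observation is the central one, $X_{2\gamma}^{\rm p}$; every off-center pilot $X_{k}^{\rm p}$ with $k \neq 2\gamma$ reaches at least one observation through a non-dominant bin $P_{i}$ with $i \in \mathcal{P}^{c}$, and therefore contributes to $\mathbf{E}_{\gamma}^{\rm p}\mathbf{x}_{{\rm f},\gamma}^{\rm p}$ rather than to $\mathbf{F}_{\gamma}^{\rm p}\mathbf{x}_{{\rm f},\gamma}^{\rm p}$. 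Consequently, the pilot-induced ICI vanishes identically iff every off-center entry in the Toeplitz matrix~(\ref{eq:optimal_pilot_2}) is zero.

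The choice $\mathbf{X}_{{\rm f},\gamma}^{\rm p} = \mathbf{I}_{N_{\rm p}}$ sets $X_{2\gamma}^{\rm p}=1$ and $X_{k}^{\rm p}=0$ for all $k \neq 2\gamma$, which simultaneously drives $\mathbf{E}_{\gamma}^{\rm p}\mathbf{x}_{{\rm f},\gamma}^{\rm p}$ to zero and keeps $\mathbf{X}_{{\rm f},\gamma}^{\rm p}$ invertible, thus meeting the rank constraint ${\rm rank}(\mathbf{X}_{{\rm f},\gamma}^{\rm p}) = 2\gamma+1$. For any other feasible configuration at least one off-center $X_{k}^{\rm p}$ is non-zero, so $\mathbf{E}_{\gamma}^{\rm p}\mathbf{x}_{{\rm f},\gamma}^{\rm p}$ carries strictly positive expected energy under the non-zero non-dominant bins produced by the Wiener PN process of Section~\ref{subsec:pnm}; the objective is then strictly larger, giving both feasibility and optimality of~(\ref{eq:optimal_pilot_1}).

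The main obstacle I foresee is the combinatorial bookkeeping linking each off-center diagonal of~(\ref{eq:optimal_pilot_2}) to the specific non-dominant PN bins it excites at each of the $N_{\rm p}$ observations; a careful case analysis over the $4\gamma+1$ pilot positions and the $N-(2\gamma+1)$ non-dominant bins is needed to make the ``iff'' rigorous. A secondary subtlety is normalization: any scaling $\alpha\mathbf{I}_{N_{\rm p}}$ with $|\alpha|>0$ achieves the same null and full rank, so the theorem implicitly fixes the pilot magnitude in line with the per-symbol power constraint $\mathbb{E}[|X_{k}|^{2}]=E_{\rm s}$, and this convention should be made explicit within the proof.
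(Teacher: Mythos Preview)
Your proposal is correct and follows essentially the same route as the paper's proof: both identify that the diagonal pilot $X_{2\gamma}^{\rm p}$ is the unique entry that couples only to dominant PN bins across all $N_{\rm p}$ observations, so zeroing all off-center pilots kills the pilot-induced leakage into $\mathbf{w}_{\gamma}^{\rm p}$ while $X_{2\gamma}^{\rm p}\neq 0$ secures the rank. Your version is somewhat more explicit in separating the pilot-controllable term $\mathbf{E}_{\gamma}^{\rm p}\mathbf{x}_{{\rm f},\gamma}^{\rm p}$ from the data- and noise-induced floor and in flagging the scaling ambiguity, whereas the paper simply states the structural fact and concludes; the underlying argument is the same.
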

\begin{proof}
See Appendix~\ref{append_sec:optimal_PNDP}
\end{proof}

We consider the LS and LMMSE estimators for the PN-affected channel. The optimal PN-dedicated pilot matrix~(\ref{eq:optimal_pilot_1}) leads to lower computational complexity as compared to the conventional \ac{LS} and \ac{LMMSE} estimators~\cite{kay93}. The \ac{LS} and \ac{LMMSE} PN-affected-channel estimators, respectively, is (see Appendix~\ref{append_sec:LMMSE_PNAC} for $\mathbf{Q}_{\mathsf{lmmse}}$)
\begin{align}
\begin{split}
\label{eq:Q_LS}
{\mathbf{Q}}	_{\mathsf{ls}} 
&=
({\mathbf{X}_{\rm {f, \gamma}}^{\rm p}})^{-1}
 = \mathbf{I}_{N_{\rm p}},
\end{split}\\
\begin{split}
\label{eq:Q_LMMSE}
{\mathbf{Q}}_{\mathsf{lmmse}} 
&=
\mathbf{R}_{\mathbf{p}\mathbf{p}}^{\gamma}
\lbrace\mathbf{R}_{\mathbf{p}\mathbf{p}}^{\gamma} + \mathbf{R}_{\mathbf{ici}}^{\gamma} + \frac{1}{\mathsf{SNR}}{\mathbf{I}_{N_{\rm p}}}\rbrace^{-1},
\end{split}
\end{align}
where $\mathbf{R}_{\mathbf{p}\mathbf{p}}^{\gamma}=\mathbb{E}\lbrace \bar{\mathbf{p}}_{{\rm f},\gamma}(\bar{\mathbf{p}}_{{\rm f},\gamma})^{\rm H}\rbrace$ is the autocorrelation matrix of $\bar{\mathbf{p}}_{{\rm f},\gamma}$ in (\ref{eq:f_p_bar}) and $\mathbf{R}_{\mathbf{ici}}^{\gamma}$ the autocorrelation matrix of ICI vector arising from the $\gamma$-order-approximation error{\footnote{In practice, the second-order statistics of spectral \ac{PN} components generated from a fixed \ac{LO} are stationary, therefore we assume that $\mathbf{R}_{\mathbf{pp}}^{\gamma}$ and $\mathbf{R}_{\mathbf{ici}}^{\gamma}$ can be estimated by using one-shot or long-term estimation.}}; ${\mathsf{SNR}} \triangleq E_{\rm s}/\sigma_{z}^{2}$ the average \ac{SNR}. The LS/LMMSE estimate of $\mathbf{f}_{\bar{\rm p}}$ is
\begin{align}
\begin{split}
\label{eq:f_p_hat_LS}
{\hat{\mathbf{f}}}_{\bar{\rm p},q} 
&=
{{\mathbf{Q}}_{q}}{\mathbf{y}_{\rm f}^{\rm p}} 
,
\;\;
q 
\in 
\lbrace 
\mathsf{ls, lmmse} 
\rbrace.
\end{split}
\end{align}

\subsection{ICI Suppression}
\label{subsec:deconv}
In general, the ICI brought on by \ac{PN} can be suppressed by the deconvolution between received signals and \ac{PN} components in the frequency domain~\cite{den07}. In this subsection, we start with a Lemma that provides our idea behind the ICI suppression.
\begin{lemma}
\label{lemma:deconv}
Let $\mathbf{z} \in \mathbb{C}^{N \times 1}$ be the output vector of circular convolution between $\mathbf{x} \in \mathbb{C}^{N \times 1}$ and vector $\mathbf{y} \in \mathbb{C}^{N \times 1}$. Then the deconvolution of $c \mathbf{x}$ from $\mathbf{z}$, where $c \in \mathbb{C}$ is a scalar, is given by\\
\begin{align}
\label{eq:lem_deconv_1}
\mathbf{z} \circledast^{-1} c \mathbf{x}= \frac{1}{c} \mathbf{y}	
\end{align}
\end{lemma}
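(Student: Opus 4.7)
The plan is to pass to the frequency domain and exploit the standard convolution theorem, since circular convolution with an $N$-point vector corresponds to pointwise (Hadamard) multiplication of $N$-point DFTs, and by duality the inverse operation $\circledast^{-1}$ corresponds to pointwise division. Denoting the unitary $N$-point DFT by $\mathcal{F}\{\cdot\}$, and using the $\circ$ notation already defined in the paper, I would first write the hypothesis $\mathbf{z} = \mathbf{x}\circledast\mathbf{y}$ in the frequency domain as
\begin{equation*}
\mathcal{F}\{\mathbf{z}\} \;=\; \mathcal{F}\{\mathbf{x}\}\circ\mathcal{F}\{\mathbf{y}\}.
\end{equation*}

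Next I would apply the definition of circular deconvolution, namely that $\mathbf{a}\circledast^{-1}\mathbf{b}$ is the unique length-$N$ vector whose DFT equals the elementwise ratio $\mathcal{F}\{\mathbf{a}\}\oslash\mathcal{F}\{\mathbf{b}\}$. Setting $\mathbf{a}=\mathbf{z}$ and $\mathbf{b}=c\mathbf{x}$, linearity of the DFT gives $\mathcal{F}\{c\mathbf{x}\}=c\,\mathcal{F}\{\mathbf{x}\}$, and so
\begin{equation*}
\mathcal{F}\{\mathbf{z}\circledast^{-1} c\mathbf{x}\}
\;=\;
\frac{\mathcal{F}\{\mathbf{x}\}\circ\mathcal{F}\{\mathbf{y}\}}{c\,\mathcal{F}\{\mathbf{x}\}}
\;=\;
\frac{1}{c}\,\mathcal{F}\{\mathbf{y}\}.
\end{equation*}
Inverting the DFT then yields the claimed identity $\mathbf{z}\circledast^{-1} c\mathbf{x} = \tfrac{1}{c}\mathbf{y}$.

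The argument is essentially a two-line consequence of the convolution theorem plus linearity, so there is no serious technical obstacle; the main point to be careful about is the implicit assumption that $\mathcal{F}\{\mathbf{x}\}$ has no zero entries, which is needed for the pointwise division to be well defined. I would flag this assumption explicitly (in the mmWave/OFDM setting of the paper, $\mathbf{x}$ plays the role of the PN sequence $e^{j\phi_n}$, whose spectrum is essentially never identically zero on any bin, so the assumption holds in practice) and then the identity follows at once. This Lemma then serves as the justification for the ICI-suppression step: deconvolving the received frequency-domain signal by the scaled PN estimate $\alpha\hat{\mathbf{p}}_{\mathrm{f}}$ recovers the channel frequency response up to the same scalar $1/\alpha$, which can subsequently be absorbed into the ICI-free-channel estimator of Fig.~\ref{fig:prop_model_comp}.
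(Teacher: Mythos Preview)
Your proof is correct, but it takes a slightly heavier route than the paper's. The paper dispenses with the DFT entirely and argues in a single line from the homogeneity of circular convolution:
\[
\mathbf{z} \;=\; \mathbf{x}\circledast\mathbf{y}
\;=\; c\cdot\tfrac{1}{c}\,(\mathbf{x}\circledast\mathbf{y})
\;=\; (c\mathbf{x})\circledast\bigl(\tfrac{1}{c}\mathbf{y}\bigr),
\]
from which $\mathbf{z}\circledast^{-1}(c\mathbf{x}) = \tfrac{1}{c}\mathbf{y}$ is immediate by definition of deconvolution as the inverse operation. Your approach instead passes to the frequency domain via the convolution theorem and reads off the result from the pointwise quotient of DFTs. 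Both arguments are valid and equally short in spirit; the paper's version is marginally more elementary since it never invokes the DFT, while yours has the merit of making explicit the well-definedness caveat (no zero bins in $\mathcal{F}\{\mathbf{x}\}$), which the paper's one-liner leaves implicit.
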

\begin{proof}
	By the linear property of circular convolution~\cite{opp89}, 
\begin{align}
\label{eq:lem_deconv_2}
\mathbf{z} = \mathbf{x} \circledast \mathbf{y} 
= c \cdot \frac{1}{c} (\mathbf{x} \circledast \mathbf{y})
= c\mathbf{x} \circledast \frac{1}{c}\mathbf{y}.
\end{align}
\end{proof}

Let $\mathbf{f}_{\rm p} \triangleq \alpha \mathbf{p}_{\rm f, \gamma} = \lbrack F_{0,k}, F_{1,k}, \cdots, F_{N-1,k} \rbrack^{\rm T}$ be the length-$N$ PN-affected-channel vector, which has the corresponding coefficients in (\ref{eq:f_p_bar}) for $i \in \mathcal{P}$, and $F_{i,k} = 0$ for $i \in \mathcal{P}^{\rm c}$. From Lemma~\ref{lemma:deconv}, the deconvolution of $\mathbf{f}_{\rm p}$ from $\mathbf{y}_{\rm f}$  yields the effective channel $(1/\alpha)\mathbf{h}_{\rm f}$. In other words, the Toeplitz convolution matrix $\mathbf{F}_{\gamma}$ is converted into the diagonal matrix $\mathbf{H}_{\rm If} = (1/\alpha)\mathbf{H}_{\rm f}$ called the ICI-free channel, which means that the off-diagonal elements causing ICI in $\mathbf{F}_{\gamma}$ can be canceled. The ICI-free channel is represented as
\begin{align}
\label{eq:H_If}
\mathbf{H}_{\rm If} = 
\begin{bmatrix}
\mathbf{H}_{\rm If}^{0} & \mathbf{0}_{N_{\rm cb} \times N_{\rm cb}} & \cdots & \mathbf{0}_{N_{\rm cb} \times N_{\rm cb}}\\
\mathbf{0}_{N_{\rm cb} \times N_{\rm cb}} & \mathbf{H}_{\rm If}^{1} & \ddots & \vdots\\
\vdots & \ddots & \ddots & \mathbf{0}_{N_{\rm cb} \times N_{\rm cb}}\\
\mathbf{0}_{N_{\rm cb} \times N_{\rm cb}} & \cdots & \mathbf{0}_{N_{\rm cb} \times N_{\rm cb}} & \mathbf{H}_{\rm If}^{N_{\rm c}-1}
\end{bmatrix},
\end{align}
where $\mathbf{H}_{\rm If}^{m}=H_{{\rm If}, m}\mathbf{I}_{N_{\rm cb}} \in \mathbb{C}^{N_{\rm cb} \times N_{\rm cb}}$, for $m \in \mathcal{C}$, is the diagonal matrix with coefficient $H_{{\rm If}, m} \triangleq H_{m}/\alpha$. 

The complete ICI elimination shown in (\ref{eq:H_If}) can be achieved under the following assumptions: 1) \ac{PN} components beyond $\gamma$-order are negligible, and 2) perfect PN-affected-channel is estimated. From a practical perspective, we model the PN-affected channel estimate with the estimation error vector $\mathbf{e}_{\rm f, est} \triangleq (1/\alpha)\bar{\mathbf{e}}_{\rm f, est}$ as
\begin{align}
\label{eq:PN_aff_Ch_est_model}
\begin{split}	
\hat{\mathbf{f}}_{\rm p} 
&=
\mathbf{f}_{\rm p} 
+
\bar{\mathbf{e}}_{\rm f, est}
=
\alpha
\big(
\mathbf{p}_{\rm f, \gamma}
+
\mathbf{e}_{\rm f, est}
\big),
\end{split} 
\end{align}
where $\mathbf{e}_{\rm f, est} \triangleq [E_{{\rm f}, 0}^{\rm est}, E_{{\rm f}, 1}^{\rm est},\cdots, E_{{\rm f}, N-1}^{\rm est}]^{\rm T} \in {\mathbb{C}^{N \times 1}}$, $E_{{\rm f}, i}^{\rm est} \neq 0$ for $i \in \mathcal{P}$; otherwise $E_{{\rm f}, i}^{\rm est} = 0$. The PN-affected-channel estimate can be expressed as 
\begin{align}
\label{eq:PNAC_eff_err}
\hat{\mathbf{f}}_{\rm p}
=
\alpha
\big\{
\mathbf{p}_{\rm f}	
+
(
\mathbf{e}_{\rm f, est}
-
\mathbf{e}_{\rm f, app}
)
\big\},
\end{align}
where we define the effective error vector as $\mathbf{e}_{\rm f, eff} \triangleq \mathbf{e}_{\rm f,est} - \mathbf{e}_{\rm f, app} \in \mathbb{C}^{N \times 1}$.

To describe the output vector of deconvolution, we adopt the time-domain representation $\mathbf{g}_{\rm p}$ and $\mathbf{e}_{\rm t, eff}$ of $\hat{\mathbf{f}}_{\rm p}$ and $\mathbf{e}_{\rm f, eff}$, respectively, as
\begin{align}
\label{eq:g_p_vec}
\mathbf{g}_{\rm p} 
=
\sqrt{N}
\mathbf{D}_{N}^{\rm H} {\hat{\mathbf{f}}_{\rm p}}
=
[g_{{\rm p}, 0}, g_{{\rm p}, 1}, \cdots, g_{{\rm p}, N-1}]^{\rm T}
\in 
\mathbb{C}^{N \times 1}, {\text{and}}
\end{align}
\begin{align}
\label{eq:e_t_eff_vec}	
\mathbf{e}_{\rm t, eff} 
=
\sqrt{N}
\mathbf{D}_{N}^{\rm H} {\mathbf{e}}_{\rm f, eff}
=
[E^{\rm eff}_{{\rm t}, 0}, E^{\rm eff}_{{\rm t}, 1}, \cdots, E^{\rm eff}_{{\rm t}, N-1}]^{\rm T}
\in 
\mathbb{C}^{N \times 1},
\end{align}
where $\mathbf{D}_{N}$ refers to the $N \times N$ unitary discrete Fourier transform~(DFT) matrix. The following theorem shows the output vector $\mathbf{y}_{\rm If}$ after the ICI suppression.

\begin{theorem}
\label{thm:deconv_out_vec}
Let $\mathbf{y}_{\rm If} = \lbrack Y_{{\rm If}, 0}, Y_{{\rm If}, 1}, \cdots, Y_{{\rm If}, N-1} \rbrack^{\rm T} \in \mathbb{C}^{N \times 1}$ denote the output vector by deconvolving the PN-affected-channel estimate $\hat{\mathbf{f}}_{\rm p}$ from $\mathbf{y}_{\rm f}$. The signal model of $\mathbf{y}_{\rm If}$ taking into account the approximation error of the \ac{PN} spectrum and the estimation error of the PN-affected channel is given by
\begin{align}
\label{eq:y_If}
\begin{split}
\mathbf{y}_{\rm If}
&= 
\mathbf{y}_{\rm f}
\circledast^{-1}
\hat{\mathbf{f}}_{\rm p}\\
&=
\lbrace
\mathbf{I}_{N}
-
\mathbf{\Upsilon}
\rbrace
\mathbf{H}_{\rm If}\mathbf{x}_{\rm f}
+
\bar{\mathbf{z}}_{\rm f},\\
\end{split}
\end{align}
where
\begin{align}
\label{eq:Upsilon_thm}
\mathbf{\Upsilon}
\triangleq
\alpha
\mathbf{D}_{N}
{\mathbf{G}_{\rm p}}
{\mathbf{E}_{\rm t,eff}}
\mathbf{D}_{N}^{\rm H},
\end{align}
\begin{align}
\label{eq:z_f_bar}
\bar{\mathbf{z}}_{\rm f}
\triangleq
\mathbf{D}_{N}
\mathbf{G}_{\rm p}
\mathbf{D}_{N}^{\rm H}
\mathbf{z}_{\rm f},	
\end{align}
\begin{align}
\label{eq:G_p}
\mathbf{G}_{\rm p} 
\triangleq 
{\rm diag}
\big\{
1/g_{{\rm p}, n}
\big\}_{n = 0}^{N-1},
\end{align}
\begin{align}
\label{eq:E_t_eff_mat}	
\mathbf{E}_{\rm t, eff}
\triangleq
{\rm diag}
\big\{
E^{\rm eff}_{{\rm t}, n}
\big\}_{n = 0}^{N-1}.
\end{align}   
\end{theorem}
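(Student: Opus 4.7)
The main idea is to recast the frequency-domain deconvolution as a pointwise division in the time domain, and then to use the circulant structure of $\mathbf\Phi_{\rm f}$ together with the error model~(\ref{eq:PNAC_eff_err}) to extract the $\mathbf{I}_{N}-\mathbf{\Upsilon}$ factor. First I would invoke Lemma~\ref{lemma:deconv} at the matrix level: since circular convolution corresponds to pointwise multiplication in the time domain under the DFT, deconvolving $\hat{\mathbf{f}}_{\rm p}$ out of $\mathbf{y}_{\rm f}$ is equivalent to dividing the time-domain received vector elementwise by the (suitably normalized) time-domain representation of $\hat{\mathbf{f}}_{\rm p}$. Combined with definitions (\ref{eq:g_p_vec}) and (\ref{eq:G_p}), this yields the compact expression $\mathbf{y}_{\rm If}=\mathbf{D}_{N}\mathbf{G}_{\rm p}\mathbf{D}_{N}^{\rm H}\mathbf{y}_{\rm f}$. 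Substituting $\mathbf{y}_{\rm f}=\mathbf\Phi_{\rm f}\mathbf{H}_{\rm f}\mathbf{x}_{\rm f}+\mathbf{z}_{\rm f}$ from~(\ref{eq:yfreq1}) immediately identifies the noise contribution as $\bar{\mathbf{z}}_{\rm f}$ in~(\ref{eq:z_f_bar}), reducing the task to simplifying the signal term.

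For the signal term, I plan to diagonalize the circulant $\mathbf\Phi_{\rm f}=\mathrm{circ}(\mathbf{p}_{\rm f})$ as $\mathbf\Phi_{\rm f}=\mathbf{D}_{N}\mathbf{P}_{\rm t}\mathbf{D}_{N}^{\rm H}$, where $\mathbf{P}_{\rm t}=\mathrm{diag}\{e^{j\phi_{n}}\}_{n=0}^{N-1}=\mathbf\Phi_{\rm t}$; this is the standard spectral factorization of a circulant once the $1/N$ normalization in the paper's definition of $P_{i}$ is tracked. The signal part of $\mathbf{y}_{\rm If}$ then collapses to $\mathbf{D}_{N}\mathbf{G}_{\rm p}\mathbf{P}_{\rm t}\mathbf{D}_{N}^{\rm H}\mathbf{H}_{\rm f}\mathbf{x}_{\rm f}$. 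The decisive step is a one-line identity on each diagonal entry: from (\ref{eq:g_p_vec})--(\ref{eq:PNAC_eff_err}) one has $g_{{\rm p},n}=\alpha\bigl(e^{j\phi_{n}}+E^{\rm eff}_{{\rm t},n}\bigr)$, so writing $e^{j\phi_{n}}=g_{{\rm p},n}/\alpha-E^{\rm eff}_{{\rm t},n}$ and dividing by $g_{{\rm p},n}$ gives
\begin{equation*}
\mathbf{G}_{\rm p}\mathbf{P}_{\rm t}=\tfrac{1}{\alpha}\bigl(\mathbf{I}_{N}-\alpha\,\mathbf{G}_{\rm p}\mathbf{E}_{\rm t,eff}\bigr).
\end{equation*}
Substituting this back, pulling the $1/\alpha$ into $\mathbf{H}_{\rm If}=(1/\alpha)\mathbf{H}_{\rm f}$, and factoring $\mathbf{H}_{\rm If}\mathbf{x}_{\rm f}$ outside yields $(\mathbf{I}_{N}-\mathbf\Upsilon)\mathbf{H}_{\rm If}\mathbf{x}_{\rm f}$ with $\mathbf\Upsilon$ exactly as in~(\ref{eq:Upsilon_thm}). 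Adding $\bar{\mathbf{z}}_{\rm f}$ then produces~(\ref{eq:y_If}).

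The hard part will not be any single algebraic step but rather keeping the DFT normalizations and the scalar $\alpha$ coherent throughout. The paper defines $P_{i}$ with a $1/N$ factor, while equations (\ref{eq:g_p_vec}) and (\ref{eq:e_t_eff_vec}) insert an explicit $\sqrt{N}$; a careful check that these conventions combine to give $\mathbf{g}_{\rm p}=\alpha\bigl(\mathbf{p}_{\rm t}+\mathbf{e}_{\rm t,eff}\bigr)$ with $\mathbf{p}_{\rm t}=[e^{j\phi_{0}},\ldots,e^{j\phi_{N-1}}]^{\rm T}$ is needed before the circulant diagonalization and the scalar identity above can be safely combined. Once that bookkeeping is settled, the remainder is routine matrix algebra in the DFT basis.
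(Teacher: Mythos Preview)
Your argument is correct and genuinely more direct than the paper's. Both proofs begin with the same reduction $\mathbf{y}_{\rm If}=\mathbf{D}_{N}\mathbf{G}_{\rm p}\mathbf{D}_{N}^{\rm H}\mathbf{y}_{\rm f}$, but diverge in how they handle the signal term. The paper works from the decomposition~(\ref{eq:PN_aff_Ch_est_model}), i.e.\ $\hat{\mathbf{f}}_{\rm p}=\alpha(\mathbf{p}_{\rm f,\gamma}+\mathbf{e}_{\rm f,est})$, which forces it to (i)~expand $(\mathbf{\Phi}_{{\rm t},\gamma}+\mathbf{E}_{\rm t,est})^{-1}$ via two matrix-inversion identities, (ii)~split $\mathbf{y}_{\rm f}$ into the $\mathbf{F}_{\gamma}$ and $\mathbf{E}_{\gamma}$ pieces, (iii)~diagonalize $\mathbf{\Phi}_{\rm f,\gamma}$ and $\tilde{\mathbf{\Phi}}_{\rm f,\gamma}$ separately via Lemma~\ref{lemma:circulant_mat}, and (iv)~recombine the estimation and approximation errors at the end to recover $\mathbf{E}_{\rm t,eff}$. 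You instead work directly from the equivalent representation~(\ref{eq:PNAC_eff_err}), $\hat{\mathbf{f}}_{\rm p}=\alpha(\mathbf{p}_{\rm f}+\mathbf{e}_{\rm f,eff})$, diagonalize the \emph{full} circulant $\mathbf{\Phi}_{\rm f}$ once, and reduce everything to the scalar identity $e^{j\phi_{n}}/g_{{\rm p},n}=1/\alpha-E^{\rm eff}_{{\rm t},n}/g_{{\rm p},n}$. This bypasses the Woodbury-type steps entirely and yields $(\mathbf{I}_{N}-\mathbf{\Upsilon})\mathbf{H}_{\rm If}\mathbf{x}_{\rm f}$ in essentially one line. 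The normalization bookkeeping you flag is real but routine: with the paper's $1/N$ convention for $P_{i}$ and the $\sqrt{N}$ in~(\ref{eq:g_p_vec}), one indeed gets $\mathbf{g}_{\rm p}=\alpha(\mathbf{p}_{\rm t}+\mathbf{e}_{\rm t,eff})$ and $\mathbf{\Phi}_{\rm f}=\mathbf{D}_{N}\mathbf{\Phi}_{\rm t}\mathbf{D}_{N}^{\rm H}$, so your identity holds as written.
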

\begin{proof}
See Appendix~\ref{append_sec:Thm_deconv_out_vec}. 
\end{proof}
The following lemma provides a constructive proof of the above theorem.
\begin{lemma}
\label{lemma:circulant_mat}
Let	$\mathbf{C} \in \mathbb{C}^{N \times N}$ be a circulant matrix whose first column is $\mathbf{c} = [c_{0}, c_{1}, \cdots, c_{N-1}]^{\rm T}$ and each subsequent column is obtained by a circular shift of the previous column. The circulant matrix $\mathbf{C}$ has eigenvector $\mathbf{d}_{k} = \frac{1}{\sqrt{N}}[1, e^{j 2 \pi k/N}, \cdots, e^{j 2 \pi k(N-1)/N}]^{\rm H}$ for $k = \{0,1,\cdots, N-1\}$, and corresponding eigenvalues
\begin{align}
\label{eq:lem_circ_eig_value}
\lambda_{k} = \sum_{\ell = 0}^{N-1} {c_{\ell}e^{j 2 \pi k\ell/N}},
\end{align}
and can be decomposed as $\mathbf{C} = \mathbf{D                                                                    }_{N} \mathbf{\Lambda} \mathbf{D}_{N}^{\rm H}$, where $\mathbf{D}_{N}$ is N-point unitary DFT matrix and $\mathbf{\Lambda}$ is ${\rm diag} \lbrace \lambda_{k} \rbrace_{k=0}^{N-1}$.
\end{lemma}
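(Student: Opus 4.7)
The plan is a direct verification that each vector $\mathbf{d}_k$ is an eigenvector of $\mathbf{C}$ with eigenvalue $\lambda_k$, followed by collecting these eigenpairs into the factorization via the unitarity of $\mathbf{D}_N$.

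First I would unpack the circulant structure at the level of scalar entries. By definition of a circulant built from the first column $\mathbf{c}$, the $(m,\ell)$-entry of $\mathbf{C}$ is $c_{(m-\ell)_N}$, where $(\cdot)_N$ denotes reduction modulo $N$. The $m$-th entry of $\mathbf{d}_k$ (bearing in mind the $^{\rm H}$ in the stated definition) is $\tfrac{1}{\sqrt{N}} e^{-j 2\pi k m/N}$. Writing the matrix-vector product entry-wise,
\begin{equation*}
(\mathbf{C}\mathbf{d}_k)_m
= \frac{1}{\sqrt{N}} \sum_{\ell=0}^{N-1} c_{(m-\ell)_N}\, e^{-j 2\pi k \ell/N}.
\end{equation*}

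Next I would change the summation index to $n=(m-\ell)_N$. Since the map $\ell\mapsto n$ is a bijection on $\{0,1,\dots,N-1\}$ and the exponential $e^{-j 2\pi k \ell/N}$ is itself $N$-periodic in $\ell$, the congruence $\ell \equiv m-n \pmod{N}$ is enough to replace $\ell$ by $m-n$ inside the exponent without changing its value. This gives
\begin{equation*}
(\mathbf{C}\mathbf{d}_k)_m
= \frac{1}{\sqrt{N}} e^{-j 2\pi k m/N} \sum_{n=0}^{N-1} c_{n}\, e^{j 2\pi k n/N}
= \lambda_k\, (\mathbf{d}_k)_m,
\end{equation*}
which is precisely the eigenvalue relation $\mathbf{C}\mathbf{d}_k = \lambda_k \mathbf{d}_k$ with $\lambda_k$ as in \eqref{eq:lem_circ_eig_value}.

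Finally I would assemble the decomposition. Stacking the $N$ eigenvectors $\{\mathbf{d}_k\}_{k=0}^{N-1}$ as the columns of $\mathbf{D}_N$, the $N$ eigenvalue relations can be written jointly as $\mathbf{C}\mathbf{D}_N = \mathbf{D}_N \mathbf{\Lambda}$ with $\mathbf{\Lambda}=\mathrm{diag}\{\lambda_k\}_{k=0}^{N-1}$. Because $\mathbf{D}_N$ is the unitary DFT matrix, $\mathbf{D}_N^{\rm H}\mathbf{D}_N = \mathbf{I}_N$, and right-multiplication by $\mathbf{D}_N^{\rm H}$ yields $\mathbf{C} = \mathbf{D}_N\mathbf{\Lambda}\mathbf{D}_N^{\rm H}$, as claimed. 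The only step that requires any care is the index bookkeeping in the change of variables $n=(m-\ell)_N$, and that is trivialized by $N$-periodicity of the exponential; no further structural property of $\mathbf{C}$ beyond its circulant form is needed.
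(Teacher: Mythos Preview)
Your proof is correct. The paper itself does not give a proof of this lemma: it simply writes ``See~\cite{davis12},'' deferring to a standard reference on circulant matrices. Your direct entry-wise verification that $\mathbf{C}\mathbf{d}_k = \lambda_k \mathbf{d}_k$ via the change of variables $n = (m-\ell)_N$, followed by assembling $\mathbf{C}\mathbf{D}_N = \mathbf{D}_N\mathbf{\Lambda}$ and invoking unitarity, is exactly the standard argument one finds in such references. The only comparison to make is that you supply a self-contained proof where the paper does not; there is no mathematical divergence to discuss.
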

\begin{proof}
See~\cite{davis12}. 
\end{proof}

By the expression from Theorem~\ref{thm:deconv_out_vec}, we obtain the signal model to design the ICI-free-channel estimator in Section~\ref{subsec:ICIfree_est}. The effective error incurs the $\mathbf{\Upsilon}$-dependent term in the deconvolved output vector. The impact of the $\mathbf{\Upsilon}$-dependent term is divided into two; one is the distortion of the ICI-free-channel on each subcarrier, and the other is the residual interference. To see this impact, let us rewrite the deconvolution output-vector (\ref{eq:y_If}) as
\begin{align}
\label{eq:y_If_decomp}
\begin{split}
\mathbf{y}_{\rm If}
&=
\lbrace
\mathbf{I}_{N}
-
(
\mathbf{\Upsilon}_{\rm diag}
+
\mathbf{\Upsilon}_{\rm off}
)
\rbrace	
\mathbf{H}_{\rm If}
\mathbf{x}_{\rm f}
+
\bar{\mathbf{z}}_{\rm f}\\
&=
\underbrace{
\lbrace
\mathbf{I}_{N}
-
\mathbf{\Upsilon}_{\rm diag}
\rbrace
\mathbf{H}_{\rm If}
}_{\triangleq \bar{\mathbf{H}}_{\rm If}}
\mathbf{x}_{\rm f}
-
\mathbf{\Upsilon}_{\rm off}
\mathbf{H}_{\rm If}
\mathbf{x}_{\rm f}
+
\bar{\mathbf{z}}_{\rm f},
\end{split}
\end{align}
where $\mathbf{\Upsilon}_{\rm diag}$ is the diagonal matrix with the main diagonal terms of $\mathbf{\Upsilon}$. The diagonal terms are the distorted coefficients by the effective error. As the $\mathbf{\Upsilon}_{\rm off} \triangleq \mathbf{\Upsilon} - \mathbf{\Upsilon}_{\rm diag}$ is its off-diagonal matrix, the $\mathbf{\Upsilon}_{\rm off}\mathbf{H}_{\rm If} \mathbf{x}_{\rm f}$ acts as a residual interference. Notice that, in practice, $\bar{\mathbf{H}}_{\rm If}$ should be estimated to decode the data symbols, which is described in the following subsection.

\subsection{ICI-Free-Channel Estimation}
\label{subsec:ICIfree_est}
The main objective of this subsection is to estimate the diagonal elements of $\bar{\mathbf{H}}_{\rm If}$ by using the CH-dedicated pilot. The following theorem shows that the diagonal terms of $\mathbf{\Upsilon}$ have an identical coefficient, which means that constant channel frequency response over $N_{\rm cb}$ successive subcarriers is still maintained despite the impact of the effective error. 
\begin{theorem}
\label{thm:IFC_distort}
The ICI-free-channel matrix distorted by the effective error is a scaled version of $\mathbf{H}_{\rm If}$ as
\begin{align}
\label{eq:IFC_distort}
\bar{\mathbf{H}}_{\rm If}
=
(1-\bar{\varepsilon}_{\rm cd})
\mathbf{H}_{\rm If},
\;\;\;\;\; 
\bar{\varepsilon}_{\rm cd} 
\in
\mathbb{C},	
\end{align}
where we call $\bar{\varepsilon}_{\rm cd}$ a common distortion coefficient of the ICI-free channel. The $\bar{\varepsilon}_{\rm cd}$ is defined as 
\begin{align}
\label{eq:cd_coeff}
\bar{\varepsilon}_{\rm cd} 
\triangleq
\alpha 
\bigg\{
\frac{1}{N}
\sum_{n=0}^{N-1}
\frac{E_{{\rm t}, n}^{\rm eff}}
{g_{{\rm p}, n}}
\bigg\}.
\end{align} 
\end{theorem}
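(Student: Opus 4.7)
The plan is to show that the diagonal part $\mathbf{\Upsilon}_{\rm diag}$ of the matrix $\mathbf{\Upsilon}$ defined in (\ref{eq:Upsilon_thm}) is proportional to the identity, with the proportionality constant being exactly $\bar{\varepsilon}_{\rm cd}$. Once this is established, the claim $\bar{\mathbf{H}}_{\rm If} = (1 - \bar{\varepsilon}_{\rm cd})\mathbf{H}_{\rm If}$ follows immediately from the definition $\bar{\mathbf{H}}_{\rm If} \triangleq (\mathbf{I}_N - \mathbf{\Upsilon}_{\rm diag})\mathbf{H}_{\rm If}$ given in (\ref{eq:y_If_decomp}), since scalar multiples of the identity commute with the block-diagonal $\mathbf{H}_{\rm If}$.

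First, I would note that $\mathbf{G}_{\rm p}$ and $\mathbf{E}_{\rm t,eff}$ are both diagonal by their definitions in (\ref{eq:G_p}) and (\ref{eq:E_t_eff_mat}), so their product is diagonal with entries $E_{{\rm t}, n}^{\rm eff}/g_{{\rm p}, n}$. Consequently, (\ref{eq:Upsilon_thm}) casts $\mathbf{\Upsilon}/\alpha$ in precisely the eigendecomposition form $\mathbf{D}_N \mathbf{\Lambda} \mathbf{D}_N^H$ appearing in Lemma~\ref{lemma:circulant_mat}. By that lemma, $\mathbf{\Upsilon}/\alpha$ is therefore a circulant matrix whose eigenvalues are $\lambda_n = E_{{\rm t}, n}^{\rm eff}/g_{{\rm p}, n}$.

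Next, I would invoke the well-known property that every diagonal entry of a circulant matrix equals the zeroth coefficient $c_0$ of its generating first column. Inverting the eigenvalue relation $\lambda_k = \sum_{\ell} c_\ell e^{j 2\pi k \ell/N}$ from Lemma~\ref{lemma:circulant_mat} via an inverse DFT yields $c_0 = \frac{1}{N}\sum_{n} \lambda_n$. Applied to $\mathbf{\Upsilon}/\alpha$, every diagonal entry of $\mathbf{\Upsilon}$ thus equals $\alpha \cdot \frac{1}{N}\sum_{n} E_{{\rm t}, n}^{\rm eff}/g_{{\rm p}, n}$, which matches the definition of $\bar{\varepsilon}_{\rm cd}$ in (\ref{eq:cd_coeff}). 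Hence $\mathbf{\Upsilon}_{\rm diag} = \bar{\varepsilon}_{\rm cd}\,\mathbf{I}_N$, and substituting into the definition of $\bar{\mathbf{H}}_{\rm If}$ closes the argument.

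The core obstacle is really just the structural recognition that a sandwich $\mathbf{D}_N (\cdot) \mathbf{D}_N^H$ of a diagonal matrix is a circulant matrix whose diagonal is necessarily flat; everything else is algebraic substitution and identification with (\ref{eq:cd_coeff}). A minor verification worth including is that this flat diagonal structure is exactly what preserves the $N_{\rm cb}$-wise block constancy of $\mathbf{H}_{\rm If}$, since a scalar multiple of the identity cannot break the block structure of (\ref{eq:H_If}). This observation also motivates the interpretation of $\bar{\varepsilon}_{\rm cd}$ as a \emph{common} distortion coefficient, as indicated in the statement.
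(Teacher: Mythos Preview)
Your proposal is correct and follows essentially the same approach as the paper: the paper's proof simply observes that $\mathbf{\Upsilon}$ is circulant by its definition in (\ref{eq:Upsilon_thm}), hence has constant diagonal $\bar{\varepsilon}_{\rm cd}\mathbf{I}_N$, with the explicit value of $\bar{\varepsilon}_{\rm cd}$ read off from (\ref{eq:Upsilon_thm}). Your version is a more detailed unpacking of the same argument, explicitly invoking Lemma~\ref{lemma:circulant_mat} and the inverse-DFT relation $c_0 = \tfrac{1}{N}\sum_n \lambda_n$ that the paper leaves implicit.
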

\begin{proof}
Note that $\mathbf{\Upsilon}$ is a circulant matrix by definition in (\ref{eq:Upsilon_thm}). Thus, $\mathbf{\Upsilon}_{\rm diag}$ can be represented as $\mathbf{\Upsilon}_{\rm diag} = \bar{\varepsilon}_{\rm cd} \mathbf{I}_{N}$. The expression for $\bar{\varepsilon}_{\rm cd}$ can be simply proved by (\ref{eq:Upsilon_thm}).
\end{proof}

The matrix $\bar{\mathbf{H}}_{\rm If}$ has $N_{\rm c}$ diagonal elements defined as $\tilde{\mathbf{h}}_{\rm If} = [(1-\bar{\varepsilon}_{\rm cd})H_{{\rm If}, 0}, (1-\bar{\varepsilon}_{\rm cd})H_{{\rm If}, 1}, \cdots, (1-\bar{\varepsilon}_{\rm cd})H_{{\rm If}, N_{\rm c}-1}]^{\rm T} \in \mathbb{C}^{N_{\rm c} \times 1}$. To estimate $\tilde{\mathbf{h}}_{\rm If}$, one PN-dedicated pilot in $\mathcal{F}_{\rm p}$ can be reused. Hence $(N_{\rm c}-1)$ CH-dedicated pilots are additionally needed. Let $\mathbf{x}_{\rm f}^{\rm c} = [X_{2\gamma}^{\rm p}, X_{1}^{\rm c}, X_{2}^{\rm c}, \cdots, X_{N_{\rm c}-1}^{\rm c}]^{\rm T} \in \mathbb{C}^{N_{\rm c} \times 1}$ be the  pilot vector for the ICI-free-channel estimation. Based on Theorem~\ref{thm:deconv_out_vec}, the output vector $\tilde{\mathbf{y}}_{\rm If} = [Y_{{\rm If}, 2\gamma}, Y_{{\rm If}, N_{\rm cb}}, Y_{{\rm If}, 2N_{\rm cb}}, \cdots, Y_{{\rm If}, (N_{\rm c}-1)N_{\rm cb}} ]^{\rm T} \in \mathbb{C}^{N_{\rm c} \times 1}$, to estimate $\tilde{\mathbf{h}}_{\rm If}$, can be expressed as
\begin{align}
\label{eq:y_tilde}
\begin{split}
\tilde{\mathbf{y}}_{\rm If}
&=
\tilde{\mathbf{D}}_{N}
\lbrace
\mathbf{I}_{N}
-
\alpha
\mathbf{G}_{\rm p}
\mathbf{E}_{\rm t, eff}
\rbrace
\mathbf{D}_{N}^{\rm H}
\mathbf{H}_{\rm If}
\mathbf{x}_{\rm f}
+
\tilde{\mathbf{z}}_{\rm f}\\
&=
\tilde{\mathbf{H}}_{\rm If}
\mathbf{x}_{\rm f}^{\rm c}
+
\tilde{\mathbf{\Upsilon}}_{\rm off}	
\mathbf{H}_{\rm If}
\mathbf{x}_{\rm f}
+
\tilde{\mathbf{z}}_{\rm f},
\end{split}	
\end{align}
where $\tilde{\mathbf{H}}_{\rm If} = {\rm diag} \lbrace (1-\bar{\varepsilon}_{\rm cd}) H_{\rm If,m} \rbrace_{m = 0}^{N_{\rm c}-1}$ is the diagonal matrix with entries from $\tilde{\mathbf{h}}_{\rm If}$ on its main diagonal, $\tilde{\mathbf{\Upsilon}}_{\rm off} \triangleq \tilde{\mathbf{D}}_{N} \lbrace \bar{\varepsilon}_{\rm cd} \mathbf{I}_{N} - \alpha\mathbf{G}_{\rm p} \mathbf{E}_{\rm t, eff} \rbrace \mathbf{D}_{N}^{\rm H}$, $\tilde{\mathbf{z}}_{\rm f} \triangleq \tilde{\mathbf{D}}_{N} \mathbf{G}_{\rm p} \mathbf{D}_{N}^{\rm H} \mathbf{z}_{\rm f}$, and $\tilde{\mathbf{D}}_{N} \in \mathbb{C}^{N_{\rm c} \times N}$ is a semi-unitary matrix formed by rows $m = \lbrace N_{\gamma}, N_{\rm cb}, 2N_{\rm cb}, \cdots, (N_{\rm c}-1)N_{\rm cb} \rbrace$ of $\mathbf{D}_{N}$. The second equation on the right side in (\ref{eq:y_tilde}) represents the expression by separating residual interference, \ie, $\tilde{\mathbf{\Upsilon}}_{\rm off}\mathbf{H}_{\rm If}\mathbf{x}_{\rm f}$.

We employ the LS and LMMSE estimators for the ICI-free channel. The LS and LMMSE estimators are, respectively, (See Appendix~\ref{append_sec:LMMSE_IFC} for $\mathbf{V}_{\mathsf{lmmse}}$)
\begin{align}
\label{eq:V_ls}
\mathbf{V}_{\mathsf{ls}}
=
(
\mathbf{X}_{\rm f}^{\rm c}	
)^{-1}, \text{and}
\end{align}
\begin{align}
\label{eq:V_lmmse}
\mathbf{V}_{\mathsf{lmmse}}
=
\frac{
1 
- 
\sigma_{\varepsilon}^{2}
}
{1 + (1/\mathsf{SNR})}
(
\mathbf{X}_{\rm f}^{\rm c}
)^{\rm H},
\end{align}
where $\mathbf{X}_{\rm f}^{\rm c} \in \mathbb{C}^{N_{\rm c} \times N_{\rm c}}$ is a diagonal matrix with entries from $\mathbf{x}_{\rm f}^{\rm c}$ on its main diagonal and $\sigma_{\varepsilon}^{2}$ is the variance of the effective error. The ICI-free-channel estimate $\hat{\mathbf{h}}_{\rm If} \in \mathbb{C}^{N_{\rm c} \times 1}$, which becomes the last estimate for decoding the transmitting data, is given by
\begin{align}
\label{eq:h_If_est}
\hat{\mathbf{h}}_{{\rm {If}}, q} 
= 
\mathbf{V}_{q}
\tilde{\mathbf{y}}_{\rm {If}},
\;\;\;
q
\in
\lbrace
\mathsf{ls, lmmse}
\rbrace
\end{align}  

\section{Normalized Mean Squared-Error Analysis}
\label{sec:NMSE_analysis}
\ac{NMSE} has been widely used, as a performance metric, to evaluate channel estimators for fading environments, e.g., spatially- or temporally-correlated channels~\cite{li02, yin13, shariati14}. Furthermore, Hamila \emph{et al.}~\cite{hamila16} and Liu \emph{et al.}~\cite{liu17} have derived closed-from expressions of the \ac{NMSE} (a modified \ac{NMSE} in~\cite{liu17}). These expressions provide useful insights into channel estimation performance, according to system parameters. This section presents an \ac{NMSE} analysis of \ac{PN}-affected-\hspace{0.1em}/\hspace{0.1em}\ac{ICI}-free-channel estimation. For the \ac{NMSE} analysis, we offer a simple closed-form expression for their respective \acp{NMSE}, based on the assumption of \ac{PN} modeled by a Wiener process. It helps in understanding the NMSE behavior in low and high \ac{SNR} regimes. In the following expressions, the channel coherence matrix of $\lbrace H_{k} \rbrace_{k = 0}^{N_{\rm c}-1}$ has an identity matrix, \ie, $\mathbf{R}_{\mathbf{hh}} = \mathbf{I}_{N_{\rm c}}$, by the coherence block model given in Section~\ref{sec:sys}.

\subsection{NMSE of PN-affected channel}
\label{subsec:NMSE_PNAC}
The NMSE for PN-affected-channel estimation is defined as
\begin{align}
\label{eq:NMSE_PNAC}
{\mathsf{NMSE}}_{{\rm{p}},q} 
\triangleq 
\frac{\mathbb{E} [\|\hat{\mathbf{f}}_{\bar{\rm p}, q}-\mathbf{f}_{\bar{\rm p}}\|_{2}^2]}
{\mathbb{E} [\|\mathbf{f}_{\bar{\rm p}}\|_{2}^2]},
\;\;
q 
\in 
\lbrace 
{\mathsf{ls}}, {\mathsf{lmmse}}
\rbrace.
\end{align}
From (\ref{eq:NMSE_PNAC}), we derive the NMSEs of \ac{LS} and \ac{LMMSE} PN-affected-channel estimators, respectively, as
%NMSE closed-form expression (PN-Affected Channel)
\begin{align}
\label{eq:NMSE_PNAC_LS}
\begin{split}
{\mathsf{NMSE}}_{\rm{p},\mathsf{ls}}
&=	
\frac{
\mathbb{E}
\lbrace
\| 
\mathbf{y}_{\rm f}^{\rm p}
-
\mathbf{f}_{\bar{\rm p}} 
\|_{2}^{2}
\rbrace
}
{
\mathbb{E}
\lbrace
\|
\mathbf{f}_{\bar{\rm p}}
\|_{2}^{2}
\rbrace
}
\\
%&=
%\frac{
%{\rm tr}
%\lbrace
%\mathbf{R}_{\mathbf{pp}}^{\gamma}
%+
%\mathbf{R}_{\mathbf{ici}}^{\gamma}
%+
%(1/\mathsf{SNR})
%\mathbf{I}_{N_{\rm p}}
%-
%2\mathbf{R}_{\mathbf{pp}}^{\gamma}
%+
%\mathbf{R}_{\mathbf{pp}}^{\gamma}
%\rbrace
%}
%{
%{\rm tr}
%\lbrace
%\mathbf{R}_{\mathbf{pp}}^{\gamma}
%\rbrace
%}
&=
\frac{
{\rm tr}
\lbrace
\mathbf{R}_{\mathbf{ici}}^{\gamma}
+
(1/\mathsf{SNR})
\mathbf{I}_{N_{\rm p}}
\rbrace
}
{
{\rm tr}
\lbrace
\mathbf{R}_{\mathbf{pp}}^{\gamma}
\rbrace
}, \text{and}
\end{split}
\end{align}
\begin{align}
\label{eq:NMSE_PNAC_LMMSE}
\begin{split}
{\mathsf{NMSE}}_{\rm{p},\mathsf{lmmse}}
\hspace{-0.75mm}
&=
\hspace{-0.75mm}
\frac{
\mathbb{E}
\lbrace
\|
\hat{\mathbf{f}}_{\bar{\rm p}, {\mathsf{lmmse}}}
-
\mathbf{f}_{\bar{\rm p}}
\|_{2}^{2}
\rbrace
}
{
\mathbb{E}
\lbrace
\|
\mathbf{f}_{\bar{\rm p}}
\|_{2}^{2}
\rbrace
}
\\
\vspace{0.2mm}
%&=
%\frac{
%{\rm tr}
%\big\{
%\mathbf{R}_{\mathbf{pp}}^{\gamma} 
%- 
%\mathbf{R}_{\mathbf{pp}}^{\gamma} 
%\lbrace
%\mathbf{R}_{\mathbf{pp}}^{\gamma} 
%+ 
%\mathbf{R}_{\mathbf{ici}}^{\gamma} 
%+ 
%(1/\mathsf{SNR}) \mathbf{I}_{N_{\rm p}}
%\rbrace^{-1}
%\mathbf{R}_{\mathbf{pp}}^{\gamma}
%\big\}
%}
%{
%{\rm tr}
%\lbrace
%\mathbf{R}_{\mathbf{pp}}^{\gamma}
%\rbrace
%}
&=
\hspace{-0.75mm}
1
\hspace{-1.05mm}
-
\hspace{-1.05mm}
\frac{
{\rm tr}
\big\{ 
\mathbf{R}_{\mathbf{pp}}^{\gamma} 
\lbrace
\mathbf{R}_{\mathbf{pp}}^{\gamma} 
\hspace{-0.55mm}
+ 
\hspace{-0.55mm}
\mathbf{R}_{\mathbf{ici}}^{\gamma} 
\hspace{-0.55mm}
+ 
\hspace{-0.55mm}
(\frac{1}{\mathsf{SNR}}) \mathbf{I}_{N_{\rm p}}
\rbrace^{-1}
\mathbf{R}_{\mathbf{pp}}^{\gamma}
\big\}
}
{
{\rm tr}
\lbrace
\mathbf{R}_{\mathbf{pp}}^{\gamma}
\rbrace
},
\end{split}
\end{align}
In (\ref{eq:NMSE_PNAC_LS}) and (\ref{eq:NMSE_PNAC_LMMSE}), the matrix $\mathbf{R}_{\mathbf{pp}}^{\gamma}$ is a submatrix of the autocorrelation matrix
\begin{align}
\label{eq:R_pp}	
\mathbf{R}_{\mathbf{pp}}
=
\mathbb{E} 
\lbrace 
\mathbf{p}_{\rm f}
\mathbf{p}_{\rm f}^{\rm H} 
\rbrace
=
\frac{1}{N} 
\mathbf{D}_{N}
\mathbf{\Psi}^{\rm T}
\mathbf{D}_{N}^{\rm H}
\in 
\mathbb{C}^{N \times N},
\end{align}
where $\mathbf{\Psi}$ has entries of $\psi_{m,n} \triangleq e^{-\pi \beta |m-n| T_{\rm s}}$ for $m,n \in \lbrace 0,1,\cdots, N-1 \rbrace$. The entries in $\mathbf{R}_{\mathbf{ici}}^{\gamma}$ can be defined as a function of autocorrelation coefficients in $\mathbf{R}_{\mathbf{pp}}$. (See Appendix~\ref{append_sec:R_pp_and_R_ici} for the autocorrelation coefficients of $\mathbf{R}_{\mathbf{pp}}^{\gamma}$ and $\mathbf{R}_{\mathbf{ici}}^{\gamma}$)
\begin{figure*}[t!]
\centering
\subfigure[$\beta = 500$]
{
\includegraphics[width=3.42in]{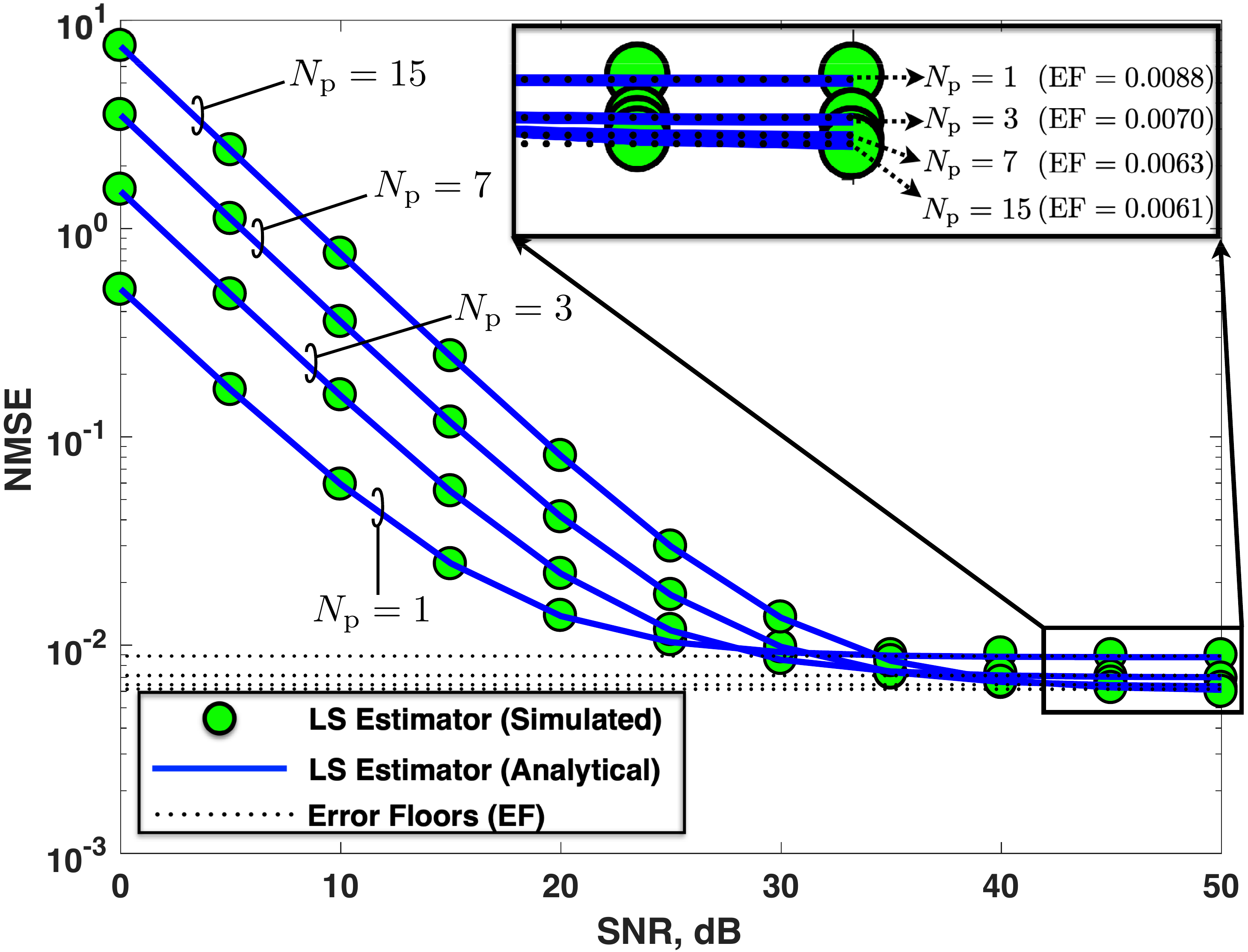}
\label{fig:NMSE_PNAC_LS_500}
}
\subfigure[$\beta = 5000$]
{
\includegraphics[width=3.42in]{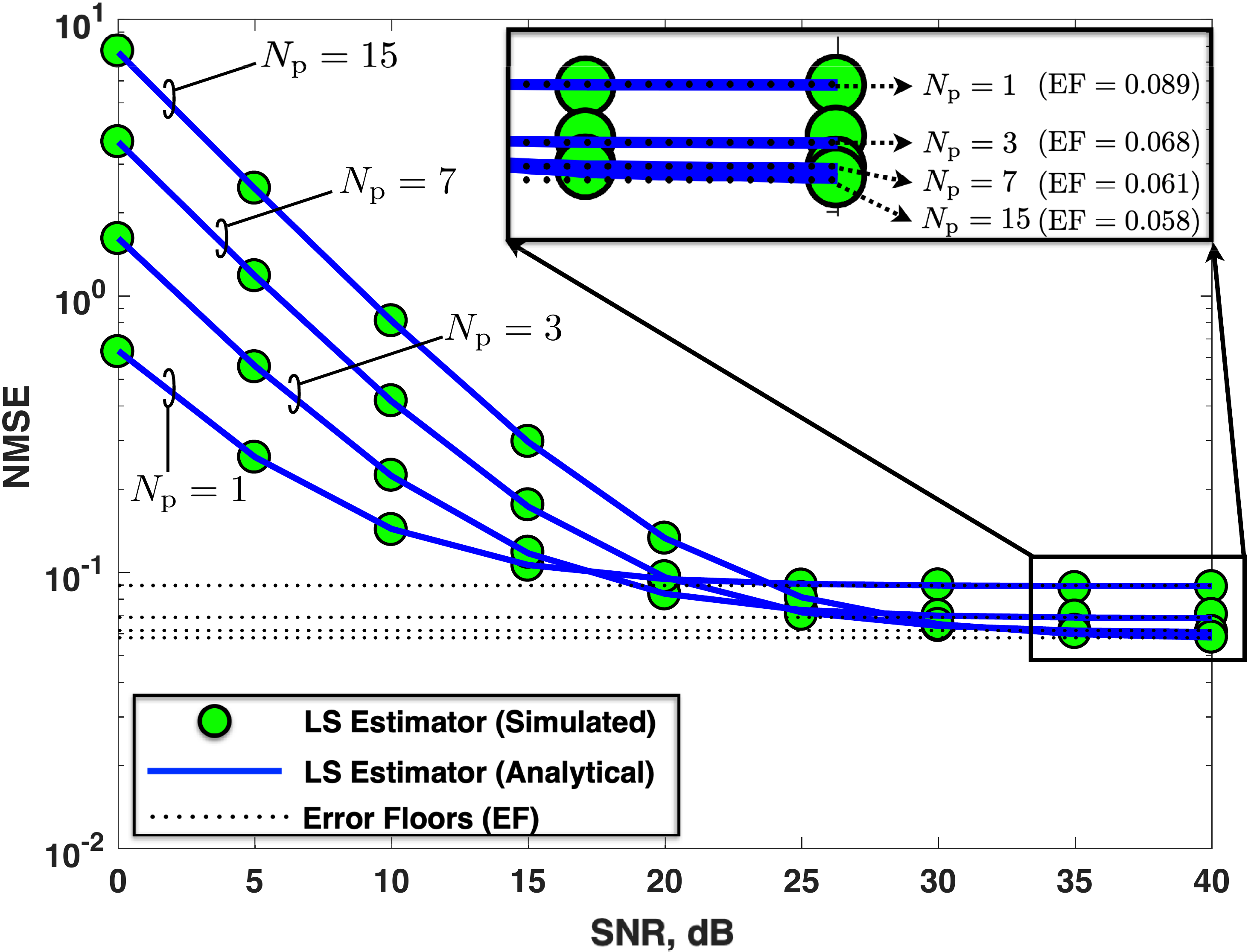}
\label{fig:NMSE_PNAC_LS_5000}
}
\caption{NMSE of LS PN-affected-channel estimator as a function of $\mathsf{SNR}$ for $\beta \in \lbrace 500, 5000\rbrace$ and $\mathcal{P}_{\rm d} = \lbrace 1, 3, 7, 15 \rbrace$. Also shown are the error floors corresponding to the elements in $\mathcal{P}_{\rm d}$. The error floors are obtained by the NMSE expression in~(\ref{eq:NMSE_PNAC_LS}) with $\mathsf{SNR}= \infty$.}
\label{fig:NMSE_PNAC_LS}
\vspace*{-0.25 cm}
\end{figure*}
\begin{figure*}[t!]
\centering
\subfigure[$\beta = 500$]
{
\includegraphics[width=3.42in]{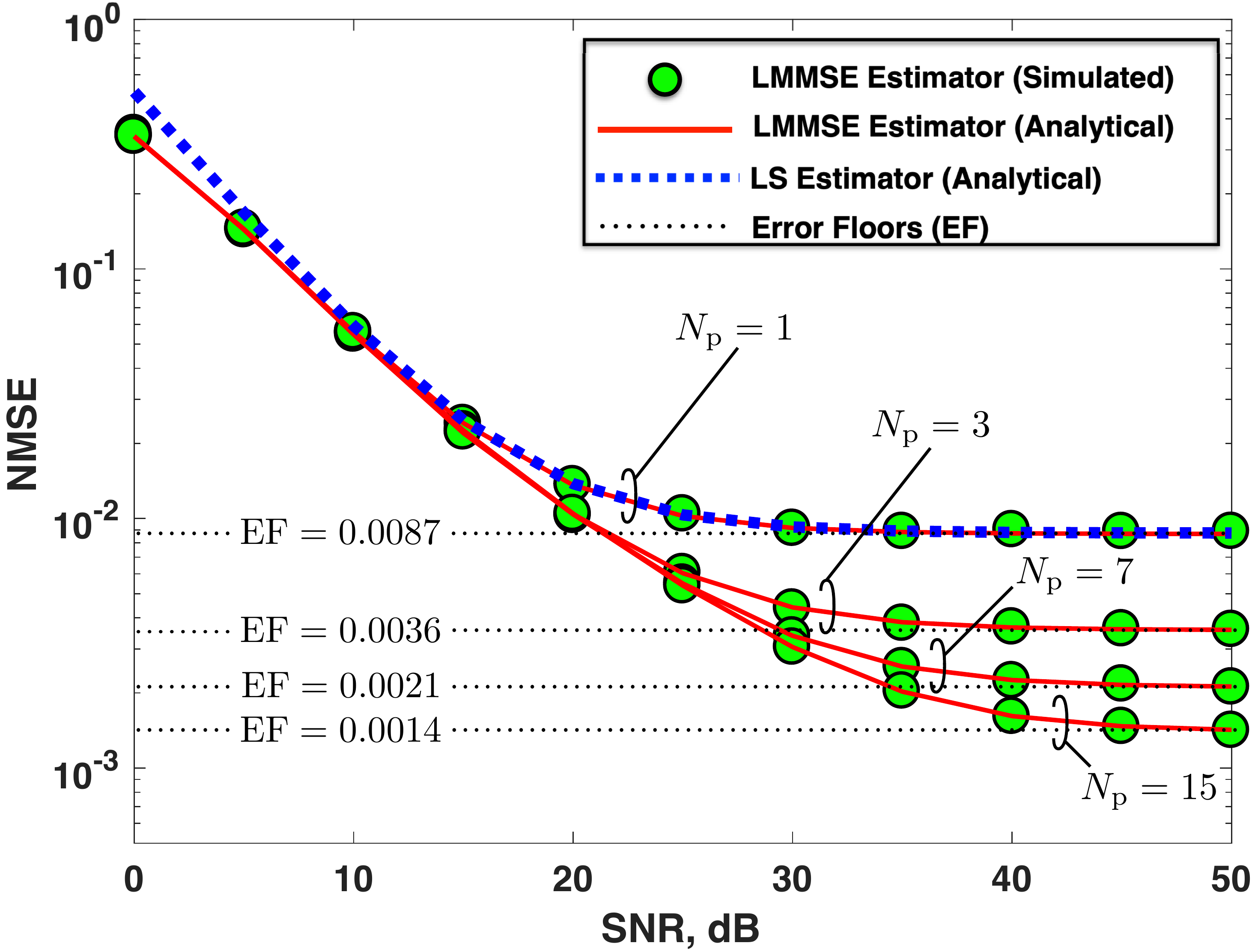}
\label{fig:NMSE_PNAC_LMMSE_500}
}
\subfigure[$\beta = 5000$]
{
\includegraphics[width=3.42in]{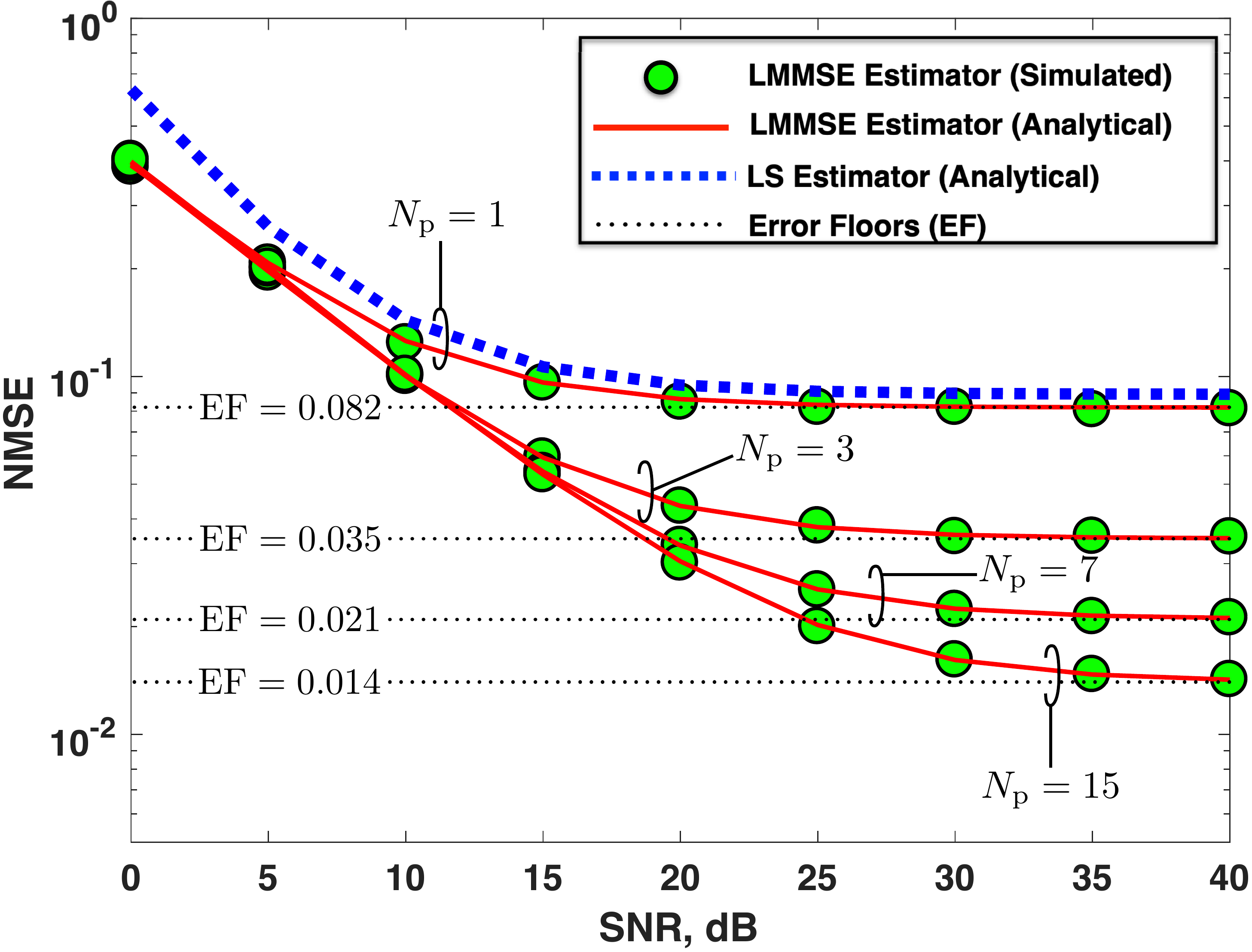}
\label{fig:NMSE_PNAC_LMMSE_5000}
}
\caption{NMSE of LMMSE PN-affected-channel estimator as a function of $\mathsf{SNR}$ for $\beta \in \lbrace 500, 5000\rbrace$ and $\mathcal{P}_{\rm d} = \lbrace 1, 3, 7, 15 \rbrace$. For comparison, the NMSE curve of LS estimator corresponding to $N_{\rm p} = 1$~(dotted blue line) is included. Also shown are the error floors corresponding to the elements in $\mathcal{P}_{\rm d}$. The error floors are obtained by the NMSE expression in~(\ref{eq:NMSE_PNAC_LMMSE}) with $\mathsf{SNR}= \infty$.}
\label{fig:NMSE_PNAC_LMMSE}
\vspace*{-0.25 cm}
\end{figure*}

\begin{remark}\normalfont
{\textbf{(NMSE behavior for PN-affected-channel estimation):}}
\label{rem:PNAC_NMSE_Beh}
The LMMSE estimator with the second-order statistics of \ac{PN} spectrum achieves better NMSE performance as increasing $N_{\rm p}$. One remarkable observation is that the LS estimator has different NMSE behavior depending on the SNR range. At low SNRs, the NMSE increases with $N_{\rm p}$ while it is the opposite at high SNRs. To look at the NMSE in the low and high SNR regimes, we approximate the NMSE of LS estimator (\ref{eq:NMSE_PNAC_LS}) as follows.
\begin{align}
\label{RMK_eq:NMSE_PNAC_Approx}
{\mathsf{NMSE}}_{{\rm p}, {\mathsf{ls}}}
\approx
{\mathsf{NMSE}}_{{\rm p}, {\mathsf{ls}}}^{\mathsf{app}}
=
\frac{
1
-
\mathsf{P}_{\rm dom}
+
N_{\rm p}
/
\mathsf{SNR}
}
{\mathsf{P}_{\rm dom}},	
\end{align}
where ${\mathsf{P}}_{\rm dom} \triangleq \mathbb{E} \big\{ {\sum_{i \in \mathcal{P}}}{\| P_{i} \|_{2}^{2}} \big\}$ as the power sum of the $N_{\rm p}$ dominant \ac{PN} components. The NMSE in the low and high SNR regimes, respectively, are
\begin{align}
\label{RMK_eq:NMSE_PNAC_HighSNR}
\lim_{{\mathsf{SNR}}\to\infty} 
{\mathsf{NMSE}}_{{\rm p}, \mathsf{ls}}^{\mathsf{app}}
= 
\frac
{1-{\mathsf{P}}_{\rm dom}}
{{\mathsf{P}}_{\rm dom}}, \text{and}
\end{align}
\begin{align}
\label{RMK_eq:NMSE_PNAC_LowSNR}
\lim_{{\mathsf{SNR}}\to 0} 
{\mathsf{NMSE}}_{{\rm p}, \mathsf{ls}}^{\mathsf{app}}
= 
\frac
{N_{\rm p}/{\mathsf{P}}_{\rm dom}}
{\mathsf{SNR}}.
\end{align}
The NMSE at high SNRs (\ref{RMK_eq:NMSE_PNAC_HighSNR}) obviously decreases with $N_{\rm p}$. For the low SNR regime, let us define the numerator in (\ref{RMK_eq:NMSE_PNAC_LowSNR}) as $f(\gamma) \triangleq N_{\rm p}/{\mathsf{P}_ {\rm dom}}$. This is an increasing function of the approximation order $\gamma$, \ie, $f(\gamma)' > 0$ for all $\gamma \geq 1$, translating into an NMSE degradation as $N_{\rm p}$ increases.
\end{remark}
 
To validate our analysis, we compare the NMSE expressions for LS/LMMSE PN-affected-channel estimation (\ref{eq:NMSE_PNAC_LS}) and (\ref{eq:NMSE_PNAC_LMMSE}) with the simulation result in Figs.~\ref{fig:NMSE_PNAC_LS}-\ref{fig:NMSE_PNAC_LMMSE}. For the numerical evaluation, the following parameters\footnote{In the 3GPP standard, the \SI{245.76}{\MHz} is defined as a sampling frequency, and the actual transmission bandwidth is less than the sampling frequency because the transmit data symbol is not fully allocated on the available subcarriers. We assumed that the sampling frequency and the bandwidth are equal in this paper.} are assumed: $N=4096$, $B = \SI{245.76}{\MHz}$, $\Delta f = \SI{60}{\kHz}$, which corresponds to one \ac{3GPP} \ac{NR} signaling resource block to support communication at mmWave frequencies~\cite{38_211}. Also, we consider the set of dominant \ac{PN} components $\mathcal{P}_{\rm d} = \lbrace 1, 3, 7,  15 \rbrace$ and two kinds of 3-\si{\decibel} linewidth  $\beta \in \lbrace 500, 5000 \rbrace$ (\si{Hz}) as LO parameters. The \ac{PN} model that we adopt for the numerical evaluation is illustrated in \fig\ref{fig:pn_quant_anal}. Both have severe \ac{PN} spectrum compared to the one in conventional transceivers~\cite{zaidi16}. Unless otherwise stated, the same settings are assumed for numerical evaluation in this paper. As shown in Figs.~\ref{fig:NMSE_PNAC_LS}-\ref{fig:NMSE_PNAC_LMMSE}, the agreement is excellent for all SNR and $N_{\rm p}$ values. Furthermore, it shows that the NMSE behavior follows the analysis in Remark~\ref{rem:PNAC_NMSE_Beh}.
\begin{figure}[t!]
\centering
\subfigure[Phase-noise trajectories (time domain)]
{
\includegraphics[width=3.42in]{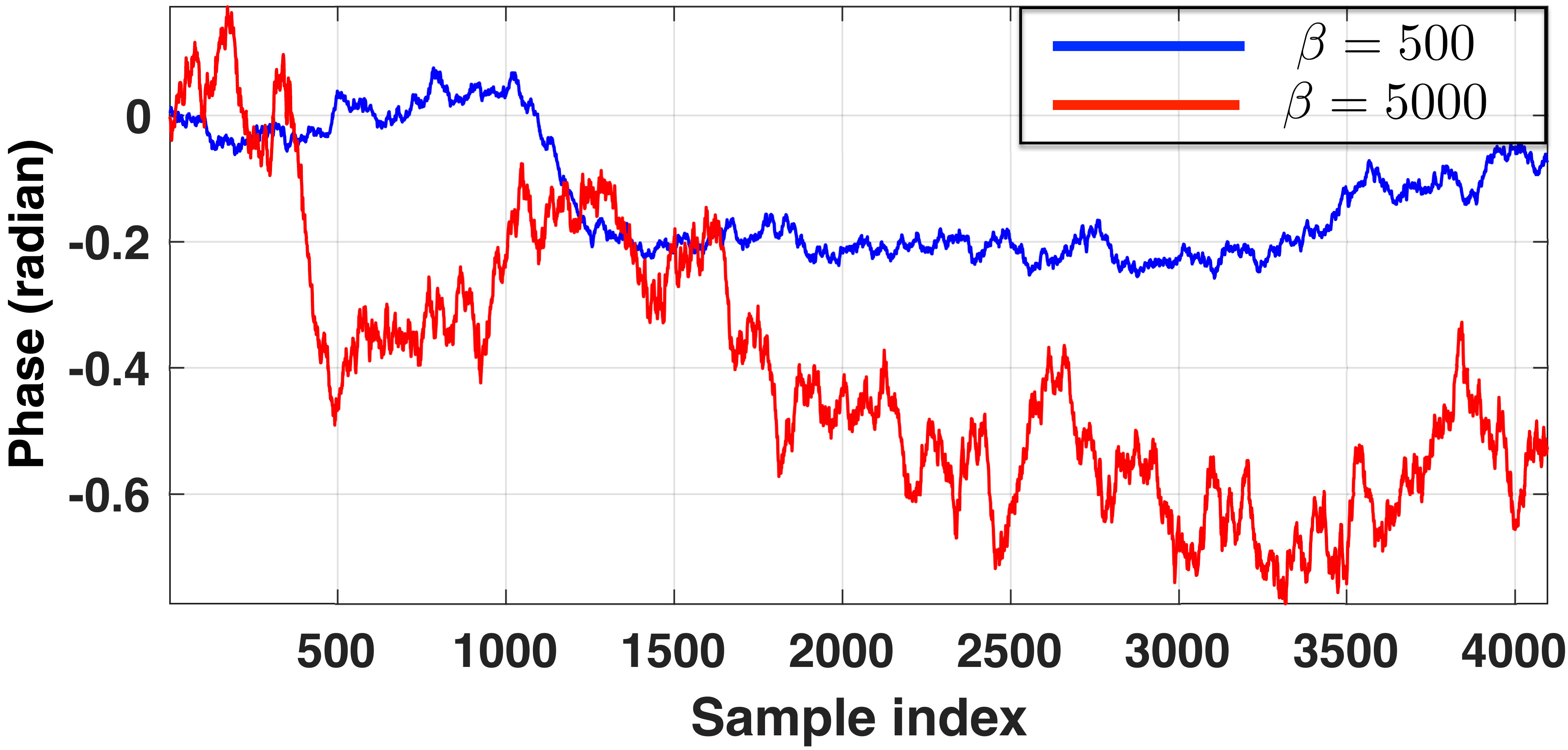}
\label{fig:pn_time}
}
\subfigure[Power spectrum density (frequency domain)]
{
\includegraphics[width=3.42in]{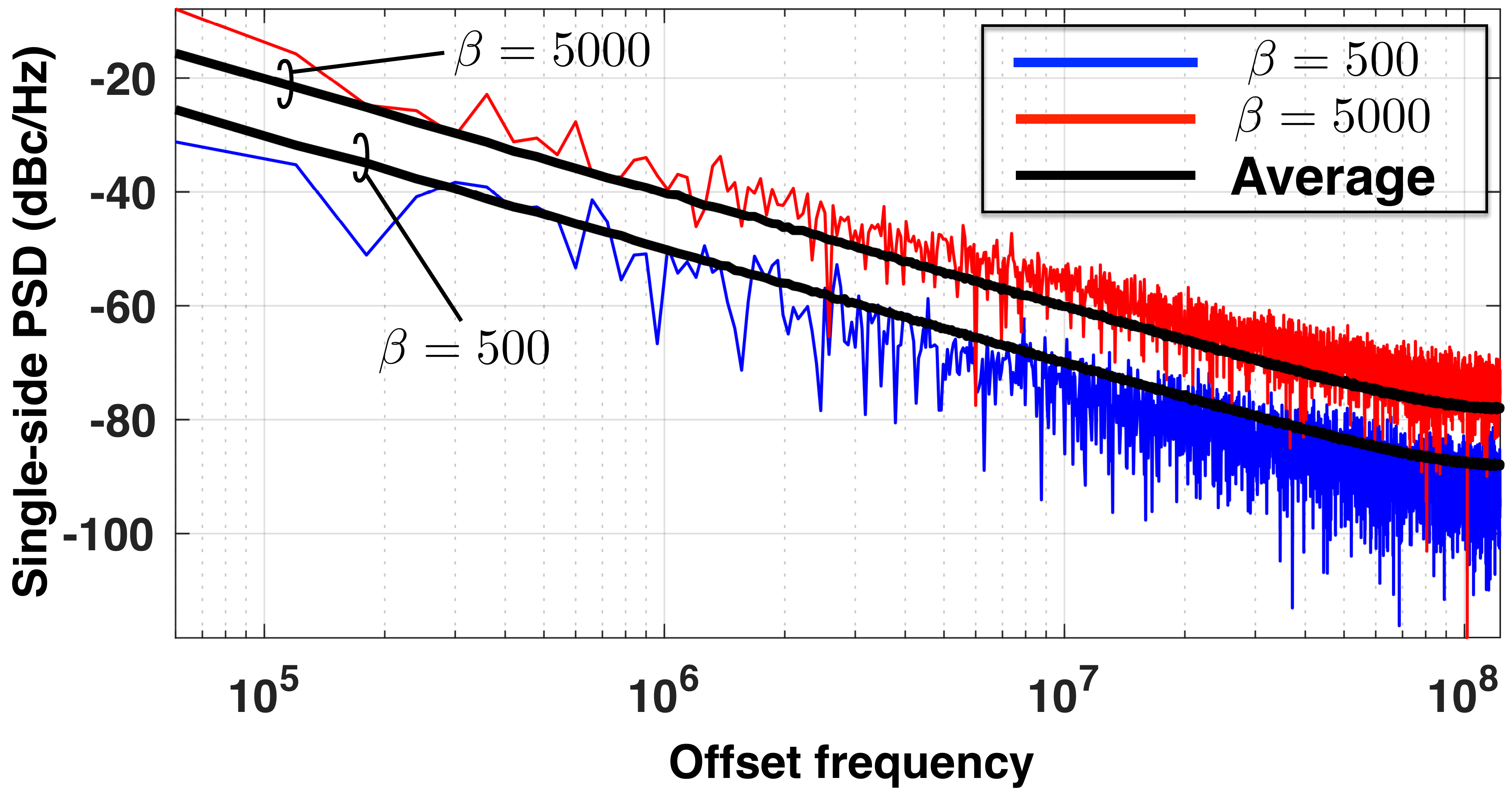}
\label{fig:pn_freq}
}
\caption{Phase-noise trajectories (one OFDM symbol with $N = 4096$) and power spectrum density of free running oscillators, with $\beta = \lbrace 500, 5000 \rbrace$, respectively. (simulated)}
\label{fig:pn_quant_anal}
%\vspace*{-0.25 cm}
\end{figure}

\subsection{NMSE of ICI-free channel}
\label{subsec:NMSE_IFC}
The NMSE for ICI-free-channel estimation is
\begin{align}
\label{eq:NMSE_IFC}
{\mathsf{NMSE}}_{{\rm c}, q}
\triangleq 
\frac{\mathbb{E} [\|\hat{\mathbf{h}}_{{\rm If}, q}
-
\tilde{\mathbf{h}}_{\rm If}\|_{2}^2]}
{\mathbb{E} [\| \tilde{\mathbf{h}}_{\rm If}\|_{2}^2]}, 
\;\;
q 
\in 
\lbrace 
{\mathsf{ls}}, {\mathsf{lmmse}}
\rbrace.
\end{align}
%%
%NMSE closed-form expression (ICI-free Channel)
From (\ref{eq:NMSE_IFC}), the NMSEs of the LS and LMMSE ICI-free-channel estimators can be derived, respectively, as
\begin{align}
\label{eq:NMSE_IFC_LS}
\begin{split}
{\mathsf{NMSE}}_{{\rm c}, \mathsf{ls}}
\hspace{-0.8mm}
&=
\hspace{-0.8mm}
\frac{
{\rm tr}
\lbrace
\mathbf{R}_
{\tilde{\mathbf{y}}\tilde{\mathbf{y}}}
-
\mathbf{R}_
{\tilde{\mathbf{h}}\tilde{\mathbf{h}}}
\rbrace
}
{
{\rm tr}
\lbrace
\mathbf{R}_
{\tilde{\mathbf{h}}\tilde{\mathbf{h}}}
\rbrace
}\\
&=
\hspace{-0.8mm}
\frac{
{\rm tr}
\big\{
\bar{G}
\lbrace
1
+
(1/\mathsf{SNR})
\rbrace
\mathbf{I}_{N_{\rm c}}
-
\bar{G}
(
1
-
\sigma_{\varepsilon}^{2}
)
\mathbf{I}_{N_{\rm c}}
\big\}
}
{
{\rm tr}
\lbrace
\bar{G}
(
1
-
\sigma_{\varepsilon}^{2}
)
\mathbf{I}_{N_{\rm c}}
\rbrace
}
\\
&=
\hspace{-0.8mm}
\frac
{
N_{\rm c}
\bar{G}
\lbrace
1+
(1/\mathsf{SNR})
-
1+
\sigma_{\varepsilon}^{2}
\rbrace
}
{
N_{\rm c}
\bar{G}
(
1-
\sigma_{\varepsilon}^{2}
)
}
\hspace{-0.8mm}
=
\hspace{-0.8mm}
\frac{
\frac{1}{\mathsf{SNR}}
+
\sigma_{\varepsilon}^{2}
}
{
1-
\sigma_{\varepsilon}^{2}
}, \text{and}
\end{split}	
\end{align} 
\begin{align}
\label{eq:NMSE_IFC_LMMSE}
\begin{split}
{\mathsf{NMSE}}_{{\rm c}, \mathsf{lmmse}}
\hspace{-0.8mm}
&=
\hspace{-0.8mm}
\frac{
{\rm tr}
\big\{
\mathbf{R}_{\tilde{\mathbf{h}}\tilde{\mathbf{h}}}
-
\mathbf{R}_{\tilde{\mathbf{h}}\tilde{\mathbf{y}}}
\mathbf{R}_{\tilde{\mathbf{y}}\tilde{\mathbf{y}}}^{-1}
\mathbf{R}_{\tilde{\mathbf{y}}\tilde{\mathbf{h}}}
\big\}
}
{
{\rm tr}
\lbrace
\mathbf{R}_{\tilde{\mathbf{h}}\tilde{\mathbf{h}}}
\rbrace
}\\
&=
\hspace{-0.8mm}
\frac{
{\rm tr}
\Big\{
\bar{G}
(
1
-
\sigma_{\varepsilon}^2
)
\mathbf{I}_{N_{\rm c}}
-
\frac{
\lbrace
\bar{G}
(
1
-
\sigma_{\varepsilon}^2
)
\rbrace^{2}
}
{\bar{G}
\lbrace
1 + (1/\mathsf{SNR})
\rbrace
}
(\mathbf{X}_{\rm f}^{\rm c})^{\rm H}
\mathbf{X}_{\rm f}^{\rm c}
\Big\}
}
{
{\rm tr}
\lbrace
\bar{G}
(1-\sigma_{\varepsilon}^{2})
\mathbf{I}_{N_{\rm c}}
\rbrace
}
\\
&=
\hspace{-0.8mm}
\frac{
{\rm tr}
\Big\{
\bar{G}
(
1
-
\sigma_{\varepsilon}^2
)
\Big\{
1
-
\frac{
\bar{G}
(
1
-
\sigma_{\varepsilon}^2
)
}
{\bar{G}
\lbrace
1 + (1/\mathsf{SNR})
\rbrace
}
\Big\}
\mathbf{I}_{N_{\rm c}}
\Big\}
}
{
{\rm tr}
\lbrace
\bar{G}
(1-\sigma_{\varepsilon}^{2})
\mathbf{I}_{N_{\rm c}}
\rbrace
}\\
&=
\hspace{-0.8mm}
\frac{
N_{\rm c}
\bar{G}
(
1
-
\sigma_{\varepsilon}^2
)
\Big\{
1
-
\frac{
1
-
\sigma_{\varepsilon}^2
}
{
\lbrace
1 + (1/\mathsf{SNR})
\rbrace
}
\Big\}
}
{
N_{\rm c}
\bar{G}
(1-\sigma_{\varepsilon}^{2})
}\\
&=
\hspace{-0.8mm}
\frac{
\sigma_{\varepsilon}^{2}
+
(1/\mathsf{SNR})
}
{1+
(1/\mathsf{SNR})
}
=
\frac
{
1
+
\sigma_{\varepsilon}^{2}
\mathsf{SNR}
}
{
1
+
\mathsf{SNR}
},
\end{split}
\end{align}
%NMSE closed-form expression (ICI-free Channel)
%\begin{figure*}[t]
%\begin{align}
%\label{eq:NMSE_IFC_LMMSE}
%\begin{split}
%{\mathsf{NMSE}}_{{\rm c}, \mathsf{lmmse}}
%&=
%\frac{
%{\rm tr}
%\big\{
%\mathbf{R}_{\tilde{\mathbf{h}}\tilde{\mathbf{h}}}
%-
%\mathbf{R}_{\tilde{\mathbf{h}}\tilde{\mathbf{y}}}
%\mathbf{R}_{\tilde{\mathbf{y}}\tilde{\mathbf{y}}}^{-1}
%\mathbf{R}_{\tilde{\mathbf{y}}\tilde{\mathbf{h}}}
%\big\}
%}
%{
%{\rm tr}
%\lbrace
%\mathbf{R}_{\tilde{\mathbf{h}}\tilde{\mathbf{h}}}
%\rbrace
%}
%=
%\frac{
%{\rm tr}
%\Big\{
%\bar{G}
%(
%1
%-
%\sigma_{\varepsilon}^2
%)
%\mathbf{I}_{N_{\rm c}}
%-
%\frac{
%\lbrace
%\bar{G}
%(
%1
%-
%\sigma_{\varepsilon}^2
%)
%\rbrace^{2}
%}
%{\bar{G}
%\lbrace
%1 + (1/\mathsf{SNR})
%\rbrace
%}
%(\mathbf{X}_{\rm f}^{\rm c})^{\rm H}
%\mathbf{X}_{\rm f}^{\rm c}
%\Big\}
%}
%{
%{\rm tr}
%\lbrace
%\bar{G}
%(1-\sigma_{\varepsilon}^{2})
%\mathbf{I}_{N_{\rm c}}
%\rbrace
%}
%\\
%&=
%\frac{
%{\rm tr}
%\Big\{
%\bar{G}
%(
%1
%-
%\sigma_{\varepsilon}^2
%)
%\Big\{
%1
%-
%\frac{
%\bar{G}
%(
%1
%-
%\sigma_{\varepsilon}^2
%)
%}
%{\bar{G}
%\lbrace
%1 + (1/\mathsf{SNR})
%\rbrace
%}
%\Big\}
%\mathbf{I}_{N_{\rm c}}
%\Big\}
%}
%{
%{\rm tr}
%\lbrace
%\bar{G}
%(1-\sigma_{\varepsilon}^{2})
%\mathbf{I}_{N_{\rm c}}
%\rbrace
%}
%=
%\frac{
%N_{\rm c}
%\bar{G}
%(
%1
%-
%\sigma_{\varepsilon}^2
%)
%\Big\{
%1
%-
%\frac{
%1
%-
%\sigma_{\varepsilon}^2
%}
%{
%\lbrace
%1 + (1/\mathsf{SNR})
%\rbrace
%}
%\Big\}
%}
%{
%N_{\rm c}
%\bar{G}
%(1-\sigma_{\varepsilon}^{2})
%}\\
%&=
%\frac{
%\sigma_{\varepsilon}^{2}
%+
%(1/\mathsf{SNR})
%}
%{1+
%(1/\mathsf{SNR})
%}
%=
%\frac
%{
%1
%+
%\sigma_{\varepsilon}^{2}
%\mathsf{SNR}
%}
%{
%1
%+
%\mathsf{SNR}
%}.
%\end{split}
%\end{align}
%\end{figure*}
%%
where ${\mathbf{R}}_{\tilde{\mathbf{y}}\tilde{\mathbf{y}}} \triangleq \mathbb{E}\lbrace \tilde{\mathbf{y}}_{\rm If}\tilde{\mathbf{y}}_{\rm If}^{\rm H}\rbrace$, ${\mathbf{R}}_{\tilde{\mathbf{h}}\tilde{\mathbf{h}}} \triangleq \mathbb{E}\lbrace \tilde{\mathbf{h}}_{\rm If}\tilde{\mathbf{h}}_{\rm If}^{\rm H}\rbrace$, and ${\mathbf{R}}_{\tilde{\mathbf{h}}\tilde{\mathbf{y}}} \triangleq \mathbb{E}\lbrace \tilde{\mathbf{h}}_{\rm If}\tilde{\mathbf{y}}_{\rm If}^{\rm H}\rbrace$. Both NMSE expressions (\ref{eq:NMSE_IFC_LS}) and (\ref{eq:NMSE_IFC_LMMSE}) can be formulated by only the average SNR and the effective-error variance.
\begin{figure*}[t!]
\centering
\subfigure[$\beta = 500$]
{
\includegraphics[width=3.4in]{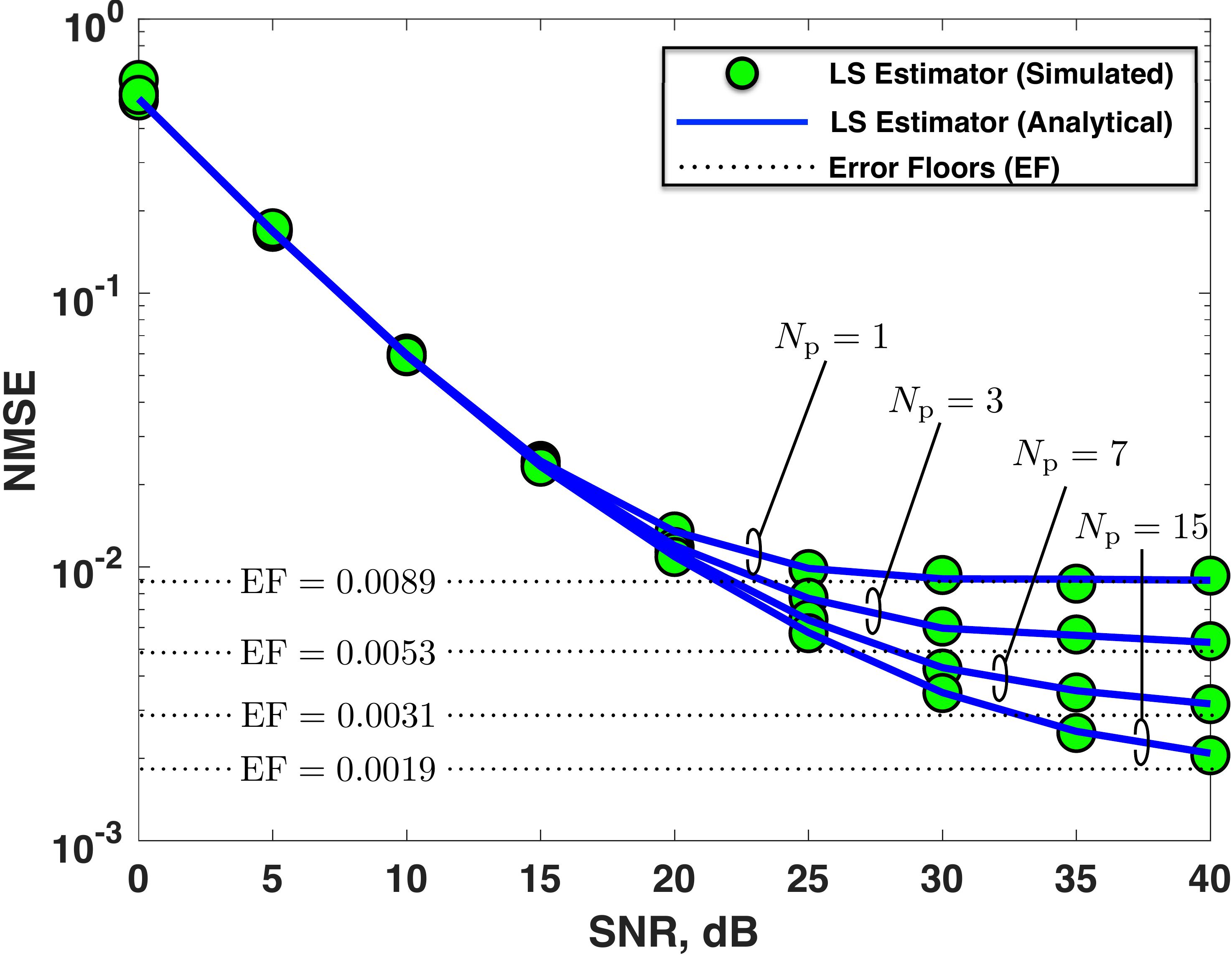}
\label{fig:NMSE_IFC_LS_500}
}
\subfigure[$\beta = 5000$]
{
\includegraphics[width=3.4in]{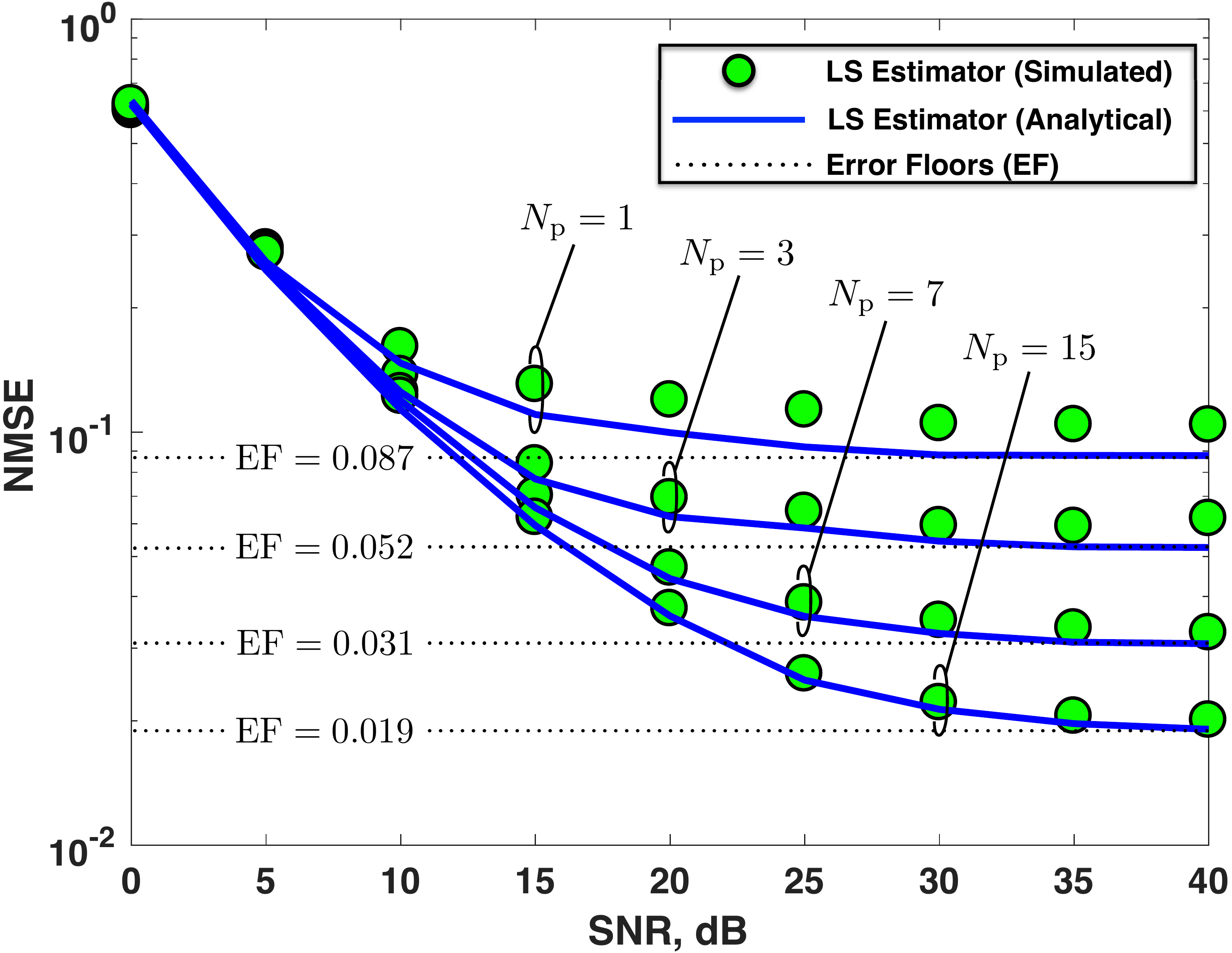}
\label{fig:NMSE_IFC_LS_5000}
}
\caption{NMSE of LS ICI-free-channel estimator as a function of $\mathsf{SNR}$ for $\beta \in \lbrace 500, 5000\rbrace$ and $\mathcal{P}_{\rm d} = \lbrace 1, 3, 7, 15 \rbrace$. Also shown are the error floors corresponding to the elements in $\mathcal{P}_{\rm d}$. The error floors are obtained by the NMSE expression in~(\ref{eq:NMSE_IFC_LS}) with $\mathsf{SNR}= \infty$.}
\label{fig:NMSE_IFC_LS}
\vspace*{-0.25 cm}
\end{figure*}
\begin{figure*}[t!]
\centering
\subfigure[$\beta = 500$]
{
\includegraphics[width=3.4in]{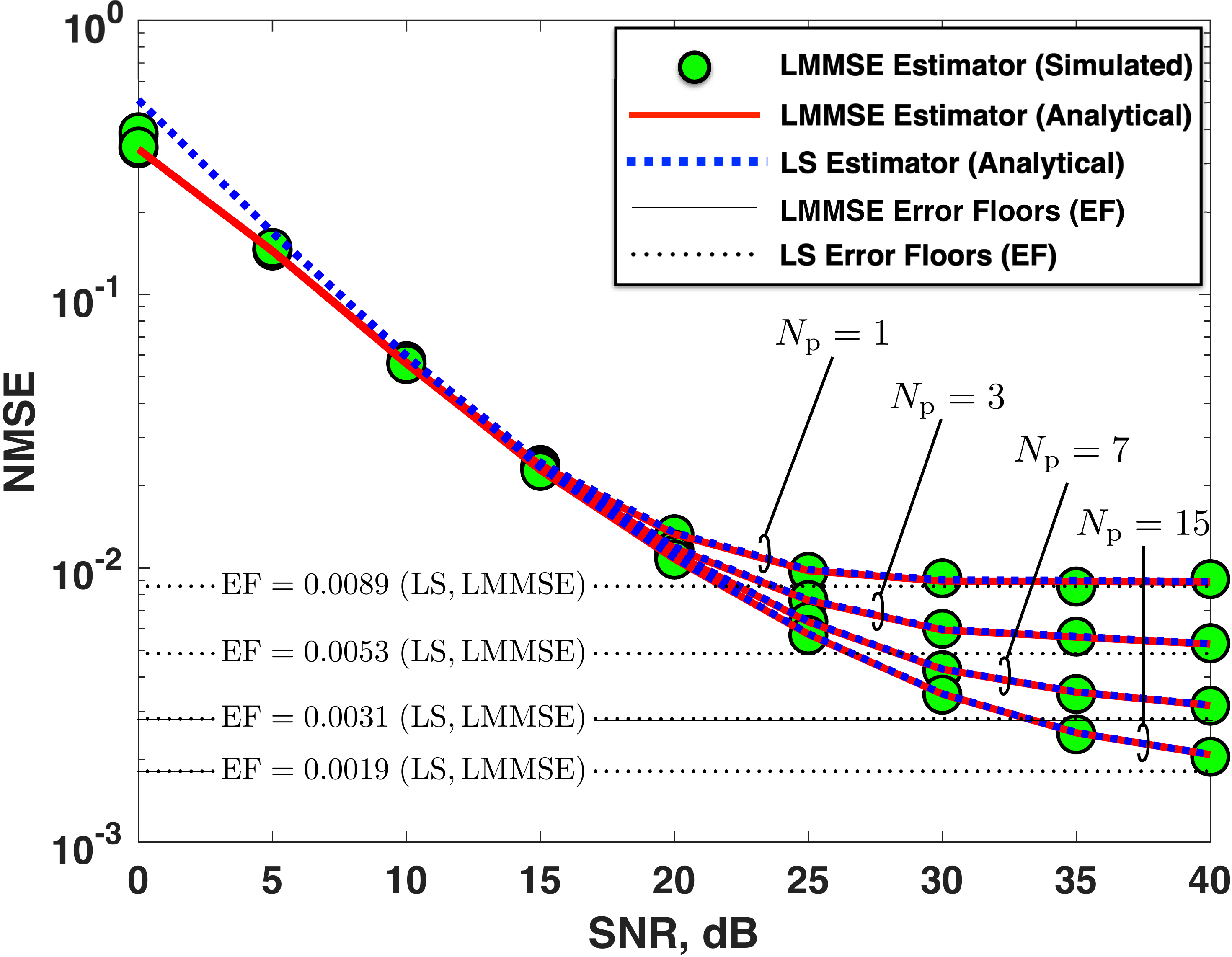}
\label{fig:NMSE_IFC_LMMSE_500}
}
\subfigure[$\beta = 5000$]
{
\includegraphics[width=3.4in]{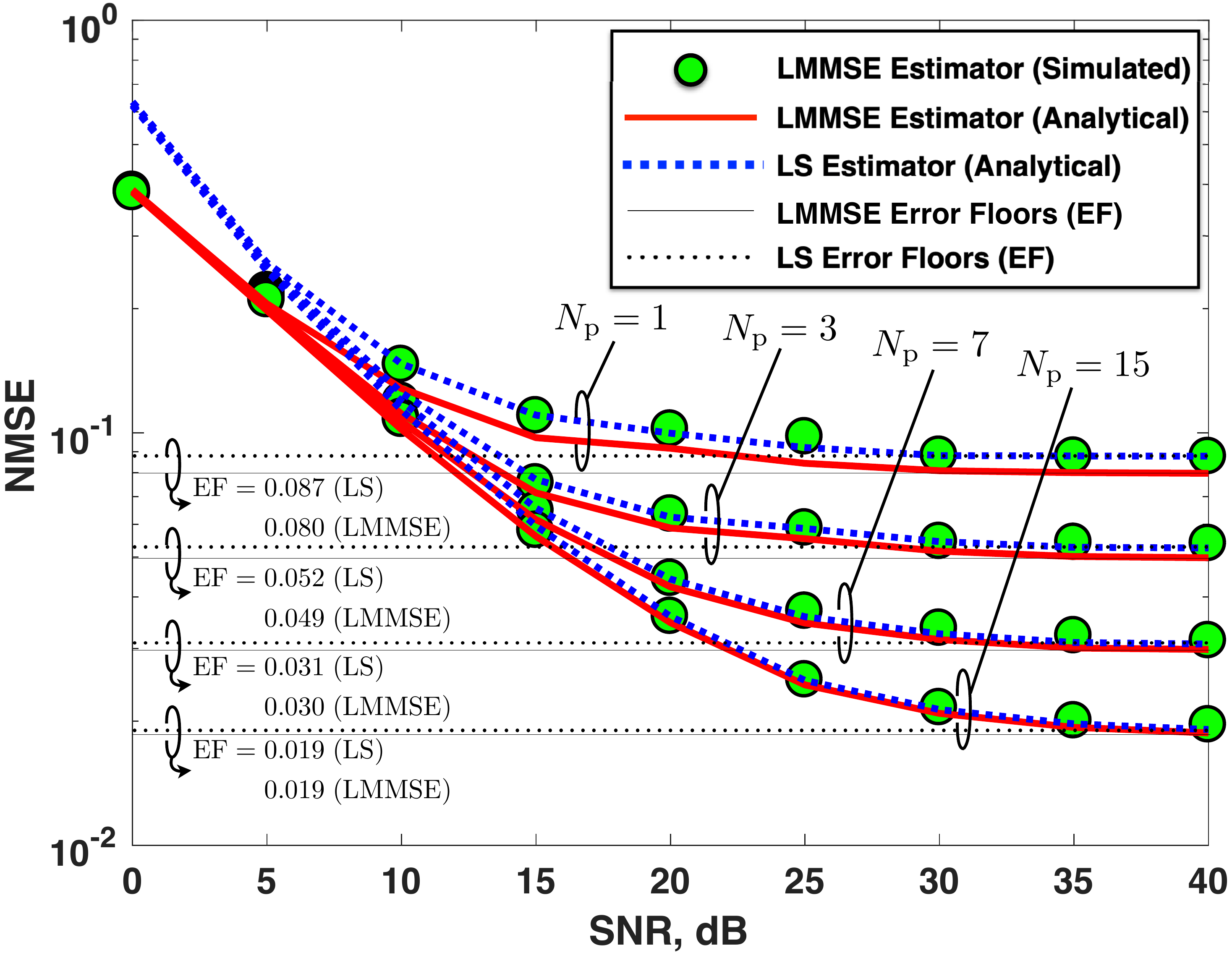}
\label{fig:NMSE_IFC_LMMSE_5000}
}
\caption{NMSE of LMMSE ICI-free-channel estimator as a function of $\mathsf{SNR}$ for $\beta \in \lbrace 500, 5000\rbrace$ and $\mathcal{P}_{\rm d} = \lbrace 1, 3, 7, 15 \rbrace$. For comparison, the NMSE curves of LS estimator corresponding to the elements in $\mathcal{P}_{\rm d}$~(dotted blue line) are included. Also shown are the error floors corresponding to the elements in $\mathcal{P}_{\rm d}$. The error floors are obtained by (\ref{eq:NMSE_IFC_LS_HighSNR}) and (\ref{eq:NMSE_IFC_LMMSE_HighSNR}), respectively.}
\label{fig:NMSE_IFC_LMMSE}
\vspace*{-0.5 cm}
\end{figure*}
\begin{remark}\normalfont
\label{remark:NMSE_IFC}
{\textbf{(NMSE floor of ICI-free-channel estimation):}}
To present the NMSE floor of ICI-free-channel estimation, which bounds the achievable NMSE for linear estimators, let us look at the NMSEs in the high SNR regime. The NMSEs of LS/LMMSE ICI-free-channel estimators are lower-bounded by, respectively, \ie, ${\mathsf{NMSE}}_{{\rm c}, q} \geq {\mathsf{NMSE}}_{{\rm c}, q}^{\mathsf{lb}}, \; q \in \lbrace \mathsf{ls}, \mathsf{lmmse} \rbrace$,
\begin{align}
\label{eq:NMSE_IFC_LS_HighSNR}
{\mathsf{NMSE}}_{{\rm c}, \mathsf{ls}}^{\mathsf{lb}}
=
\lim_{{\mathsf{SNR}}\to\infty} 
{\mathsf{NMSE}}_{{\rm c}, \mathsf{ls}}
= 
\frac
{\sigma_{\varepsilon}^{2}}
{1 - \sigma_{\varepsilon}^{2}}, \text{and}
\end{align}
\begin{align}
\label{eq:NMSE_IFC_LMMSE_HighSNR}
{\mathsf{NMSE}}_{{\rm c}, \mathsf{lmmse}}^{\mathsf{lb}}
=
\lim_{{\mathsf{SNR}}\to\infty} 
{\mathsf{NMSE}}_{{\rm c}, \mathsf{lmmse}}
= 
\sigma_{\varepsilon}^{2}.
\end{align}
In the case where the effective-error variance is small enough ($\sigma_{\varepsilon}^{2} \ll 1$), the lower bound of LS ICI-free-channel estimation (\ref{eq:NMSE_IFC_LS_HighSNR}) can be approximated as ${\mathsf{NMSE}}_{{\rm c}, \mathsf{ls}}^{\mathsf{lb}} \approx \sigma_{\varepsilon}^{2}$, resulting in the same NMSE floor as the LMMSE estimator. 
\end{remark}
Comparisons of the NMSE expressions for LS/LMMSE ICI-free-channel estimation (\ref{eq:NMSE_IFC_LS}) and (\ref{eq:NMSE_IFC_LMMSE}) with their simulation results are shown in Fig.~\ref{fig:NMSE_IFC_LS}-\ref{fig:NMSE_IFC_LMMSE}. In the numerical evaluation, we used the LMMSE PN-affected-channel estimator.  All figures have good agreements. The NMSE gap between LS and LMMSE estimators decreases as the SNR increases. In Fig.~\ref{fig:NMSE_IFC_LMMSE_500}, it is observed that he  LS and LMMSE NMSE floors are equal (rounded to fourth decimal place), as analyzed in Remark~\ref{remark:NMSE_IFC}. However, in the more severe \ac{PN} case ($\beta = 5000$), higher effective-error variance arises, translating into a gap between LS and LMMSE NMSE floors shown in Fig.~\ref{fig:NMSE_IFC_LMMSE_5000}.  
\begin{table*}[t]
\caption{Computational complexity comparison.}
\renewcommand{\arraystretch}{1.4} %line spacing
\centering 
\setlength{\abovecaptionskip}{0pt}
\scalebox{0.9}{
\begin{threeparttable}
\begin{tabular} { >{\centering\arraybackslash}p{1.7cm} |   >{\centering\arraybackslash}p{1.1cm} ||   >{\centering\arraybackslash}p{3.7cm} | >{\centering\arraybackslash}p{3.7cm} ||
    >{\centering\arraybackslash}p{2.6cm}}
    \multicolumn{2}{c||}{\multirow{2}{*} {}} & \multicolumn{2}{c||} {\bf{Estimation}} & \bf{Compensation}\\
    \cline{3-5} 
\multicolumn{2}{c||}{{}} & \bf{Phase Noise} & \bf{Wireless Channel} & \bf{Phase Noise}\\
\hline\hline
\multirow{2}{*}{\centering{\bf{Proposed}}\tnote{\P}} & \bf{LS} & $\mathcal{O}(0)$ & $\mathcal{O}(N_{\rm c})$ &  \multirow{2}{*} {$\mathcal{O}(NN_{\rm p})$}\\
\cline{2-4}
{} & \bf{LMMSE} & $\mathcal{O} \big( N_{\rm p}^{2} (N_{\rm p} + 1) \big)$ & $\mathcal{O}(N_{\rm c}^{2})$ & {}\\ 
\hline
\multicolumn{2}{c||}{\centering{\cite{rab10}\tnote{\dag}}}
& $\mathcal{O}(N_{\rm p}^{3}+NL)$ & $\mathcal{O}\big(N(N_{\rm p}+L) \big)$ & $\mathcal{O}(NN_{\rm p})$\\  
\hline
\multicolumn{2}{c||}{\centering{\cite{zou07}}}
 & \multicolumn{2}{c||}{$\mathcal{O}(N_{\rm it}N^{3})$\tnote{\ddag}} & $\mathcal{O}(N)$\\
\hline
\multicolumn{2}{c||}{\centering{\cite{wang17}}}
 & $\mathcal{O}(N^{2}{\rm{log}_{2}}{N} + N_{\rm it}N)$\tnote{\ddag} & $\mathcal{O}(N)$ & $\mathcal{O}(N)$\\
\end{tabular}
\begin{tablenotes}
\footnotesize
\item[\P] In the proposed method, the PN-affected and ICI-free channels are applied instead of \ac{PN} and wireless channel, respectively.
\item[\dag] $L$ denotes the number of effective channel taps in time domain. 
\item[\ddag] $N_{\rm it}$ denotes the number of iterations required.
\end{tablenotes}
\end{threeparttable}
}
%\vspace*{-.2cm}
\label{table:complex}
\end{table*}

\section{Pilot Overhead and Complexity Analysis}
\label{sec:OH_Complex_Analysis}
Our proposed algorithm translates into a practical \ac{PN} estimation/compensation for \ac{mmWave} \ac{OFDM} systems. To derive this, we address the pilot-overhead and the computational complexity of our proposed method.
\subsection{Pilot Overhead Analysis}
\label{subsec:OH}
Recall that the resource allocation in each $\mathcal{S}$~($| \mathcal{S} | = NN_{\rm ct}$) is identical where $\mathcal{S}$ is a set of coherence blocks across $N$ subcarriers as illustrated in~\fig\ref{fig:tx_structure}. The pilot overhead is defined as $\rho_{\rm oh} \triangleq N_{\rm tp} / NN_{\rm ct}$ where $N_{\rm tp}$ is the total number of pilots. The following theorem provides the minimum pilot-overhead of the proposed algorithm.
\begin{theorem}
\label{thm:overhead}
Supposing a set of system parameters $\lbrace N, N_{\rm ct}, N_{\rm c}, N_{\rm p} \rbrace$, the minimum pilot-overhead for the PN-affected- and ICI-free-channel estimation  is 
\begin{align}
\label{eq:min_OH_ratio}
\rho_{\rm oh}
=
\frac
{
N_{\rm ct}(2N_{\rm p}-1)
+
(N_{\rm c}-1)
}
{NN_{\rm ct}}.	
\end{align}
\end{theorem}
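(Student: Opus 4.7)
The plan is to count, in one representative coherence-region $\mathcal{S}$ (of cardinality $NN_{\rm ct}$), the minimum number of pilots needed for both estimation stages in Sections~\ref{subsec:pnac_est}--\ref{subsec:ICIfree_est}, then divide by $NN_{\rm ct}$. Since the allocation pattern is identical across all $\mathcal{S}_{0},\ldots,\mathcal{S}_{N_{\rm c}-1}$, it suffices to analyze a single $\mathcal{S}$.

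First I would count the pilots needed for PN-affected-channel estimation within one OFDM symbol. By Theorem~\ref{thm:optimal_pilot_design}, the optimal pilot matrix $\mathbf{X}_{{\rm f},\gamma}^{\rm p}$ is the $N_{\rm p}\times N_{\rm p}$ Toeplitz matrix in (\ref{eq:optimal_pilot_2}), whose entries are drawn from a sliding window over $2N_{\rm p}-1$ consecutive frequency-domain pilot symbols. That is the smallest number of contiguous PN-dedicated pilots yielding a rank-$N_{\rm p}$ linear system for the $N_{\rm p}$ unknowns of $\mathbf{f}_{\bar{\rm p}}$: one more column would require an extra pilot index, while one fewer would reduce the rank below $N_{\rm p}$. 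Since PN is fast-varying and must be re-estimated every OFDM symbol (see Section~\ref{subsec:ts}), this cost is incurred in each of the $N_{\rm ct}$ OFDM symbols spanning $\mathcal{S}$, contributing $N_{\rm ct}(2N_{\rm p}-1)$ pilots.

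Next I would count the pilots needed for ICI-free-channel estimation. Inversion of the $N_{\rm c}$ diagonal coefficients of $\tilde{\mathbf{H}}_{\rm If}$ through $\mathbf{V}_{q}$ in (\ref{eq:V_ls})--(\ref{eq:V_lmmse}) requires exactly $N_{\rm c}$ non-zero pilot entries in $\mathbf{x}_{\rm f}^{\rm c}$, one per frequency-domain coherence block. Because the channel is invariant across the $N_{\rm ct}$ OFDM symbols in $\mathcal{S}$, these CH-dedicated pilots need only populate a single OFDM symbol in $\mathcal{T}_{\rm p}$. Crucially, as noted immediately before~(\ref{eq:y_tilde}), one of the PN-dedicated pilots in $\mathcal{F}_{\rm p}$ can be reused as the first entry of $\mathbf{x}_{\rm f}^{\rm c}$, so only $N_{\rm c}-1$ additional pilots must be inserted. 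Summing the two contributions gives $N_{\rm tp}=N_{\rm ct}(2N_{\rm p}-1)+(N_{\rm c}-1)$, and dividing by $NN_{\rm ct}$ yields (\ref{eq:min_OH_ratio}).

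The main obstacle is not sufficiency, which follows directly from the explicit constructions, but minimality: one must rule out more economical alternative allocations. The key observations I would invoke are that (i) any pilot pattern producing at least $N_{\rm p}$ linearly independent observations of $\mathbf{f}_{\bar{\rm p}}$ through the circular convolution in~(\ref{eq:yfreq1}) must occupy a window of at least $2N_{\rm p}-1$ consecutive subcarriers, and (ii) the $N_{\rm c}$ diagonal ICI-free-channel unknowns cannot be identified with fewer than $N_{\rm c}$ distinct CH pilots, and at most one of these can coincide with a PN-dedicated pilot because only a single PN-dedicated group lies within one coherence block in frequency (the others being located in the remaining blocks). Together these lower bounds match the upper bound from the construction, establishing the equality in~(\ref{eq:min_OH_ratio}).
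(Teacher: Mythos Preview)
Your proposal is correct and follows essentially the same counting argument as the paper: invoke Theorem~\ref{thm:optimal_pilot_design} to get $2N_{\rm p}-1$ PN-dedicated pilots per OFDM symbol, multiply by $N_{\rm ct}$, add the $N_{\rm c}-1$ extra CH-dedicated pilots (with one PN pilot reused), and divide by $NN_{\rm ct}$. Your explicit discussion of minimality in points (i)--(ii) actually goes a bit beyond the paper, whose proof only argues sufficiency of the construction; one small wording slip is that the allocation pattern is \emph{not} identical across the frequency-domain blocks $\mathcal{S}_{0},\ldots,\mathcal{S}_{N_{\rm c}-1}$ (only $\mathcal{F}_{\rm p}$ carries PN pilots), but rather repeats across successive time-coherence windows $\mathcal{S}$---this does not affect your count.
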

\begin{proof}
Consider the allocation of PN- and CH-dedicated pilots in the $\mathcal{S}$. It is shown in Theorem~\ref{thm:optimal_pilot_design} that $(2N_{\rm p}-1)$ PN-dedicated pilots are required to estimate $N_{\rm p}$ PN-affected-channel coefficients. The PN-affected-channel estimation for each OFDM symbol leads the allocation of $N_{\rm ct}(2N_{\rm p}-1)$ PN-dedicated pilots in the $\mathcal{S}$. Recall that $(N_{\rm c}-1)$ CH-dedicated pilots are additionally needed for ICI-free-channel estimation over $N_{\rm ct}$ OFDM symbols. Hence (\ref{eq:min_OH_ratio}) can be clearly derived.  
\end{proof}
We provide an example below to help the understanding of how much the pilot overhead for our proposed algorithm is, as compared to the conventional cellular systems.

{\it{Example 4 (Comparison with the Cell-Specific Reference Symbol Overhead of Conventional Cellular Systems):}} In this example, let us consider a set of parameters{\footnote{A resource block in LTE systems consists of 12 consecutive subcarriers and 7 OFDM symbols. 100 resource blocks are used to support \SI{20}{\MHz} bandwidth. Thus, the number of occupied subcarriers is 1200~\cite{sesia2009lte}. In this example, we use the number of occupied subcarriers for $N$.}} in Long-Term Evolution~(LTE) systems supporting \SI{20}{\MHz} channel bandwidth: $N = 1200$, $N_{\rm ct} = 7$, $N_{\rm c} = 100$. We assume that one Cell-Specific Reference Symbol~(CRS) is allocated for a resource block, \ie, $N_{\rm tp} = N_{\rm c}$. Based on this parameter set, therefore, the CRS overhead $\rho_{\rm oh, crs} \triangleq N_{\rm c}/(NN_{\rm ct})$ is $\SI{1.19}{\%}$, which does not include the overhead for \ac{PN} estimation. Consider the set of the number of dominant \ac{PN} components $\mathcal{P}_{\rm d} = \lbrace 1, 3, 7,  15 \rbrace$. The corresponding minimum pilot-overhead ratios from (\ref{eq:min_OH_ratio}) are \SI{1.26}{\%}, \SI{1.60}{\%}, \SI{2.26}{\%}, and \SI{3.60}{\%}, respectively. These are quite reasonable values for the practical use of our algorithm.

\subsection{Computational Complexity Analysis}
\label{subsec:complex}
In this subsection, we investigate the computational complexity of the PN-affected-/ICI-free-channel estimation and the ICI suppression (\ac{PN} compensation). Since the LS PN-affected-channel estimator (\ref{eq:Q_LS}) is an identity matrix, no computation is required for obtaining $\hat{\mathbf{f}}_{{\bar{\rm p}},\mathsf{ls}}$. The LMMSE PN-affected-channel estimator (\ref{eq:Q_LMMSE}) and the matrix-vector multiplication (\ref{eq:f_p_hat_LS}) have a complexity of respectively $\mathcal{O} \big( N_{\rm p}^{3} \big)$ and $\mathcal{O} \big( N_{\rm p}^{2} \big)$, leading to a total complexity in the order of $\mathcal{O} \big( N_{\rm p}^{2} (N_{\rm p} + 1) \big)$. According to (\ref{eq:V_ls})\hspace{0.2mm}--\hspace{0.2mm}(\ref{eq:h_If_est}), the complexity order of the LS/LMMSE ICI-free-channel estimation is $\mathcal{O}(N_{\rm c})$ and $\mathcal{O}(N_{\rm c}^{2})$, respectively. As described in Section~\ref{subsec:deconv}, the \ac{PN} compensation in the proposed method is performed in the frequency domain. Recall that the \ac{PN} effect is a circular convolution process in the frequency domain. Hence the \ac{PN} compensation process is the deconvolution{\footnote{The deconvolution of two length-$N$ sequences is equivalent to their polynomial division where the polynomial coefficients correspond the coefficients in each sequence, and its operation has a complexity $\mathcal{O}(N^{2})$.}} of the received signal and the \ac{PN} estimate in frequency. It results in a complexity of $\mathcal{O}(N^{2})$. Since the length-$N$ PN-affected-channel estimate $\hat{\mathbf{f}}_{\rm p}$ includes only $N_{\rm p}$ nonzero values, the deconvolution~(\ref{eq:y_If}) has a complexity $\mathcal{O}({NN_{\rm p}})$. 

The complexity comparison with existing work on low-complexity \ac{PN} estimation and compensation is shown in \tbl~\ref{table:complex}. From the relation $N \gg N_{\rm p}, N_{\rm c}$, the proposed method has lower complexity for both \ac{PN} and channel estimation than the existing solutions. Let us consider a total complexity, including joint \ac{PN}/channel estimation and \ac{PN} compensation, with mmWave system parameters{\footnote{$N$ is a \ac{3GPP} \ac{NR} parameter for mmWave communications~\cite{38_211} and $L$ is selected based on the measurement campaign result that the mean number of effective multipath components at \SI{28}{\GHz} and \SI{73}{\GHz} was 3.3 -- 7.2~\cite{samimi16}.}. For example, if $N=4096$, $N_{\rm p} = 7$, $N_{\rm c} = 100$, $L = 5$, and $N_{\rm it} = 1$, the proposed method with the LMMSE estimation obtains a reduction of $2.53 \times$, $(1.76\hspace{-0.7mm}\times\hspace{-0.7mm}10^{6}) \times$, and $(5.15\hspace{-0.7mm}\times\hspace{-0.7mm}10^{3}) \times$, respectively, in the total complexity, as compared to~\cite{rab10, zou07, wang17}. Furthermore, all of these existing solutions require a full-pilot \ac{OFDM} symbol to perform joint \ac{PN} and channel estimation, which leads to significant pilot overhead to tackle the problem of fast-varying \ac{PN} estimation.

\section{Trade-Off Analysis}
\label{sec:tradeoff}
This section uses \ac{BER} and throughput to study the trade-off between performance and pilot-overhead. For the numerical evaluation, the following parameters are used: $N = 4096$, $\Delta f = \SI{60}{\kHz}$, and $N_{\rm c} = 275$, which corresponds to one \ac{3GPP} \ac{NR} signaling to support communication at \ac{mmWave} frequency~\cite{38_211}. Also, we consider the set of dominant phase-noise components $\mathcal{P}_{\rm d} \in \lbrace 1,3,7,15\rbrace$ and two kinds of 3-$\rm{dB}$ linewidth $\beta \in \lbrace 500, 5000 \rbrace$.

\subsection{Bit-Error Rate Performance}
\fig\ref{fig:ber} shows the \ac{BER} performance for an \ac{OFDM} system transmitting uncoded 16-\ac{QAM}. The \ac{BER} curves of $N_{\rm p}$-perfect \ac{PN} compensation serve as a benchmark, where it is assumed that $N_{\rm p}  \in \lbrace 3, 7\rbrace$ dominant \ac{PN} components are perfectly known, and able to be used for the compensation. The performance curve without \ac{PN} is used as another benchmark for comparison. As illustrated in \fig\ref{fig:ber_500}, the proposed method has quite good \ac{BER} performance by using the estimation of even only three significant \ac{PN} components when $\beta = 500$. In case of $N_{\rm p} \in \lbrace 3, 7 \rbrace$, it is shown that there is around $3\hspace{0.1em}{\rm dB}$ difference between perfect $N_{\rm p}$-perfect phase-noise compensation and proposed method at a \ac{BER} of $10^{-3}$. In comparison with no \ac{PN} case, there is around $3.5\hspace{0.1em}{\rm dB}$ and $5\hspace{0.1em}{\rm dB}$, respectively, for $N_{\rm p} \in \lbrace 3, 7 \rbrace$, at the same \ac{BER} level. Also, it is observed that, when $N_{\rm p} > 3$, the performance improvements by the proposed method is relatively small. It means that most of \ac{PN} energy is focused in three dominant \ac{PN} components in the $\beta = 500$ case. Whereas the \ac{BER} performance shown in \fig\ref{fig:ber_5000}~($\beta = 5000$ ) is largely improved, as more number of dominant \ac{PN} components is considered. For example, as $N_{\rm p} \in \lbrace 1, 3, 7, 15 \rbrace$ increases, their BERs at the $30\hspace{0.1em}{\rm dB}$ \ac{SNR} level, are 0.089, 0.048, 0.02, and 0.007, respectively. We can tell that there is more room to improve the \ac{BER} performance by the use of pilot-overhead, as compared to the $\beta = 500$.

\begin{figure*}[t!]
\centering
\subfigure[$\beta = 500$]
{
\includegraphics[width=3.42in]{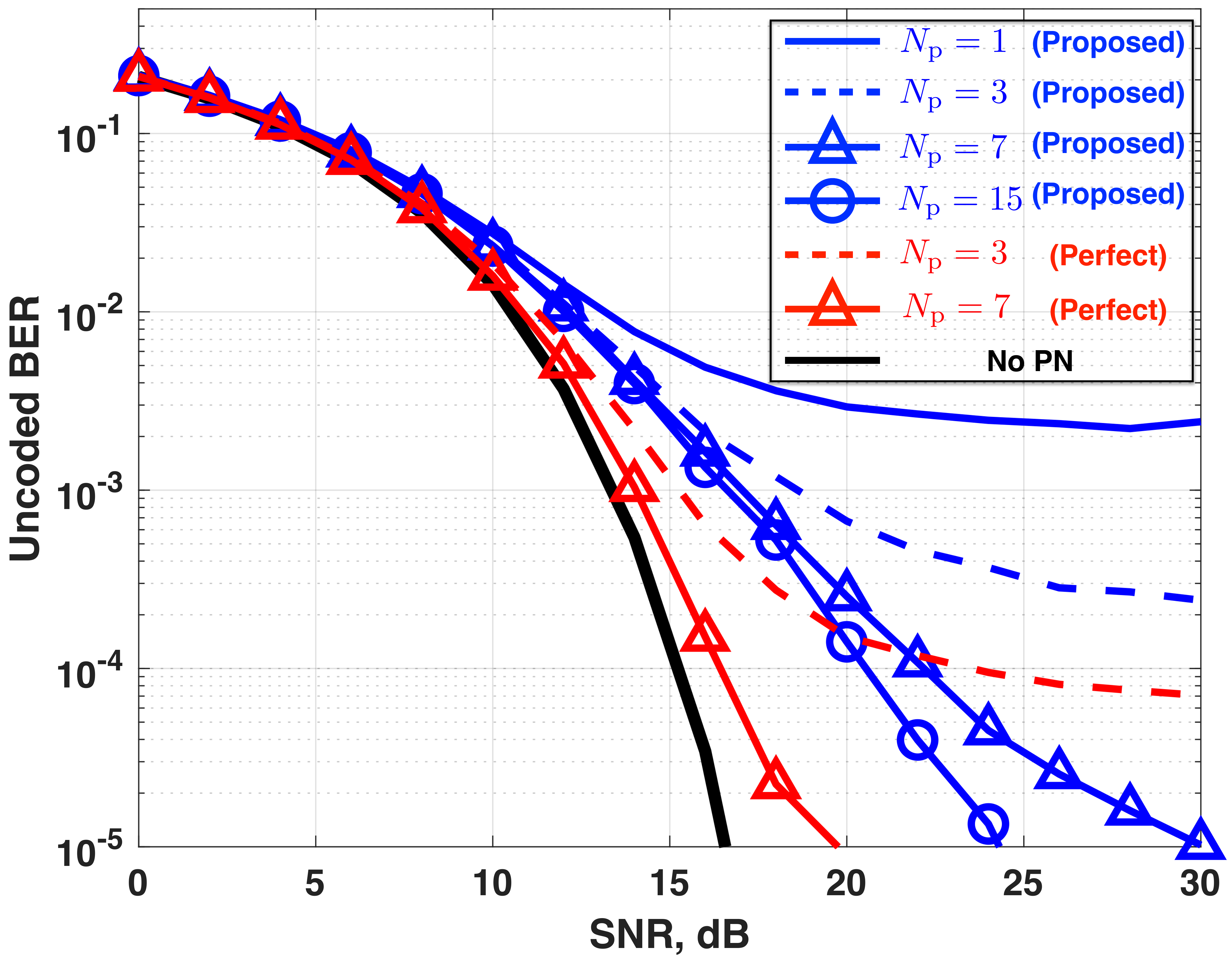}
\label{fig:ber_500}
}
\subfigure[$\beta = 5000$]
{
\includegraphics[width=3.42in]{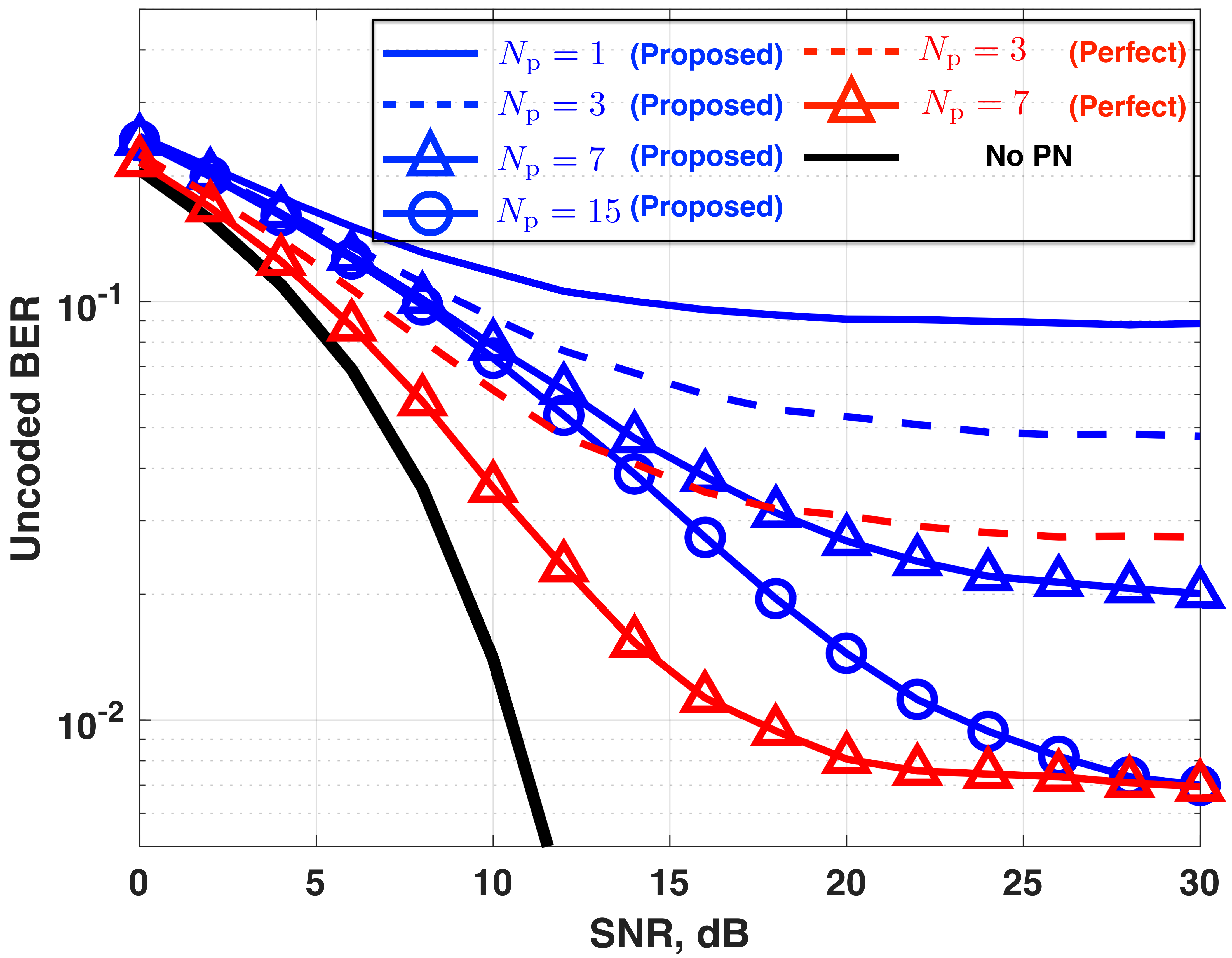}
\label{fig:ber_5000}
}
\caption{BER performance of the proposed solutions~(blue), $N_{\rm p}$-perfect \ac{PN} compensation~(red), and no \ac{PN} case~(black) with $\beta \in \lbrace 500, 5000 \rbrace$.}
\label{fig:ber}
\vspace*{-0.25 cm}
\end{figure*}

\subsection{Throughput versus Pilot-Overhead Trade-Off}
To study the trade-off between performance and pilot-overhead, we define the throughput based on \ac{3GPP} terminologies, as follows:
\begin{align}
\label{eq:thp}
{\mathsf{THP}}
=
(
1 - \rho_{\rm oh}
)
\times
\frac{
N_{\rm c} 
\times
N_{\rm re}
\times
N_{\mathsf{ofdm}}
}
{T_{\mathsf{slot}}}
\times
M_{\mathsf{qam}}
\times
(1 - \mathsf{BER})
,
\end{align}
where $N_{\rm c}$ and $N_{\rm re}$ are the number of resource blocks, resource elements in a resource block, respectively; $N_{\mathsf{ofdm}}$ is the number of \ac{OFDM} symbols per slot, $T_{\mathsf{slot}}$ the slot duration, $M_{\mathsf{qam}}$ a modulation order per resource element, $\mathsf{BER}$ the average \ac{BER}. For the numerical evaluation with (\ref{eq:thp}), the following parameters are assumed: $N_{\rm re} = 12$, $N_{\mathsf{ofdm}} = 14$, $T_{\mathsf{slot}} = \SI{0.25}{\ms}$, $M_{\mathsf{qam}} = 4$, which also corresponds to one \ac{3GPP} \ac{NR} signaling resource block to support communication at \ac{mmWave} frequency~\cite{38_211}. \fig\ref{fig:thp} shows the throughput performance as a function of SNR. From (\ref{eq:min_OH_ratio}), the pilot-overhead\footnote{As with Example 4, the number of occupied subcarriers is 3300 in the corresponding \ac{3GPP} \ac{NR} signaling~\cite{38_211}. Thus, 3300 is applied for $N$ in (\ref{eq:min_OH_ratio}), instead of 4096.} $\rho_{\rm oh}$, according to $N_{\rm p} \in \mathcal{P}_{\rm d}$, is 1.22\hspace{0.1em}\%, 1.34\hspace{0.1em}\%, 1.58\hspace{0.1em}\%, and 2.06\hspace{0.1em}\%, respectively. When $\beta = 5000$, the higher-order \ac{PN} approximation and its estimation lead to the better throughput performance although the more pilot-overhead is required. On the other hand, when $\beta = 500$, the estimation of three dominant \ac{PN} components, \ie, $N_{\rm p} = 3$, results in better throughput performance than the others, in the \ac{SNR} range more than $6\hspace{0.1em}{\rm dB}$. At the high \acp{SNR}, the throughput with $N_{\rm p} = 3 $ is around $5.2\hspace{0.1em}{\rm{Mbits/s}}$ higher than the one with $N_{\rm p} = 15$, while the $N_{\rm p} = 15$ case has around $24.4\hspace{0.1em}{\rm{Mbits/s}}$ higher throughput, as compared to the $N_{\rm p} = 3$. From these results, it is found that higher-order \ac{PN} estimation does not guarantee better throughput performance due to the increase of the pilot-overhead.
\begin{figure}[t!]
\centering
\includegraphics[width=3.42in]{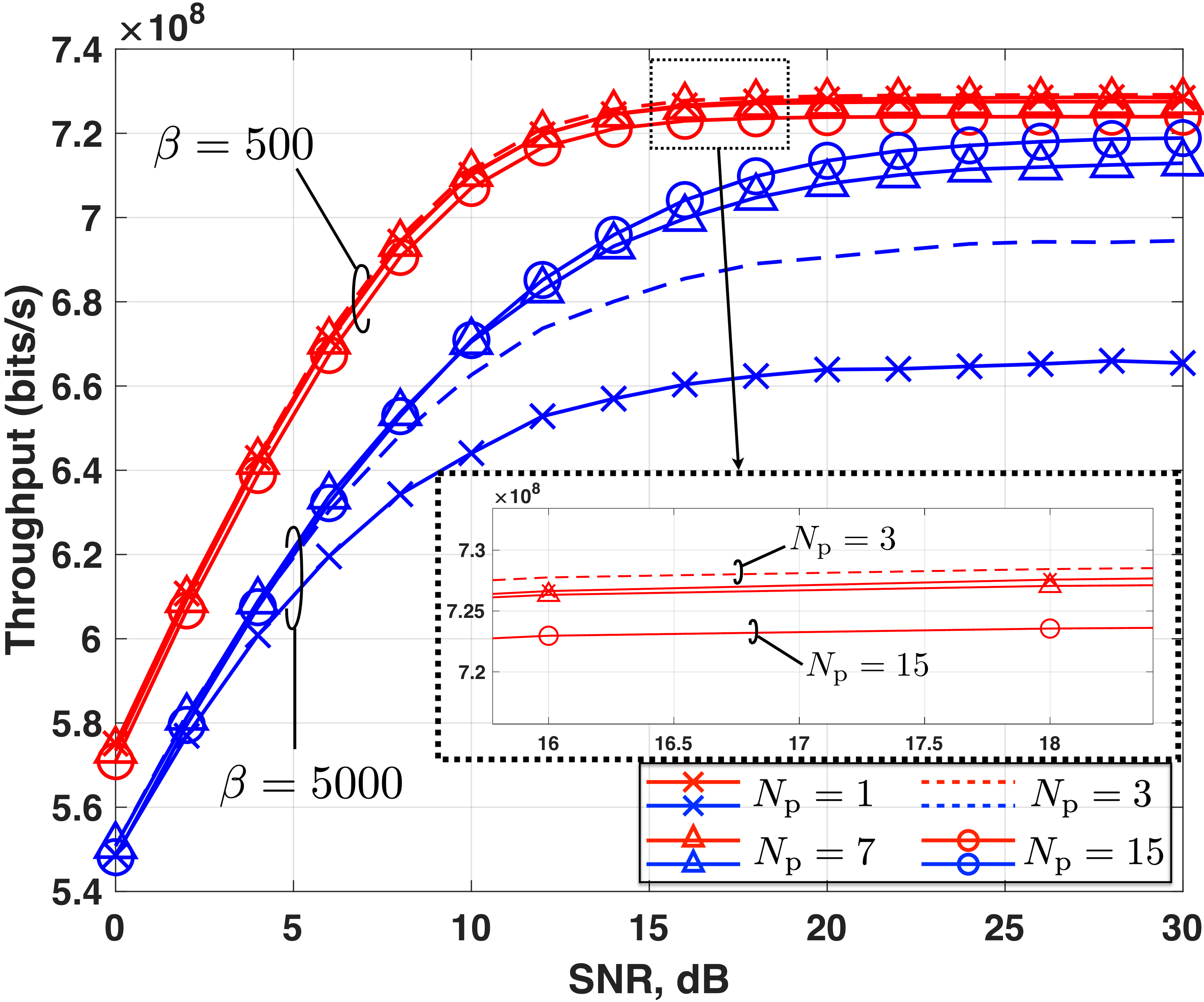}
\caption{Trade-off between throughput and pilot-overhead with $\mathcal{P}_{\rm d} = \lbrace 1,3,7,15\rbrace$ and $\beta \in \lbrace 500, 5000 \rbrace$.}
\label{fig:thp}
\vspace*{-0.25 cm}
\end{figure}

Although the pilot-overhead effort leads to the throughput improvement when $\beta = 5000$, the high-order approximation of \ac{PN} spectrum, \eg, $N_{\rm p} = 7$ or $15$, may not always be required. The computational complexity of \ac{PN}-affected-channel estimation is a function of $N_{\rm p}$, \ie, $\mathcal{O} \big( N_{\rm p}^{2} (N_{\rm p} + 1) \big)$ with \ac{LMMSE} estimator. In case where the throughput difference is marginal according to $N_{\rm p}$, the lowest $N_{\rm p}$ could be selected to reduce the complexity, if not for throughput-sensitive applications. For example, when $\beta = 5000$, the \ac{PN} spectrum approximation with $N_{\rm p} = 1$ could be considered for a \ac{SNR} of less than about $2\hspace{0.1em}{\rm dB}$, and $N_{\rm p} = 3$ for a \ac{SNR} of $2$\hspace{0.1em}-\hspace{0.1em}$5\hspace{0.1em}{\rm dB}$.

\section{Conclusion}
\label{sec:conclusion}
Practically suppressing the effect of \ac{PN} is a critical aspect of mmWave communication systems to realize its potential benefits. This paper has outlined a novel framework for \ac{PN} compensation on OFDM systems, which uses LS/LMMSE estimators and pilot-assisted transmission. Our main conclusion is that the large coherence bandwidth in mmWave bands and an approximation of the \ac{PN} spectrum enable low-complexity \ac{PN} compensation with a reasonable pilot-overhead, which leads to a very efficient solution for the severe \ac{PN} problem. Further, we have derived analytically tractable expressions for the NMSE performance of our proposed framework, and studied the trade-off between performance and pilot-overhead. These expressions and trade-off analysis offer an insight into an appropriate approximation of the \ac{PN} spectrum, according to the \ac{SNR} and \ac{PN} environments.  
%\newpage
%%
%%
%%

%
\addtocounter{equation}{4}
\begin{figure*}[b!]
\hrulefill\\
\begin{align}
\label{append:PHI_ICI_gamma}
{\mathbf{\Phi}_{\rm ici,{\gamma}}^{\rm p}} 
=
\begin{bmatrix}
P_{N-N_{\rm b}} & P_{N-N_{\rm b}-1} & P_{N-N_{\rm b}-2} & \cdots & P_{N_{\rm a}+1} & P_{N_{\rm a}}\\
P_{N-N_{\rm b}+1} & P_{N-N_{\rm b}}  & P_{N-N_{\rm b}-1} & \cdots & P_{N_{\rm a}+2} & P_{N_{\rm a}+1}\\
\vdots & \vdots & \vdots & \cdots & \vdots & \vdots\\
P_{N-N_{\rm b}+(N_{\rm p}-1)} & P_{N-N_{\rm b}+(N_{\rm p}-2)} & P_{N-N_{\rm b}+(N_{\rm p}-3)} & \cdots & P_{N_{\rm a}+N_{\rm p}} & P_{N_{\rm a}+(N_{\rm p}-1)}
\end{bmatrix}.
\end{align}
\end{figure*}
\addtocounter{equation}{-5}

\begin{appendices}
\section{Proof of Theorem~\ref{thm:optimal_pilot_design}}
\label{append_sec:optimal_PNDP}
From~(\ref{eq:y_pnac_np_3_2}), we have the generalized form of $\mathbf{y}_{\rm f}^{\rm p}$ with respect to $\gamma$-order approximation as follows:
\begin{align}
\label{eq:rx_signal_gamma_approx}
	\mathbf{y}_{\rm f}^{\rm p}
	= 
	{\mathbf{X}_{\rm f, \gamma}^{\rm p}}{\mathbf{f}_{\bar{\rm {p}}}}
	+ {\mathbf{w}}_{\gamma}^{\rm p},
\end{align}
where ${\mathbf{w}}^{\rm p}_{\gamma} = [W_{\gamma, 0}^{\rm p}, W_{\gamma, 1}^{\rm p}, \cdots, W_{\gamma, N_{\rm p}-1}^{\rm p}] \in \mathbb{C}^{N_{\rm p} \times 1}$ is the ICI by the $\gamma$-order-approximation error plus AWGN in $\mathbf{y}_{\rm f}^{\rm p}$. The element set in the $\mathbf{X}_{{\rm f}, \gamma}^{\rm p}$ is $\lbrace X_{k}^{\rm p} \rbrace_{k = 0}^{2N_{\rm p}-2}$, which means that $(2N_{\rm p}-1)$ PN-dedicated pilots are required to estimate $N_{\rm p}$ PN-affected-channel components. Regarding each observation in $\mathbf{y}_{\rm f}^{\rm p}$, the $N_{\rm p}$ PN-dedicated pilots are multiplied with the $\mathbf{f}_{\bar {\rm p}}$. The remaining $(N_{\rm p}-1)$-pilot, however, combines with higher-order \ac{PN} components than $\gamma$, resulting in being involved in $\mathbf{w}_{\gamma}^{\rm p}$. To meet two conditions for PN-dedicated-pilot pattern, which are the ICI minimization and ${\rm rank}(\mathbf{X}_{\rm f, \gamma}^{\rm p}) = N_{\rm p}$, we employ the fact that the diagonal term $X_{\rm 2\gamma}^{\rm p}$ in $\mathbf{X}_{\rm f, \gamma}^{\rm p}$ does not belong to $\mathbf{w}_{\gamma}^{\rm p}$ and can be used for making the full rank of $\mathbf{X}_{\rm f, \gamma}^{\rm p}$. Hence, a non-zero pilot symbol is allocated for $X_{\rm 2\gamma}^{\rm p}$ and zero-pilot for the remainder to minimize the ICI, leading to the optimal PN-dedicated pilot matrix~(\ref{eq:optimal_pilot_1}).

\section{LMMSE Estimator for PN-Affected Channel}
\label{append_sec:LMMSE_PNAC}
The LMMSE PN-affected-channel estimator is defined as
\begin{align}
\label{append:Q_LMMSE}
{\mathbf{Q}}	_{\mathsf{lmmse}} 
=
{\mathbf{R}}	_{\mathbf{f}\mathbf{y}}
{{\mathbf{R}}_{\mathbf{y}\mathbf{y}}}^{-1},	
\end{align}
where ${\mathbf{R}}_{\mathbf{f}\mathbf{y}}=\mathbb{E}\lbrace{\mathbf{f}_{\bar{\rm p}}}({\mathbf{y}_{\rm f}^{\rm p}})^{\rm H}\rbrace$ is the cross-covariance matrix between ${\mathbf{f}_{\bar{\rm p}}}$ and ${\mathbf{y}_{\rm f}^{\rm p}}$, ${\mathbf{R}}_{\mathbf{y}\mathbf{y}}=\mathbb{E}\lbrace{\mathbf{y}_{\rm f}^{\rm p}}({\mathbf{y}_{\rm f}^{\rm p}})^{\rm H}\rbrace$ the autocorrelation matrix of ${\mathbf{y}_{\rm f}^{\rm p}}$. Substituting~(\ref{eq:optimal_pilot_1}) in Theorem~\ref{thm:optimal_pilot_design} into~(\ref{eq:rx_signal_gamma_approx}) , we have 
\begin{align}
\begin{split}
\label{append:R_fy_PNAC}
{\mathbf{R}}_{\mathbf{f}\mathbf{y}}
&=
\mathbb{E}
\lbrace
{\mathbf{f}_{\bar{\rm p}}}
(
{\mathbf{X}_{\rm f, \gamma}^{\rm p}}
{\mathbf{f}_{\bar{\rm p}}}
+
{\mathbf{w}_{\gamma}^{\rm p}}
)^{\rm H}
\rbrace\\
&=
\mathbb{E}
\lbrace
{\mathbf{f}_{\bar{\rm p}}}
{\mathbf{f}_{\bar{\rm p}}}^{\rm H} 
\rbrace
+
\underbrace{
\mathbb{E}
\lbrace
{\mathbf{f}_{\bar{\rm p}}}
(
{\mathbf{E}_{\gamma}^{\rm p}}
{\mathbf{x}_{\rm f}} 
+ 
{\mathbf{z}_{\rm f}}
)^{\rm H}
\rbrace
}_{\mathbf{0}_{N_{\rm p} \times N_{\rm p}}} 
\\
&\stackrel{\text{(a)}}{=}
\underbrace{
\mathbb{E}
\lbrace 
\| \alpha \|_{2}^{2} 
\rbrace}_{1} 
\underbrace{
\mathbb{E} 
\lbrace
\bar{\mathbf{p}}_{{\rm f},\gamma}
(
\bar{\mathbf{p}}_{{\rm f},\gamma}
)^{\rm H}
\rbrace
}_{\mathbf{R}_{\mathbf{p}\mathbf{p}}^{\gamma}}
=
\mathbf{R}_{\mathbf{p}\mathbf{p}}^{\gamma}, 
\end{split}
\end{align}
%\vspace*{-.2cm}
%%
\begin{align}
\begin{split}
\label{append:R_yy_PNAC}
{\mathbf{R}}_{\mathbf{y}\mathbf{y}}
&=
\mathbb{E}
\lbrace
(
{\mathbf{f}_{\bar{\rm p}}}
+
{\mathbf{w}_{\gamma}^{\rm p}}
)
(
{\mathbf{f}_{\bar{\rm p}}}
+
{\mathbf{w}_{\gamma}^{\rm p}})^{\rm H}
\rbrace\\
%\\
%&=
%\mathbf{R}_{\mathbf{p}\mathbf{p}}^{\gamma} 
%+ 
%\mathbb{E}
%\lbrace
%(
%{\mathbf{E}_{\gamma}^{\rm p}}
%{\mathbf{x}_{\rm f}}
%+ 
%{\mathbf{z}_{\rm f}}
%)
%(
%{\mathbf{E}_{\gamma}^{\rm p}}
%{\mathbf{x}_{\rm f}} 
%+ 
%{\mathbf{z}_{\rm f}}
%)^{\rm H}
%\rbrace
&=
\mathbf{R}_{\mathbf{p}\mathbf{p}}^{\gamma}
+
\mathbb{E}
\lbrace
{\mathbf{E}_{\gamma}^{\rm p}}
{\mathbf{x}_{\rm f}}
{{\mathbf{x}_{\rm f}}^{\rm H}}
(
{\mathbf{E}_{\gamma}^{\rm p}}
)^{\rm H}
\rbrace
+ 
\underbrace{
\mathbb{E}
\lbrace
{\mathbf{z}_{\rm f}}
{\mathbf{z}_{\rm f}^{\rm H}}
\rbrace
}_{\sigma_{z}^{2}{\mathbf{I}_{N_{\rm p}}}}\\
&=
\mathbf{R}_{\mathbf{p}\mathbf{p}}^{\gamma}
+
\underbrace{
\mathbb{E}
\lbrace
\tilde{\mathbf{\Phi}}_{{\rm f}, \gamma}^{\rm p}
(
{\tilde{\mathbf{\Phi}}_{{\rm f}, \gamma}^{\rm p}
)^{\rm H}}
\rbrace 
}_{\mathbf{R}_{\mathbf{ici}}}
+
\sigma_{z}^{2}{\mathbf{I}_{N_{\rm p}}}\\
&=
\mathbf{R}_{\mathbf{p}\mathbf{p}}^{\gamma} 
+ 
\mathbf{R}_{\mathbf{ici}} 
+ 
(1/{\mathsf{SNR}})
{\mathbf{I}_{N_{\rm p}}}.
\end{split}
\end{align}
where ${\mathbf{E}}_{\gamma}^{\rm p} \in \mathbb{C}^{N_{\rm p} \times N}$ is the $\gamma$-order-approximation-error matrix in $\mathbf{y}_{\rm f}^{\rm p}$, $\tilde{\mathbf{\Phi}}_{{\rm f}, \gamma}^{\rm p} = \lbrack \mathbf{\Phi}_{{\rm ici},\gamma}^{\rm p} \;\; \mathbf{0}_{N_{\rm p} \times (2N_{\rm p}-1)} \rbrack \in \mathbb{C}^{N_{\rm p} \times N}$. The $\mathbf{\Phi}_{{\rm ici}, \gamma}^{\rm p} \in \mathbb{C}^{N_{\rm p} \times (N-2N{\rm p}+1)}$ is given in (\ref{append:PHI_ICI_gamma}) at the bottom of this page, where $N_{\rm a} \triangleq \frac{N_{\rm p}+1}{2}$ and $N_{\rm b} \triangleq \frac{3N_{\rm p}-1}{2}$.
\addtocounter{equation}{8}
\begin{figure*}[b!]
\hrulefill\\
\begin{align}
\begin{split}
\label{append:R_hy_IFC}
\mathbf{R}_{\tilde{\mathbf{h}}\tilde{\mathbf{y}}}
&=
%\mathbb{E}
%\lbrace
%\tilde{\mathbf{h}}_{\rm If}
%\tilde{\mathbf{y}}_{\rm If}^{\rm H}
%\rbrace
%=
\mathbb{E}
\lbrace
\tilde{\mathbf{h}}_{\rm If}
(
\tilde{\mathbf{H}}_{\rm If}
\mathbf{x}_{\rm f}^{\rm c}
+
\tilde{\mathbf{\Upsilon}}_{\rm off}	
\mathbf{H}_{\rm If}
\mathbf{x}_{\rm f}
+
\tilde{\mathbf{z}}_{\rm f}
)^{\rm H}
\rbrace
\\
&=
\mathbb{E}
\lbrace
\tilde{\mathbf{h}}_{\rm If}
\tilde{\mathbf{h}}_{\rm If}^{\rm H}
\rbrace
(
\mathbf{X}_{\rm f}^{\rm c}
)^{\rm H}
+
\underbrace
{
\mathbb{E}
\lbrace
\tilde{\mathbf{h}}_{\rm If}
(
\tilde{\mathbf{\Upsilon}}_{\rm off}
\mathbf{H}_{\rm If}
\mathbf{x}_{\rm f}
)^{\rm H}
\rbrace
}_{\mathbf{0}_{N_{\rm c} \times N_{\rm c}}}
+
\underbrace
{
\mathbb{E}
\lbrace
\tilde{\mathbf{h}}_{\rm If}
\tilde{\mathbf{z}}_{\rm f}^{\rm H}
\rbrace
}_{\mathbf{0}_{N_{\rm c} \times N_{\rm c}}}
\\
&=
\bar{G}
(
1
-
\sigma_{\varepsilon}^2
)
(
\mathbf{X}_{\rm f}^{\rm c}
)^{\rm H}
\end{split}
\end{align}
\addtocounter{equation}{-9}
\addtocounter{equation}{9}
\begin{align}
\begin{split}
\label{append:R_yy_IFC}
\mathbf{R}_{\tilde{\mathbf{y}}\tilde{\mathbf{y}}}
%&=
%\mathbb{E}
%\lbrace
%\tilde{\mathbf{y}}_{\rm If}
%\tilde{\mathbf{y}}_{\rm If}^{\rm H}
%\rbrace
%\\
&=
\mathbb{E}
\Big\{
\big(
\tilde{\mathbf{D}}_{N}
\lbrace
\mathbf{I}_{N}
-
\alpha
\mathbf{G}_{\rm p}
\mathbf{E}_{\rm t, eff}
\rbrace
\mathbf{D}_{N}^{\rm H}
\mathbf{H}_{\rm If}
\mathbf{x}_{\rm f}
+
\tilde{\mathbf{z}}_{\rm f}
\big)
\big(
\tilde{\mathbf{D}}_{N}
\lbrace
\mathbf{I}_{N}
-
\alpha
\mathbf{G}_{\rm p}
\mathbf{E}_{\rm t, eff}
\rbrace
\mathbf{D}_{N}^{\rm H}
\mathbf{H}_{\rm If}
\mathbf{x}_{\rm f}
+
\tilde{\mathbf{z}}_{\rm f}
\big)^{\rm H}
\Big\}
\\
&=
\frac{1}{\| \alpha \|_{2}^{2}}
\mathbb{E}
\big\{
\tilde{\mathbf{D}}_{N}
\lbrace
\mathbf{I}_{N}
-
\alpha
\mathbf{G}_{\rm p}
\mathbf{E}_{\rm t, eff}
\rbrace
\lbrace
\mathbf{I}_{N}
-
\alpha
\mathbf{G}_{\rm p}
\mathbf{E}_{\rm t, eff}
\rbrace^{\rm H}
\tilde{\mathbf{D}}_{N}^{\rm H}
\big\}
+
\mathbb{E}
\lbrace
\tilde{\mathbf{z}}_{\rm f}
\tilde{\mathbf{z}}_{\rm f}^{\rm H}
\rbrace\\
&=
\frac{1}{\| \alpha \|_{2}^{2}}
\mathbb{E}
\big\{
\tilde{\mathbf{D}}_{N}
\lbrace
\alpha
\mathbf{G}_{\rm p}
\mathbf{\Phi}_{\rm t}
\rbrace
\lbrace
\alpha
\mathbf{G}_{\rm p}
\mathbf{\Phi}_{\rm t}
\rbrace^{\rm H}
\tilde{\mathbf{D}}_{N}^{\rm H}
\big\}
+
\tilde{\mathbf{D}}_{N}
\mathbf{G}_{\rm p}
\mathbf{D}_{N}^{\rm H}
\underbrace{
\mathbb{E}
\lbrace
\mathbf{z}_{\rm f}
\mathbf{z}_{\rm f}^{\rm H}
\rbrace
}_{\sigma_{z}^{2} \mathbf{I}_{N}}
\mathbf{D}_{N}
\mathbf{G}_{\rm p}^{\rm H}
\tilde{\mathbf{D}}_{N}^{\rm H}
\\
&=
\tilde{\mathbf{D}}_{N}
\mathbf{G}_{\rm p}
\mathbf{G}_{\rm p}^{\rm H}
\tilde{\mathbf{D}}_{N}^{\rm H}
+
\sigma_{z}^{2}
\tilde{\mathbf{D}}_{N}
\mathbf{G}_{\rm p}
\mathbf{G}_{\rm p}^{\rm H}
\tilde{\mathbf{D}}_{N}^{\rm H}
=
(1+\sigma_{z}^{2})
\tilde{\mathbf{D}}_{N}
\mathbf{G}_{\rm p}
\mathbf{G}_{\rm p}
\tilde{\mathbf{D}}_{N}^{\rm H}
\\
&=
\bar{G}
\big\{
1
+
(1/\mathsf{SNR})
\big\}
\mathbf{I}_{N_{\rm c}}
\end{split}
\end{align}
\end{figure*}
\addtocounter{equation}{-10}

\section{Proof of Theorem~\ref{thm:deconv_out_vec}}
\label{append_sec:Thm_deconv_out_vec}
The equivalent time-domain representation of $\hat{\mathbf{f}}_{\rm p}$ and $\mathbf{e}_{\rm f, app}$ can be described as follows:
\addtocounter{equation}{1}
\begin{align}
\label{eq:PNAC_est_time_freq}
\hat{\mathbf{f}}_{\rm p}
=
\alpha
(
\mathbf{p}_{\rm f, \gamma}
+
\mathbf{e}_{\rm f, est}
)
\longleftrightarrow
\mathbf{g}_{\rm p}
=
\alpha
(
\mathbf{p}_{\rm t, \gamma}
+
\mathbf{e}_{\rm t, est}
),
\end{align}
\begin{align}
\label{eq:e_app_time_freq}
\mathbf{e}_{\rm f, app} 
= 
\mathbf{p}_{\rm f} 
-
\mathbf{p}_{\rm f, \gamma}
\longleftrightarrow
\mathbf{e}_{\rm t, app}
=
\mathbf{p}_{\rm t}
-
\mathbf{p}_{\rm t, \gamma}. 	
\end{align}
where $\mathbf{p}_{\rm t} \triangleq \sqrt{N} \mathbf{D}_{N}^{\rm H} \mathbf{p}_{\rm f} = \lbrack e^{j\phi_{0}}, e^{j\phi_{1}}, \cdots, e^{j\phi_{N-1}} \rbrack^{\rm T}$, $\mathbf{p}_{\rm t, \gamma} \triangleq \sqrt{N}\mathbf{D}_{N}^{\rm H}\mathbf{p}_{\rm f, \gamma} = \lbrack p_{0}, p_{1}, \cdots, p_{N-1} \rbrack^{\rm T}$, $\mathbf{e}_{\rm t, est} \triangleq \sqrt{N}{\mathbf{D}}_{N}^{\rm H}\mathbf{e}_{\rm f, est} =  \lbrack E_{\rm t, 0}^{\rm est}, E_{\rm t, 1}^{\rm est}, \cdots, E_{{\rm t}, N-1}^{\rm est} \rbrack^{\rm T} \in \mathbb{C}^{N \times 1}$. The deconvolution output-vector of $\mathbf{y}_{\rm If}$ and $\hat{\mathbf{f}}_{\rm p}$ is
\begin{align}
\label{eq:deconv_out_2}
\begin{split}
\mathbf{y}_{\rm If}
&=
%\mathbf{y}_{\rm f}
%\circledast^{-1}
%\hat{\mathbf{f}}_{\rm p}
%=
\mathbf{D}_{N}
\underbrace
{
\Big\{
\big\{ 
{\rm diag} 
\big\{
g_{{\rm p}, n}
\big\}_{n = 0}^{N-1}
\big\}^{-1}
\Big\}
}_{\mathbf{G}_{\rm p}} 
\mathbf{D}_{N}^{\rm H}\mathbf{y}_{\rm f}\\
&=
\mathbf{D}_{N} 
\Big\{
\frac{1}{\alpha} 
\big\{ 
\underbrace{
{\rm diag} 
\lbrace 
p_{n} 
\rbrace_{n = 0}^{N-1}
}_{\triangleq \mathbf{\Phi}_{{\rm t}, \gamma}}
+ 
\underbrace{
{\rm diag} 
\lbrace 
E_{{\rm t}, n}^{\rm est}
\rbrace_{n = 0}^{N-1}
}_{\triangleq \mathbf{E}_{\rm t, est}} \big\}^{-1}\Big\} 
\mathbf{D}_{N}^{\rm H}
\mathbf{y}_{\rm f}\\
&\stackrel{\text{(a)}}{=}
\mathbf{D}_{N} 
\Big\{ 
\frac{1}{\alpha}
\big\{
\mathbf{\Phi}_{{\rm t}, \gamma}^{-1} 
- 
\mathbf{\Phi}_{{\rm t}, \gamma}^{-1}  
\big\{ 
\mathbf{\Phi}_{{\rm t}, \gamma}^{-1} 
+ 
\mathbf{E}_{\rm t, est}^{-1}
\big\}^{-1} 
\mathbf{\Phi}_{{\rm t}, \gamma}^{-1}
\big\}
\Big\} 
\mathbf{D}_{N}^{\rm H}
\mathbf{y}_{\rm f}\\
%&=
%\frac{1} 
%{\alpha}
%\mathbf{D}_{N} 
%\Big\{ 
%\mathbf{I}_{N} 
%- 
%\mathbf{\Phi}_{{\rm t}, \gamma}^{-1}  
%\big\{ 
%\mathbf{\Phi}_{{\rm t}, \gamma}^{-1} 
%+ 
%\mathbf{E}_{\rm t, est}^{-1}
%\big\}^{-1} 
%\Big\} 
%\mathbf{\Phi}_{{\rm t}, \gamma}^{-1} 
%\mathbf{D}_{N}^{\rm H}
%\mathbf{y}_{\rm f}\\
&\stackrel{\text{(b)}}{=} 
\frac{1}
{\alpha} 
\mathbf{D}_{N} 
\Big\{ 
\mathbf{I}_{N} - 
\big\{ 
\mathbf{\Phi}_{{\rm t}, \gamma} 
+ 
\mathbf{E}_{\rm t, est}
\big\}^{-1} 
\mathbf{E}_{\rm t, est}
\Big\}
\mathbf{\Phi}_{{\rm t}, \gamma}^{-1} 
\mathbf{D}_{N}^{\rm H}
\mathbf{y}_{\rm f}\\
%&=
%\frac{1}
%{\alpha} 
%\mathbf{D}_{N} 
%\Big\{ 
%\mathbf{I}_{N} - 
%\alpha  
%\mathbf{G}_{\rm p}
%\mathbf{E}_{\rm t, est}
%\Big\} 
%\underbrace{
%\mathbf{\Phi}_{{\rm t}, \gamma}^{-1} 
%\mathbf{D}_{N}^{\rm H}
%\mathbf{y}_{\rm f}
%}_{\triangleq \mathbf{y}_{\rm d}}\\
&=
\frac{1}{\alpha}
\mathbf{D}_{N}
\big\{
\mathbf{I}_{N}
-
\alpha 
\mathbf{G}_{\rm p}
\mathbf{E}_{\rm t, est}
\big\}
\mathbf{y}_{\rm d}\\
&=
\underbrace{
\frac{1}{\alpha}
\mathbf{H}_{\rm f}
}_{\triangleq \mathbf{H}_{\rm If}}
\mathbf{x}_{\rm f}
+
\bar{\mathbf{\Upsilon}}
\mathbf{H}_{\rm If}
\mathbf{x}_{\rm f}
+
\bar{\mathbf{z}}_{\rm f}
=
\lbrace
\mathbf{I}
+
\bar{\mathbf{\Upsilon}}
\rbrace
\mathbf{H}_{\rm If}\mathbf{x}_{\rm f}
+
\bar{\mathbf{z}}_{\rm f},
\end{split}
\end{align}
where (a) and (b) follow from the matrix identity $(\mathbf{A} + \mathbf{B})^{-1} = \mathbf{A}^{-1} - \mathbf{A}^{-1}(\mathbf{A}^{-1} + \mathbf{B}^{-1})^{-1} \mathbf{A}^{-1}$ and $(\mathbf{A}^{-1} + \mathbf{B}^{-1})^{-1} = \mathbf{A}(\mathbf{A} + \mathbf{B})^{-1} \mathbf{B}$, respectively;  
%% y_d
\begin{align}
\label{eq:y_d}
\begin{split}
\mathbf{y}_{\rm d}
&=
\mathbf{\Phi}_{{\rm t}, \gamma}^{-1}
\mathbf{D}_{N}^{\rm H}\mathbf{y}_{\rm f}
%=
%\mathbf{\Phi}_{{\rm t}, \gamma}^{-1}
%\mathbf{D}_{N}^{\rm H} 
%\Big\{
%\mathbf{F}\mathbf{x}_{\rm f} 
%+ 
%\mathbf{z}_{\rm f}
%\Big\}\\
=
\mathbf{\Phi}_{{\rm t}, \gamma}^{-1}
\mathbf{D}_{N}^{\rm H} 
\Big\{ 
\mathbf{F}_{\gamma}\mathbf{x}_{\rm f} 
+ 
\mathbf{E}_{\gamma}\mathbf{x}_{\rm f} 
+ 
\mathbf{z}_{\rm f}
\Big\}\\
&=
\mathbf{\Phi}_{{\rm t}, \gamma}^{-1}
\mathbf{D}_{N}^{\rm H} 
\Big\{ 
\mathbf{\Phi}_{{\rm f}, \gamma}
\mathbf{H}_{\rm f}\mathbf{x}_{\rm f} 
+ 
\underbrace{
\lbrace
\mathbf{\Phi}_{\rm f}
-
\mathbf{\Phi}_{\rm f, \gamma}
\rbrace
}_{\triangleq \tilde{\mathbf{\Phi}}_{\rm f, \gamma}}
\mathbf{H}_{\rm f}\mathbf{x}_{\rm f} 
+ 
\mathbf{z}_{\rm f}
\Big\}\\
%&=
%\mathbf{\Phi}_{{\rm t}, \gamma}^{-1}
%\mathbf{D}_{N}^{\rm H} 
%\Big\{ 
%\mathbf{\Phi}_{{\rm f}, \gamma}\mathbf{H}_{\rm f}\mathbf{x}_{\rm f} 
%+ 
%\tilde{\mathbf{\Phi}}_{{\rm f}, \gamma}
%\mathbf{H}_{\rm f}\mathbf{x}_{\rm f} 
%+ 
%\mathbf{z}_{\rm f}
%\Big\}\\
&\stackrel{\text{(c)}}{=}
\mathbf{\Phi}_{{\rm t}, \gamma}^{-1}
\mathbf{D}_{N}^{\rm H}
\Big\{ 
\mathbf{D}_{N}
\mathbf{\Lambda}_{\Phi}
\mathbf{D}_{N}^{\rm H}
\mathbf{H}_{\rm f}\mathbf{x}_{\rm f} 
+ 
\mathbf{D}_{N}
\tilde{\mathbf{\Lambda}}_{\Phi}
\mathbf{D}_{N}^{\rm H}
\mathbf{H}_{\rm f}\mathbf{x}_{\rm f} 
+ 
\mathbf{z}_{\rm f}
\Big\}\\
&=
\underbrace{
\mathbf{\Phi}_{{\rm t}, \gamma}^{-1} \mathbf{\Lambda}_{\Phi}
}_{\mathbf{I}_{N}}
\mathbf{D}_{N}^{\rm H}
\mathbf{H}_{\rm f}\mathbf{x}_{\rm f} 
+
\mathbf{\Phi}_{{\rm t}, \gamma}^{-1} \tilde{\mathbf{\Lambda}}_{\Phi}
\mathbf{D}_{N}^{\rm H}
\mathbf{H}_{\rm f}\mathbf{x}_{\rm f}
+
\mathbf{\Phi}_{{\rm t}, \gamma}^{-1}
\mathbf{D}_{N}^{\rm H}\mathbf{z}_{\rm f}\\
&=
\mathbf{D}_{N}^{\rm H}
\mathbf{H}_{\rm f}\mathbf{x}_{\rm f} 
+
\mathbf{\Phi}_{{\rm t}, \gamma}^{-1} \tilde{\mathbf{\Lambda}}_{\Phi}
\mathbf{D}_{N}^{\rm H}
\mathbf{H}_{\rm f}\mathbf{x}_{\rm f}
+
\mathbf{\Phi}_{{\rm t}, \gamma}^{-1}
\mathbf{D}_{N}^{\rm H}\mathbf{z}_{\rm f},
\end{split}	
\end{align}
%%
%% Upsilon
\begin{align}
\label{eq:Upsilon_bar}
\begin{split}
\bar{\mathbf{\Upsilon}}
&=
\mathbf{D}_{N}
\Big\{
\big\{
\mathbf{I}_{N}
-
\alpha 
\mathbf{G}_{\rm p}
\mathbf{E}_{\rm t, est}
\big\}
\Big\{
\mathbf{\Phi}_{{\rm t}, \gamma}^{-1} \tilde{\mathbf{\Lambda}}_{\Phi}
\Big\}
-
\alpha 
\mathbf{G}_{\rm p}
\mathbf{E}_{\rm t, est}
\Big\}
\mathbf{D}_{N}^{\rm H}
\\
&\stackrel{\text{(d)}}{=}
\mathbf{D}_{N}
\Big\{
\big\{
\mathbf{I}_{N}
-
\alpha 
\mathbf{G}_{\rm p}
\mathbf{E}_{\rm t, est}
\big\}
\Big\{ 
\mathbf{\Phi}_{{\rm t}, \gamma}^{-1} 
\mathbf{E}_{\rm t, app}
\Big\}
-
\alpha 
\mathbf{G}_{\rm p}
\mathbf{E}_{\rm t, est}
\Big\}
\mathbf{D}_{N}^{\rm H}
\\
&=
\mathbf{D}_{N}
\Big\{
\underbrace{
\big\{
\mathbf{I}_{N}
-
\alpha 
\mathbf{G}_{\rm p}
\mathbf{E}_{\rm t, est}
\big\}
\mathbf{\Phi}_{{\rm t}, \gamma}^{-1} 
}_{\alpha\mathbf{G}_{\rm p}}
\mathbf{E}_{\rm t, app}
-
\alpha 
\mathbf{G}_{\rm p}
\mathbf{E}_{\rm t, est}
\Big\}
\mathbf{D}_{N}^{\rm H}
\\
%&=	
%\mathbf{D}_{N}
%\Big\{
%\alpha
%\mathbf{G}_{\rm p} 
%\mathbf{E}_{\rm t, app}
%-
%\alpha 
%\mathbf{G}_{\rm p}
%\mathbf{E}_{\rm t, est}
%\Big\}
%\mathbf{D}_{N}^{\rm H}
%\\
&=
\alpha	
\mathbf{D}_{N}
\mathbf{G}_{\rm p} 
\underbrace{
\Big\{
\mathbf{E}_{\rm t, app}
- 
\mathbf{E}_{\rm t, est}
\Big\}
}_{- \mathbf{E}_{\rm t, eff}}
\mathbf{D}_{N}^{\rm H}
\\
&=
-
\alpha
\mathbf{D}_{N}
\mathbf{G}_{\rm p}
\mathbf{E}_{\rm t, eff}
\mathbf{D}_{N}^{\rm H}
=
-
\mathbf{\Upsilon}.
\end{split}
\end{align}
In~(\ref{eq:y_d}) and (\ref{eq:Upsilon_bar}), (c) and (d) follow from Lemma 2 ($\mathbf{\Lambda}_{\Phi}= \mathbf{\Phi}_{{\rm t}, \gamma}$, $\tilde{\mathbf{\Lambda}}_{\Phi} = \mathbf{E}_{\rm t, app}$); $\mathbf{E}_{\rm t, app}$ is the diagonal matrix with entries from $\mathbf{e}_{\rm t, app}$ on its main diagonal.
\begin{align}
\label{eq:z_f_bar}
\begin{split}
\bar{\mathbf{z}}_{\rm f}
&=
\frac{1}{\alpha}
\mathbf{D}_{N}
\Big\{
\lbrace
\mathbf{I}_{N}
-
\alpha
\mathbf{G}_{\rm p}
\mathbf{E}_{\rm t, est}
\rbrace
\mathbf{\Phi}_{{\rm t}, \gamma}^{-1}
\Big\}
\mathbf{D}_{N}^{\rm H}\mathbf{z}_{\rm f}\\
&=
\frac{1}{\alpha}
\mathbf{D}_{N}
\Big\{
\alpha
\mathbf{G}_{\rm p}
\Big\}
\mathbf{D}_{N}^{\rm H}
\mathbf{z}_{\rm f}\\
&=
\mathbf{D}_{N}
\mathbf{G}_{\rm p}
\mathbf{D}_{N}^{\rm H}
\mathbf{z}_{\rm f}.\\	
\end{split}
\end{align}

\section{LMMSE Estimator for ICI-Free Channel}
\label{append_sec:LMMSE_IFC}
The LMMSE estimator for ICI-free channel vector $\tilde{\mathbf{h}}_{\rm If}$ is defined as
\begin{align}
\label{append:V_LMMSE}
{\mathbf{V}}	_{\mathsf{lmmse}} 
=
{\mathbf{R}}	_{\tilde{\mathbf{h}}\tilde{\mathbf{y}}}
{\mathbf{R}}_{\tilde{\mathbf{y}}\tilde{\mathbf{y}}}^{-1},	
\end{align}
where ${\mathbf{R}}_{\tilde{\mathbf{h}}\tilde{\mathbf{y}}} = \mathbb{E}\lbrace \tilde{\mathbf{h}}_{\rm If}\tilde{\mathbf{y}}_{\rm If}^{\rm H}\rbrace$ is the cross-covariance matrix between $\tilde{\mathbf{h}}_{\rm If}$ and $\tilde{\mathbf{y}}_{\rm If}$; ${\mathbf{R}}_{\tilde{\mathbf{y}}\tilde{\mathbf{y}}} = \mathbb{E}\lbrace \tilde{\mathbf{y}}_{\rm If}\tilde{\mathbf{y}}_{\rm If}^{\rm H}\rbrace$ is the autocorrelation matrix of $\tilde{\mathbf{y}}_{\rm If}$. Based on (\ref{eq:y_tilde}), the ${\mathbf{R}}_{\tilde{\mathbf{h}}\tilde{\mathbf{y}}}$ and ${\mathbf{R}}_{\tilde{\mathbf{y}}\tilde{\mathbf{y}}}$ are represented as (\ref{append:R_hy_IFC}) and (\ref{append:R_yy_IFC}), respectively, in the bottom of this page, where $\bar{G} \triangleq \frac{1}{N} \sum_{n=0}^{N-1} \lbrace \| 1/g_{{\rm p}, n} \|_{2}^{2} \rbrace$ is the mean of absolute-squared diagonal coefficients in $\mathbf{G}_{\rm p}$.

\section{Autocorrelation Coefficients of $\mathbf{R}_{\mathbf{pp}}^{\gamma}$ and $\mathbf{R}_{\mathbf{ici}}^{\gamma}$}
\label{append_sec:R_pp_and_R_ici}
The autocorrelation coefficient $R_{k,\ell}$ in $\mathbf{R}_{\mathbf{pp}}$ is~\cite{den07}
\addtocounter{equation}{2}
\begin{align}
\label{append:R_pp_entry}
\begin{split}
	R_{k,\ell}
	&=
	\mathbb{E} \lbrace P_{k}P_{\ell}^{*} \rbrace\\ 
	&=
	{\frac{1}{N^2}} \mathbb{E} \Bigg\{{\sum_{m=0}^{N-1} \sum_{n=0}^{N-1}} {e^{j (\phi_{m}-\phi_{n})}}{e^{-j{\frac{2 \pi}{N}}(mk-n\ell)}} \Bigg\}\\
	&=
	{\frac{1}{N^2}} {\sum_{m=0}^{N-1} \sum_{n=0}^{N-1}} \mathbb{E} \lbrace{e^{j \Delta \phi_{m,n}}} \rbrace {e^{-j{\frac{2 \pi}{N}}(mk-n\ell)}}\\ 
	&\stackrel{\text{(a)}}{=}
	{\frac{1}{N^2}} {\sum_{m=0}^{N-1} \sum_{n=0}^{N-1}} \underbrace{\lbrace{e^{-\pi \beta |m-n| T_{\rm s}}} \rbrace}_{\triangleq  \psi_{m,n}} {e^{-j{\frac{2 \pi}{N}}(mk-n\ell)}}  
\end{split}
\end{align}
where (a) is determined using the moment generating function of $\Delta \phi_{m,n}$.  The autocorrelation coefficient $R_{k,\ell}^{\rm{ici}}$ of $\mathbf{R}_{\mathbf{ici}}^{\gamma}$ is
\begin{align}
\label{append:R_ici_entry}
\begin{split}
R_{k,\ell}^{\rm ici} = {\sum_{i = N_{\rm a} + k}^{N-N_{\rm b} + k}} {R_{i, i+(\ell-k)}}
\end{split}
\end{align} 
\end{appendices}

%\section*{Acknowledgment}

\ifCLASSOPTIONcaptionsoff
  \newpage
\fi

\renewcommand{\baselinestretch}{1.0}
\bibliographystyle{IEEEtran}
\bibliography{reference_21.bib} % file name
\input{acronym_21.acro}

\end{document}